\documentclass[sigconf]{acmart}
\usepackage{amsfonts,amsmath,amsthm,amscd,dsfont,mathrsfs,mathtools}

\usepackage{graphicx}
\usepackage{appendix}
\usepackage{float}	
\usepackage{xcolor}
\usepackage{amsthm}
\usepackage{geometry,enumitem}
\usepackage[subtle]{savetrees}

\usepackage{comment}
\usepackage{graphicx,subcaption}
\usepackage{varwidth}
\usepackage{url}
\usepackage{xcolor}

\setcounter{secnumdepth}{3} 
\setcounter{tocdepth}{4}

\newtheorem{thm}{Theorem}[section]
\newtheorem{defn}[thm]{Definition}

\newtheorem{prop}{Proposition}
\newtheorem{lem}[thm]{Lemma}

\newtheorem{nota}[thm]{Notation}

\newcommand{\bpf}{\begin{proof}}
\newcommand{\epf}{\end{proof}}

\newcommand{\red}[1]{\textcolor{red}{#1}}

\newcommand{\beqa}{\begin{eqnarray}}

\newcommand{\eeqa}{\end{eqnarray}}

\newcommand{\beqan}{\begin{eqnarray*}}

\newcommand{\eeqan}{\end{eqnarray*}}

\newcommand{\beq}{\begin{equation}}

\newcommand{\eeq}{\end{equation}}

\newcommand{\prob}{{\mathbb {P}}}

\newcommand{\F}{{\mathcal F}}

\newcommand{\singlespacing}{\let\CS=\@currsize\renewcommand{\baselinestretch}{0.95}\tiny\CS}

\newcommand{\oneandahalfspacing}{\let\CS=\@currsize\renewcommand{\baselinestretch}{1.25}\tiny\CS}

\newcommand{\doublespacing}{\let\CS=\@currsize\renewcommand{\baselinestretch}{1.39}\tiny\CS}





\newcommand{\be}{\begin{equation}}

\newcommand{\ee}{\end{equation}}

\newcommand{\bc}{\begin{center}}

\newcommand{\ec}{\end{center}}

\newcommand{\bfl}{\begin{flushleft}}

\newcommand{\efl}{\end{flushleft}}

\newcommand{\scheme}{{\sf Prism} }
\newcommand{\schemenosp}{{\sf Prism}\ignorespaces}

\newcommand{\fruitchains}{{\sf FruitChains } }
\newcommand{\fruitchainsnosp}{{\sf FruitChains}\ignorespaces}

\newcommand{\bitcoin}{{\sf Bitcoin} }
\newcommand{\bitcoinnosp}{{\sf Bitcoin}\ignorespaces}

\newcommand{\ohie}{{\sf OHIE} }
\newcommand{\ohienosp}{{\sf OHIE}\ignorespaces}

\makeatletter
\renewcommand\subsubsection{\@startsection{subsubsection}{3}{\z@}%
                       {-18\p@ \@plus -4\p@ \@minus -4\p@}%
                       {4\p@ \@plus 2\p@ \@minus 2\p@}%
                       {\normalfont\normalsize\bfseries\boldmath
                        \rightskip=\z@ \@plus 8em\pretolerance=10000 }}

\newcommand{\Z}{{\mathcal{Z}}}

\renewcommand{\H}{{\mathcal{H}}}
\renewcommand{\C}{{\mathcal{C}}}
\newcommand{\A}{{\mathcal{A}}}

\newcommand{\rmax}{r_{\rm max}}

\makeatother

\renewcommand{\epsilon}{\varepsilon}

\usepackage{algorithm}
\usepackage[noend]{algpseudocode}
\usepackage{xcolor}
\definecolor{azure}{rgb}{0.54, 0.17, 0.89}
\definecolor{newcode}{RGB}{201,28,28}
\newcommand{\colorcomment}[1]{\Comment{ {\color{azure} #1}} }
\newcommand{\av}[1]{$#1$}
\newcommand{\maincolorcomment}[1]{{\color{azure}// #1 } }

\usepackage{enumitem,kantlipsum}

\pagestyle{plain}
\begin{document}

\title{Securing  Parallel-chain Protocols under Variable Mining Power}

%


\author{Xuechao Wang}
\email{xuechao2@illinois.edu}
\affiliation{%
  \institution{University of Illinois Urbana-Champaign}
}
\author{Viswa Virinchi Muppirala}
\email{virinchi@uw.edu}
\affiliation{%
  \institution{University of Washington at Seattle}
}
\author{Lei Yang}
\email{leiy@csail.mit.edu}
\affiliation{%
  \institution{MIT CSAIL}
}
\author{Sreeram Kannan}
\email{ksreeram@uw.edu}
\affiliation{%
  \institution{University of Washington at Seattle}
}
\author{Pramod Viswanath}
\email{pramodv@illinois.edu}
\affiliation{%
  \institution{University of Illinois Urbana-Champaign}
}
\thanks{Correspondence can be sent to ksreeram@uw.edu.}


\begin{abstract}
Several emerging proof-of-work (PoW) blockchain protocols rely on a ``parallel-chain'' architecture for scaling, where instead of a single chain, multiple chains are run in parallel and aggregated. A key requirement of practical PoW blockchains is to adapt to mining power variations over time (\bitcoinnosp's total mining power has increased by a $10^{14}$ factor over the decade). In this paper, we consider the design of provably secure parallel-chain protocols which can adapt to such mining power variations. 

The \bitcoin difficulty adjustment rule adjusts the difficulty target of block mining periodically to get a constant mean inter-block time. While superficially simple, the rule has proved itself to be  sophisticated and successfully secure, both in practice and in theory \cite{backbone,full2020}.   We show that natural adaptations of the \bitcoin adjustment rule to the parallel-chain case  open the door  to subtle, but catastrophic  safety and liveness breaches. We uncover a   meta-design principle that allow us to design variable mining difficulty protocols for three popular PoW blockchain proposals  (\scheme \cite{bagaria2019prism}, \ohie \cite{yu2020ohie}, Fruitchains \cite{pass2017fruitchains}) inside a common rubric. 

The principle has three components: (M1) a pivot chain, based on which blocks in all chains choose difficulty, (M2) a monotonicity condition for referencing pivot chain blocks and (M3) translating additional protocol aspects from using levels (depth) to using ``difficulty levels''. We show that protocols employing a subset of these principles may have catastrophic failures. The security of the designs is also proved using a common rubric -- the key technical challenge involves analyzing the interaction between the pivot chain and the other chains, as well as bounding the sudden changes in difficulty target experienced in non-pivot chains. We empirically investigate the responsivity of the new mining difficulty rule via simulations based on historical \bitcoin  data, and find that the protocol very effectively controls the forking rate across all the chains. 

\end{abstract}
{\def\addcontentsline#1#2#3{}\maketitle}

\section{Introduction}
\label{sec:intro}

{\bf Scaling problem.} Built on the pioneering work of Nakamoto, \bitcoin~\cite{nakamoto2008bitcoin} is a permissionless blockchain operating on proof-of-work based on the Nakamoto protocol. The Nakamoto longest-chain protocol was proven to be secure  as long as the adversary controlled less than $50\%$ of the mining power in the breakthrough work~\cite{backbone}. Recent works~\cite{eyal2016bitcoin,sompolinsky2015secure,li2018scaling} have tried to improve the scalability of \bitcoin \cite{croman2016scaling,bonneau2015sok}, in particular the throughput and latency, by redesigning the core consensus protocol. A variety of approaches have been proposed, for example hybrid consensus algorithms~\cite{schwartz2014ripple, gilad2017algorand,mazieres2015stellar,pass2017hybrid} try to convert the permissionless problem into a permissioned consensus problem by subselecting a set of miners from a previous epoch. While such approaches achieve scalability, they are not natively proof-of-work (PoW) and hence do not retain the dynamic availability, unpredictability and security against adaptive adversaries that the Nakamoto longest chain protocol enjoys. 

\noindent{\bf Parallel-chain protocols.} An emerging set of proof-of-work protocols maintain the native PoW property of \bitcoin and achieve provable scaling by using {\em many parallel} chains. The chains run in parallel and use an appropriate aggregation rule to construct an ordered ledger of transactions out of the various parallel chains.  We will highlight three examples of parallel-chain protocols (PCP): (1) \scheme~\cite{bagaria2019prism}, which achieves high-throughput and low-latency using a proposer chain and many voter chains, (2) \ohie~\cite{yu2020ohie}, which achieves high-throughput using parallel chains and (3) \fruitchains~\cite{pass2017fruitchains}, which achieves fairness using two distinct types of blocks (blocks and fruits) mined in parallel. There are other approaches such as ledger combiners~\cite{fitzi2020ledger}, which achieve some of the same goals using different architectures.

\noindent{\bf Common structure of PCP.} In all of these parallel-chain protocols (PCP), there are multiple types of blocks (for example, in \ohienosp, each type may correspond to a different chain) and we determine the final type only after mining the block - we will term this process as \textit{hash sortition}. The idea of sortition was first formalized in~\cite{backbone} called 2-for-1 PoW. All three PCPs adopt this technique to achieve parallel mining. A miner creates a single commitment (for example, a Merkle root) to the potential version of the different block types and performs a mining operation. Depending on the region the hash falls, the block is considered mined of a certain type. Different protocols utilize different types of aggregation rules and semantics in order to consider the final ledger out of these parallel chains.

\noindent{\bf Variable mining power problem.} A key requirement of deployed PoW blockchains is to adapt to the immense variation in mining power. For example, the mining power of \bitcoin increased exponentially by an astonishing factor of $10^{14}$ during its decade of deployment. If \bitcoin had continued to use the same difficulty for the hash puzzle, then the inter-block time would have fallen from the original $10$ minutes to {\bf $6$ picoseconds}. Such a drop would have caused an intolerable forking rate and seriously undermined the security of \bitcoinnosp, lowering the tolerable adversarial mining power from nearly $50\%$ to $10^{-11}$. However, this is prevented by adjusting the difficulty threshold of \bitcoin using a difficulty adjustment algorithm.

\noindent{\bf Bitcoin difficulty adjustment algorithm.} There are three core ideas to the \bitcoin difficulty adjustment algorithm: (a) vary the difficulty target of block mining based on the median inter-block time from the previous epoch (of $2016$ blocks), (b) use the {\em heaviest} chain (calculated by the sum of the block difficulties) instead of the longest chain to determine the ledger, and (c) allow the difficulty to be adjusted only mildly every epoch (by an upper bound of a factor of $4$). While this appears to be a simple and intuitive algorithm, minor seemingly-innocuous variants turn out to be dangerously insecure.

\noindent{\bf Difficulty adjustment terminology.} Throughout the paper, we call the hash puzzle threshold in PoW mining the {\em target} of a block. The {\em block difficulty} of each block is measured in terms of how many times the block is harder to obtain than using the initial target of the system that is embedded in the genesis block. However, for simplicity, we will adapt the notation of block difficulty to be the inverse of the target of the block. The {\em chain difficulty} of a chain is the sum of block difficulties of all blocks that comprise the chain, then each block in the chain {\em covers} an interval of chain difficulty. The chain with the largest chain difficulty is said to be the {\em heaviest} chain. We also refer the chain difficulty of a block as the chain difficulty of the chain ending at this block. This notation is summarised in the following table.

\begin{center}
    \begin{tabular}{ | p{2cm} | p{5.7cm}|  } 
        \hline
        Target &  Threshold of the hash puzzle in PoW mining\\ 
        \hline
        Block difficulty & Inverse of the target of a block \\ 
        \hline
        Chain difficulty & Sum of block difficulties of all blocks in the chain \\
\hline
    \end{tabular}
\end{center}

\noindent{\bf Difficulty adjustment requires nuanced design.} Consider a simpler algorithm using only (b), i.e., simply let the nodes choose their own difficulty and then use (b) the heaviest chain rule. At a first glance, this rule appears kosher - the heaviest chain rule seems to afford no advantage to any node to manipulate their difficulty. However, this lack of advantage only holds in expectation, and the variance created by extremely difficult adversarial blocks can thwart a confirmation rule that confirms deeply-embedded blocks, no matter how deep, with non-negligible probability proportional to the attacker's mining power (refer to Appendix~\ref{app:attack} for a detailed discussion). Now consider a more detailed rule involving only (a) and (b). It turns out that there is a difficulty raising attack (refer to Appendix~\ref{app:attack} for a detailed discussion), where the adversary creates an epoch filled with timestamps extremely close-together, so that the difficulty adjustment rule from (a) will set the difficulty extremely high for the next epoch, at which point, the adversary can utilize the high variance of the mining similar to the aforementioned attack. This more complex attack is only thwarted using the full protocol that employs (a), (b) and (c) together. The full proof of the Nakamoto heaviest chain protocol was obtained in a breakthrough work \cite{garay2017bitcoin}.

\noindent{\bf Difficulty adjustment in PCP.}
When there are multiple parallel-chains, one  natural idea is to apply \bitcoinnosp's difficulty adjustment algorithm to each of the chains independently. However, this idea does not integrate well with hash sortition since the range of a particular chain will depend on the state of other chains. Instead, since the mining power variation is the same across all chains, a natural approach is to use {\em the same difficulty threshold} across all chains, which is then modulated based on past evidence. How should this common difficulty threshold be chosen? One approach is to utilize inter-block arrival times across all the chains to get better statistical averaging and respond faster to mining power variation. However, it requires some sort of synchronization across the chains and breaks the independence assumption.  



\noindent{\bf General methodology.} We propose a general methodology by which to adapt parallel-chain architectures to the variable mining rate problem. Our general methodology is comprised of three parts, as detailed below. 
\begin{itemize}
    \item{\bf M1: Pivot-chain.} Use a single chain as the pivot chain for difficulty adjustment. Blocks mined in any other chain need to refer to a block in the pivot chain and use the target inferred therefrom. 
    \item{\bf M2: Monotonicity.} In a non-pivot chain, blocks can only refer to pivot-chain blocks of non-decreasing chain difficulty. 
    \item{\bf M3: Translation.}  Wherever the protocol uses the concept of a block's level, it is updated to refer to the block's chain difficulty instead.
\end{itemize}
Using \textbf{M1} pivot-chain for difficulty adjustment ensures that we can continue to use the hash-sortition method. The \textbf{M2} monotonicity rule ensures that blocks in non-pivot chain do not refer to stale/old pivot blocks with target which is very different from expected in the present round. Finally, the \textbf{M3} translation rule ensures that other aspects of the protocol, such as the confirmation rule are adapted correctly to deal with the variable difficulty regime correctly. We show in Section~\ref{sec:prism} why each of the three aspects of our methodology is critical in designing variable difficulty for \scheme by showing attacks for subsets of \textbf{M1,M2,} and \textbf{M3}. 

On the positive side, we show a concrete adaptation of our general methodology to various schemes, in particular to \scheme in Section~\ref{sec:prism}, to \ohie in Section~\ref{sec:ohie} and to \fruitchains in Section~\ref{sec:fruit}.

\noindent{\bf Security proofs.} The problem of analyzing the difficulty adjustment mechanism in \bitcoin was first addressed in \cite{garay2017bitcoin} in
the lock-step synchronous communication model. It introduces a setting where the number of participating parties' rate of change in a sequence of rounds is bounded but follows a predetermined schedule. Later two concurrent works~\cite{full2020,chan2020Varying} analyzed the problem in a bounded-delay network with an adaptive (as opposed to predetermined) dynamic participation, with different proof techniques. 
Following the two later papers, we adopts the more general network and adversary models: we assume a $\Delta$-synchronous communication model, where every message that is received by a honest node is received by all other honest nodes within $\Delta$ rounds; we allow the adversary to control the mining rate even based on the stochastic realization of the blockchain, as long as the mining rate does not change too much in a certain period of time. We assume that the adversarial nodes are Byzantine and they do not act rationally. Under this general model, we establish that our proposed modification to \schemenosp, \ohie and \fruitchains satisfy the dual security properties of safety and liveness. The proofs require a new understanding of how difficulty evolution in a non-pivot chain progresses based on the difficulty in the pivot chain - this statistical coupling presents a significant barrier to surmount in our analysis, and differs from previous work in this area. We show these results in Section~\ref{sec:proof}.

\noindent{\bf Systems implementation.} Our variable difficulty scheme does not add significant computation and communication overhead on existing parallel-chain protocols, making our protocol an easy upgrade. We conduct extensive simulation studies to examine how our systems respond to varying mining power. Results show that our scheme is able to closely match the system mining power and the mining difficulty for each individual chain, thus keeping the chain forking rate stable. We examine adversarial behavior and how it can influence the difficulties of various chains, and confirm that our scheme is secure against significant adversarial presence. The simulations are based on historical Bitcoin mining power data and parameters collected from real-world experiments of the \scheme \cite{prism-system} parallel-chain protocol, making the insights meaningful for real-world systems. 


\noindent{\bf Other related works.} A recently proposed blockchain protocol {\sf Taiji}~\cite{li2020taiji} combines  \scheme with a BFT protocol to construct a dynamically available PoW protocol which has almost deterministic confirmation with low latency. Since {\sf Taiji} inherits the parallel-chain structure from \schemenosp, our meta-principles will also apply. The vulnerability of selfish mining has recently been discussed on several existing blockchain projects with variable difficulty in \cite{negy2020selfish}. Our proposed variable difficulty \fruitchains protocol guarantees fairness of mining, thus disincentivizes selfish mining.  

\section{Model}
\label{sec:model}

\noindent{\bf Synchronous network.} We describe our protocols in the now-standard  $\Delta$-synchronous network model considered in \cite{pass2017analysis,full2020,badertscher2018ouroboros} for the analysis of proposed variable difficulty protocols, where there is an upper bound $\Delta$ in the delay (measured in number of rounds) that the adversary may inflict to the delivery of any message. Observe that notion of ``rounds'' still exist in the model (since we consider discretized time), but now these are not synchronization rounds within which all messages are supposed to be delivered to honest parties.

Similar to  \cite{full2020,pass2017analysis}, the protocol execution proceeds in ``round'' with inputs provided by an environment program denoted by $\Z(1^\kappa)$ to parties that execute the protocol $\Pi$, where $\kappa$ is a security parameter. The adversary $\A$ is adaptive, and allowed to take control of parties on the fly, as well as ``rushing'', meaning that in any given round the adversary gets to observe honest parties’ actions before deciding how to react. The network is modeled as a diffusion functionality similar to those in \cite{full2020,pass2017analysis}: it allows order of messages to be controlled by $\A$, i.e., $\A$ can inject messages for selective delivery but cannot change the contents of the honest parties’ messages nor prevent them from being delivered beyond $\Delta$ rounds of delay — a functionality parameter.

\noindent{\bf Random oracle.} We abstract the hash function as a random oracle functionality. It accepts queries of the form $(\texttt{compute}, x)$ and $(\texttt{verify}, x, y)$. For the first type of query, assuming $x$ was never queried before, a value $y$ is sampled from $\{0, 1\}^\kappa$ and it is entered to a table $T_H$. If $x$ was queried before, the pair $(x, y)$ is recovered from $T_H$. In both cases, the value $y$ is provided as an answer to the query. For the second type of query, a lookup operation is performed on the table. Honest parties are allowed to ask one query per round of the type \texttt{compute} and unlimited queries of the type \texttt{verify}. The adversary $\A$ is given a bounded number of \texttt{compute} queries per round and also unlimited number of \texttt{verify} queries. The bound for the adversary is determined as follows. Whenever a corrupted party is activated the bound is increased by 1; whenever a query is asked the bound is decreased by 1 (it does not matter which specific corrupted party makes the query).

\noindent{\bf Adversarial control of variable mining power.} We assume no rational node in the adversarial model. The adversary can decide on the spot how many honest parties are activated adaptively. In a round $r$, the number of honest parties that are active in the protocol is denoted by $n_r$ and the number of corrupted parties controlled by $\A$ in round $r$ is denoted by $t_r$. Note that $n_r$ can only be determined by examining the view of all honest parties and is not a quantity that is accessible to any of the honest parties individually. We make the ``honest majority'' assumption, i.e., $t_r < (1-\delta) n_r$ for all $r$, where the positive constant $\delta < 1$ is the advantage of honest parties. Further, we will restrict the environment to fluctuate the number of parties in a certain limited fashion. Suppose $\Z, \A$ with fixed coins produces a sequence of parties $n_r$, where $r$ ranges over all rounds of the entire execution, we define the following notation.

\begin{defn}
\label{def:respecting}
Let $\rmax \in \mathbb{N}$ is the total number of rounds in the execution. For $\gamma \in \mathbb{R}^+$, we call $(n_r), r \in [0,\rmax]$, as $(\gamma, s)$-respecting if for any set $S \subseteq [0, \rmax]$ of at
most $s$ consecutive rounds, 
$$\max_{r \in S}\  n_r \leq \gamma \min_{r\in S} \  n_r.$$ 
We say that $\Z$ is $(\gamma, s)$-respecting if for all $\A$ and coins for $\Z$ and $\A$ the sequence of honest parties $n_r$ is $(\gamma, s)$-respecting.
\end{defn}

\section{Prism}
\label{sec:prism}

\subsection{Fixed Difficulty Algorithm}
Each block in the longest chain of the \bitcoin protocol performs dual roles: Proposing and Voting. A proposed block gets confirmed with high reliability only after the block, and several more blocks extending it make it in the longest chain. The latency of the protocol is the number of blocks for which one needs to wait (this number depends on the reliability). To guarantee security, the mining rate remains low~\cite{backbone}, which leads to low throughput and high latency. 
\scheme \cite{bagaria2019prism} is a Proof-of-Work protocol that decouples block proposals and voting to scale throughput and latency. We will briefly explain the \scheme as it was originally described in the fixed mining rate, i.e., fixed difficulty regime. As show in Figure \ref{fig:prism_fig}, \scheme runs multiple $m+1$ separate parallel ``blocktrees'' where one of the trees, called the proposer tree, consists of blocks from which the final transaction ledger is constructed. We define a block's level in the proposer tree as the block's depth from the genesis. The final transaction ledger will comprise of one proposer block at each level chosen by the longest chain in each of the $m$ voter blocktrees; they are referred to as voter chains. A voter block in any voter blocktree can vote for one or more proposer blocks at different levels by including a pointer to the corresponding proposer blocks in its payload. A voter block can also consist of a null vote if the chain it entered already voted on the latest level. At a given level of the proposer tree, a voter chain can vote for exactly one proposer block. The net vote for a proposer block can be counted by aggregating which of the $m$ voter chains voted for that block. The block with the most votes at any particular level is termed as the leader block for that level, and the ledger is constructed by concatenating the leader blocks at various levels.  

\noindent{\bf Mining and sortition.} In order to ensure that the adversary cannot focus the mining power onto a single chain, a ``sortition'' mechanism is used. A miner creates a ``super-block'' containing information about its parent block in each of the $m+1$ trees. Each tree has a target of $T$, and for $1\leq i \leq m$ if a node creates a block of hash value in between $iT$ and $(i+1)T$, it will be able to mine this block in the voter block-tree $i$. If it creates a block of hash value less than $T$, it will be able to mine this block in the proposer block-tree as show in Figure~\ref{fig:prism_fig}. 

\begin{figure}
    \centering
    \includegraphics[width=8cm]{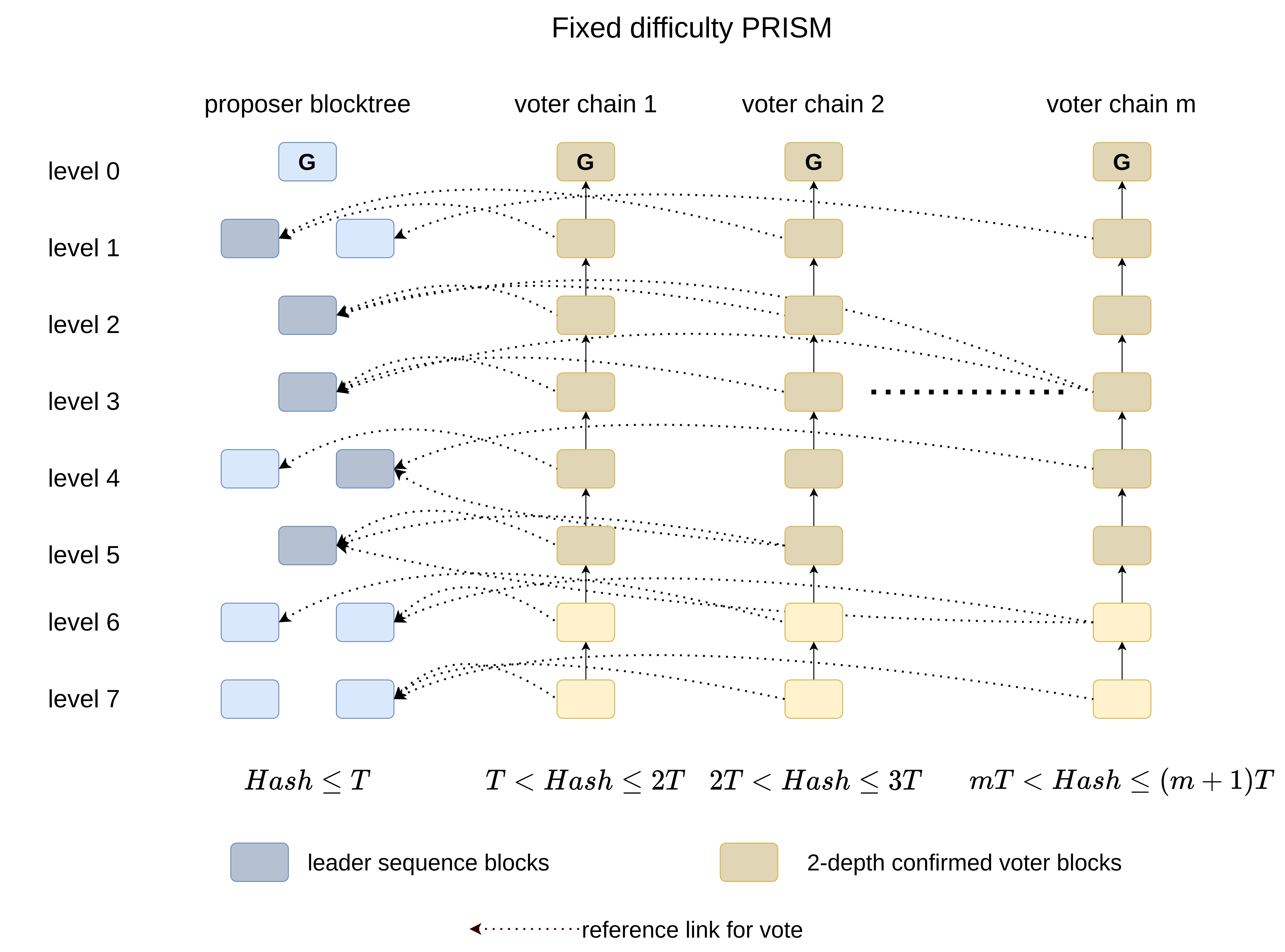}
    \vspace{-0.1in}
    \caption{Fixed difficulty \schemenosp. Snapshot of a miner's view of \scheme block-trees. The confirmed blocks are darker in color and the votes are shown using dotted arrows.} 
    \label{fig:prism_fig}
    \vspace{-0.2in}
\end{figure}

The structure of this voting scheme in \scheme enables low confirmation latency. The high-level idea is that votes accrue only sequentially in \bitcoinnosp, whereas in \schemenosp, votes accrue parallelly. Thus for a given amount of security, confirmation only needs to wait for a much shorter amount of time (since voting blocks are created in parallel), thus reducing the amount of latency. 



\subsection{Natural Approaches Are Insecure}
\label{sec:strawman}

{\noindent \bf Different difficulty adjustment for different chains (no M1).} To add support for variable mining power to \schemenosp, a natural first approach is to replace the longest chain rule \cite{backbone} by the heaviest chain rule \cite{full2020} in all the parallel chains, and adjust the mining difficulty in each chain separately. However, miners in \scheme use cryptographic sortition to mine blocks on all chains at the same time, and having different thresholds for different chains depending on the state will require complex coupling across chains. Furthermore, since the mining power variation is the same across different chains, it is efficient to have a single difficulty threshold across the entire system. 

As we explained in the introduction, our general methodology for converting a fixed-difficulty protocol into a variable-difficulty protocol  comprises of three attributes \textbf{M1}: Pivot-chain, \textbf{M2}: Monotonicity and \textbf{M3}: Translation. We will now explore the subtleties inherent in this process and show why a subset of these attributes is insufficient for \schemenosp.


{\noindent \bf M1 without M2 $\Rightarrow$ Safety failure.} To make the cryptographic sortition technique applicable, a straightforward approach is to use the proposer chain as a pivot chain for difficulty adjustment (\textbf{M1}): the difficulty of the proposer blocks is adjusted according to the \bitcoin rule \cite{full2020}, and the difficulty of voter blocks tracks that of the proposer chain by reference links. However, if we allow the miners to use the difficulty of any proposer block for the voter blocks, then {\bf safety failures} may occur on the voter chains. We demonstrate an example safety failure here. Let the honest parties maintain the difficulty $d_0$ throughout the execution, and the adversary mines private proposer blocks with timestamps in rapid succession to increase the difficulty to $d_0*k$, where $k$ is a desired security parameter on the voter chains (i.e., we hope that a $k$-deep voter block will be stable forever). Even if the adversary cannot keep up the chain difficulty of its private proposer chain with the heaviest public chain, at the current time on the voter chain, the adversary will refer to this very difficult block on the proposer tree to create a very difficult block on the voter tree. If the adversary is lucky and mines one voter block with difficulty $d_0*k$, the probability of which is a constant rather than exponentially decaying in $k$ (via the same anti-concentration argument in Appendix~\ref{app:attack}), then it can overtake the heaviest voter chain and reverse a $k$-deep voter block. This attack is described in Figure ~\ref{fig:attack1}. To address this issue, we require that on each voter chain the referred proposer blocks should have non-decreasing chain difficulty (\textbf{M2}), so that the adversary can no longer adopt an old mining difficulty from the proposer chain. With {\bf M2}, although the adversary may not refer to the tip of the proposer chain, both our analysis in Section \ref{sec:non-pivot} and the simulation in Section \ref{sec:exp-attack} show that the security of the voter chain can still be guaranteed.
\begin{figure}
    \centering
    \includegraphics[width=6cm]{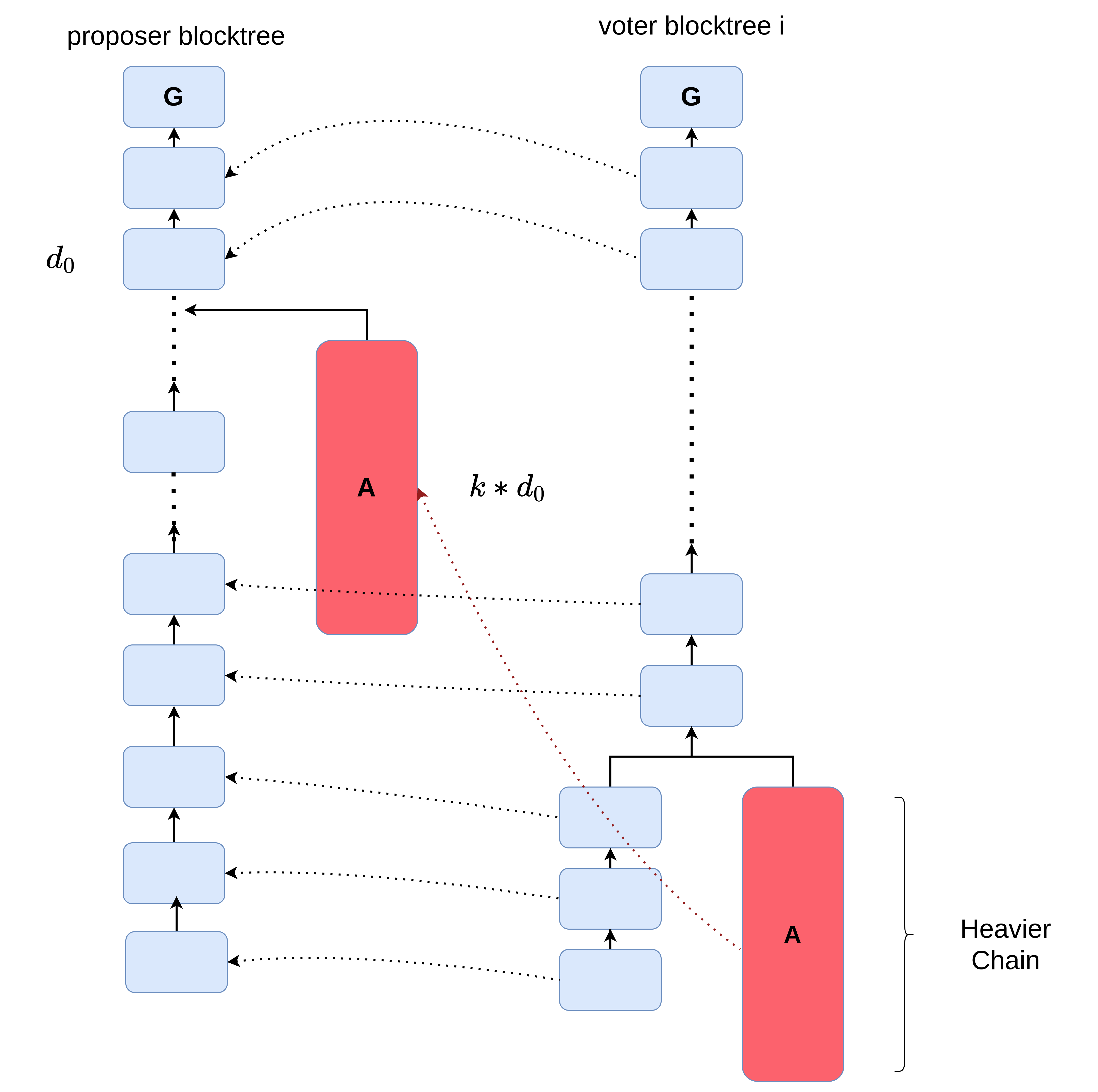}
    \vspace{-0.15in}
    \caption{Attack on safety when we enforce M1 but not M2. At the current time, the adversary will choose a very difficult proposer block in a less heavier chain as its proposer-parent in the voter chain, hurting the ledger's security. The dotted arrows represent the relation between the voter blocks and their proposer parents.} 
    \label{fig:attack1}
    \vspace{-0.15in}
\end{figure}

{\noindent \bf Voting rule: No M3 $\Rightarrow$ Liveness Failure.} In fixed difficulty \schemenosp, a voter block votes on all levels in the proposer tree that are unvoted by the voter block’s ancestors. In the variable difficulty algorithm, while the notion of ``level'' on the proposer chain is well-defined an adversary can always mine a very long but easy proposer chain. As a result, if we still order proposer blocks by level, the leader sequence will be full of adversarial blocks, which may cause liveness failure as described in Figure ~\ref{fig:attack2}. A natural generalization would be that voter blocks vote for each difficulty value rather than level and the leader sequence is also decided for each difficulty value (\textbf{M3}). See the complete algorithm in the next subsection.
\begin{figure}
    \centering
    \includegraphics[width=5cm]{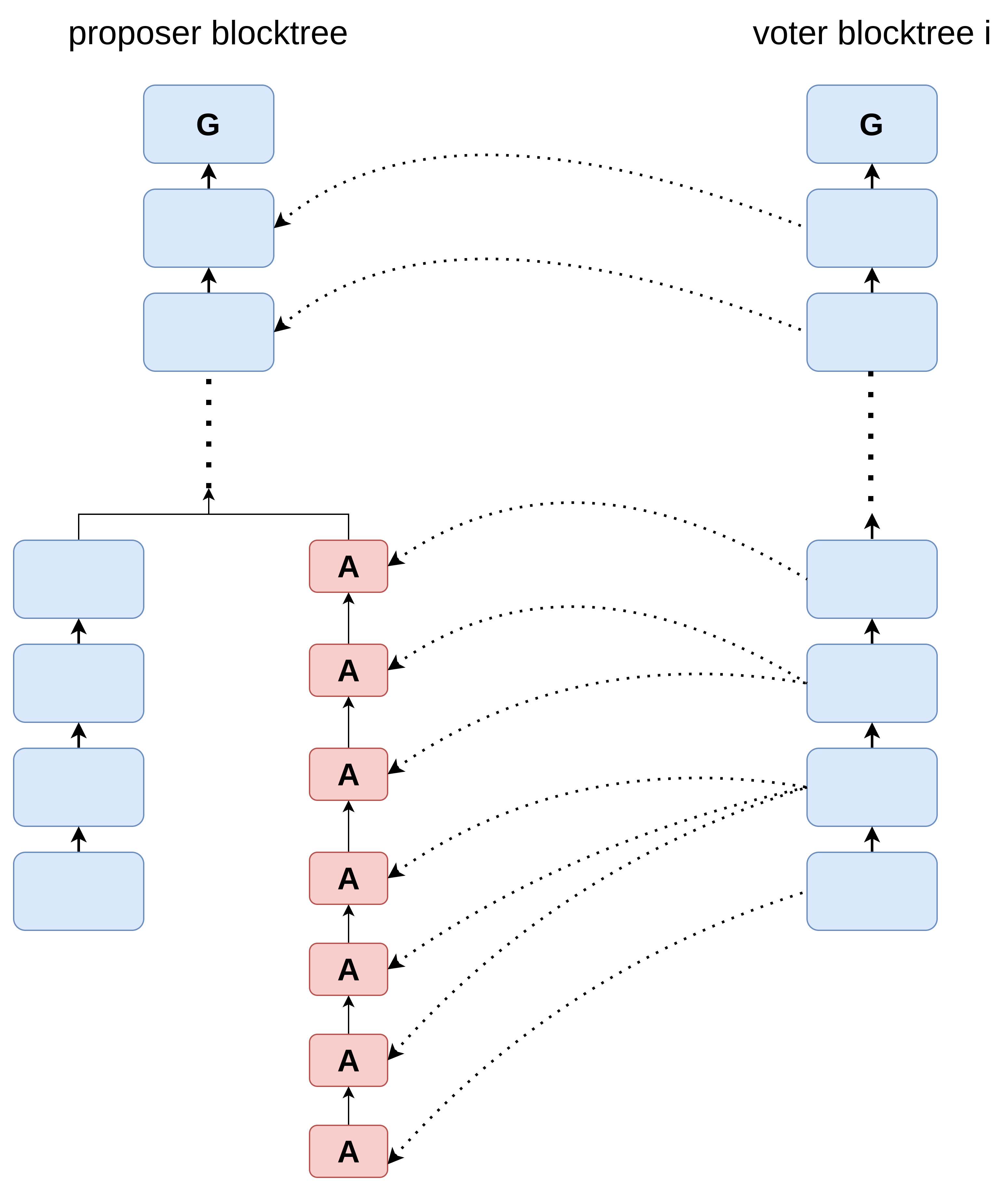}
    \vspace{-0.2in}
    \caption{Attack on liveness when M1, M2 are enforced but not M3. The adversary lowers the block difficulty and advances in level on the proposer tree. The voter blocks will not be able to vote for any honest blocks, hurting the ledger's liveness.}
    \label{fig:attack2}
    \vspace{-0.2in}
\end{figure}

\subsection{Variable Difficulty Algorithm}
\label{sec:prism-full-cfm-rule}

We now describe the full \scheme protocol for the variable difficulty setting constructed using our general methodology. We refer the reader to Appendix~\ref{app:pseudocode} for a pseudocode of the algorithm. There are two types of blocks in \scheme blockchain: proposer blocks and voter blocks. Proposer blocks contain transactions that are proposed to be included in the ledger, and constitutes the skeleton of \scheme blockchain.
Voter blocks are mined on $m$ separate voter blocktrees, each with its own genesis block. We say a voter block votes on a proposer block $B$ if it includes a pointer to $B$ in its payload.

{\bf \noindent Block proposal rule.}
The proposer chain follows the heaviest chain rule, and the difficulty adjustment uses the target calculation function defined in \cite{full2020} with parameter $\Phi$ and $\tau$, where $\Phi$ is the length of an epoch in number of blocks and $\tau \geq 1$ is the dampening filter (line \ref{code:difficult} in Algorithm \ref{alg:prism_minining}). All $m$ voter chains also follow the heaviest chain rule, but the difficulty adjustment on voter chains is more tricky and we will discuss it soon when introducing {\it sortition}.

Whereas \bitcoin miners mine on a single blocktree, \scheme miners simultaneously mine one proposer block and $m$ voter blocks via cryptographic sortition. 
More precisely, while mining, each miner selects $m+1$ parent blocks, which are the tips of the heaviest chains on the proposer tree and the $m$ voter trees. We call these tips {\it proposer parent} and {\it voter parents} separately. 
And the miner maintains outstanding content for each of the $m+1$ possible mined blocks: For the proposer block, the content is a list of transactions; For the voter block on the $i$-th voter tree, the content is a list of hashes of proposer blocks at each difficulty in the  proposer blocktree that has not yet received a vote in the heaviest chain of the $i$-th voter tree. More precisely, on the $i$-th voter tree, if the last proposer block voted by the heaviest chain covers the difficulty interval $(a_0,b_0]$ and the proposer parent covers $(a^*,b^*]$, then a valid voter block on the $i$-th voter tree must satisfy the following conditions (see Algorithm \ref{alg:prism_fucntion}).

\begin{itemize}
    \item If $b^* = b_0$, then it should contain no vote.
    \item If $b^* > b_0$, then it should vote for an arbitrary number of proposer blocks $B_1, B_2, \cdots, B_n$, each covering $(a_1,b_1],$ $(a_2,b_2],$ $\cdots,$ $(a_n, b_n]$, such that $a_i < b_{i-1} <b_i$ for all $1 \leq i \leq n$ and $b_n = b^*$.  
\end{itemize}

\begin{figure}
    \centering
    \includegraphics[width=8cm]{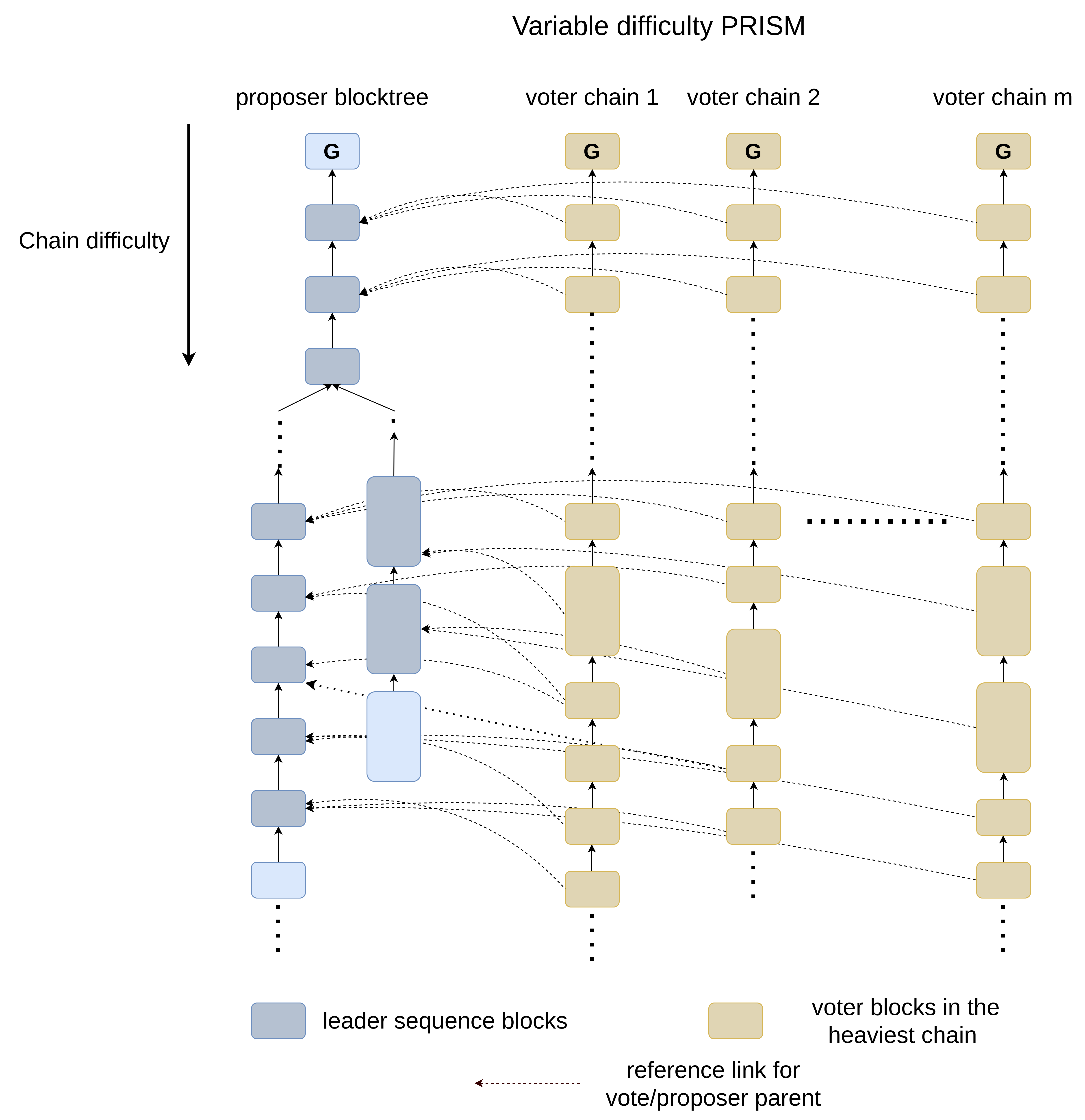}
    \vspace{-0.2in}
    \caption{A miner's view of \scheme block-trees with variable chain difficulty. The confirmed blocks are darker in color. Both the votes and proposer-parent links are shown using the same dotted arrows.}
    \label{fig:variable_prism}
    \vspace{-0.25in}
\end{figure}

Upon collecting this content, the miner tries to generate a block with target according to the proposer parent via proof-of-work (\textbf{M1}).  
Once a valid nonce is found, the output of the hash is deterministically mapped to either a voter block in one of the $m$ trees or a proposer block (lines \ref{code:sortitionStart}-\ref{code:sortitionEnd} in  Algorithm \ref{alg:prism_minining}).

While mining, nodes may receive blocks from  the network, which are processed in much the same way as \bitcoinnosp. For a received voter block to be valid, the chain difficulty of its proposer parent must be at least that of the proposer parent of its voter parent (\textbf{M2}).
Upon receiving a valid voter block, the miner updates the heaviest chain if needed, and updates the vote counts accordingly.
Upon receiving a valid proposer block $B$ with chain difficulty higher than the previous heaviest chain, the miner makes $B$ the  new proposer parent, and updates all $m$ voter trees to vote for chain difficulties until $B$.

{\bf \noindent Ledger formation rule.} 
Note that all the voters on one voter chain may cover overlapping intervals. So we first sanitize them into disjoint intervals: For $n$ consecutive valid votes $(a_1,b_1],$ $(a_2,b_2],$ $\cdots,$ $(a_n, b_n]$ on a voter chain, we sanitize them into new intervals $(a_1,b_1], (b_1,b_2], \cdots, (b_{n-1}, b_n]$. In this way, we make sure that each real-valued difficulty $d$ is voted at most once by each voter chain, hence $d$ can receive at most $m$ votes. Since voter blocks vote for each difficulty value rather than level, the ledger is also generated based on difficulty values (\textbf{M3}). Let $v_i(d)$ be the proposer block with interval containing $d$ voted by the heaviest chain on the $i$-th voter tree. Let $\ell(d)$ be the leader block of difficulty $d$, which is the plurality of the set $\{v_i(d)\}_{i=1}^m$. For each proposer block $B_p$ in the proposer tree, define $g(B_p)$ as
\begin{equation}
\label{eqn:grade}
    g(B_p) = \inf_{d \geq 0}\{d: \ell(d) = B_p \}.
\end{equation}
Note that if $\{d: \ell(d) = B_p \}$ is empty, then $g(B_p) = \infty$. Finally, by sorting all proposer blocks by $g(\cdot)$, we get the leader sequence of the proposer blocks. A concrete example of this ledger formation rule is shown in Figure \ref{fig:prism_ledger}.

Operationally, we only need to count votes for intervals in the atomic partition of all intervals covered by the proposer blocks. After finding the leader block for each atomic interval, we can get the leader sequence by sanitizing the repeated proposer blocks.

\begin{figure}
    \centering
    \includegraphics[width=7cm]{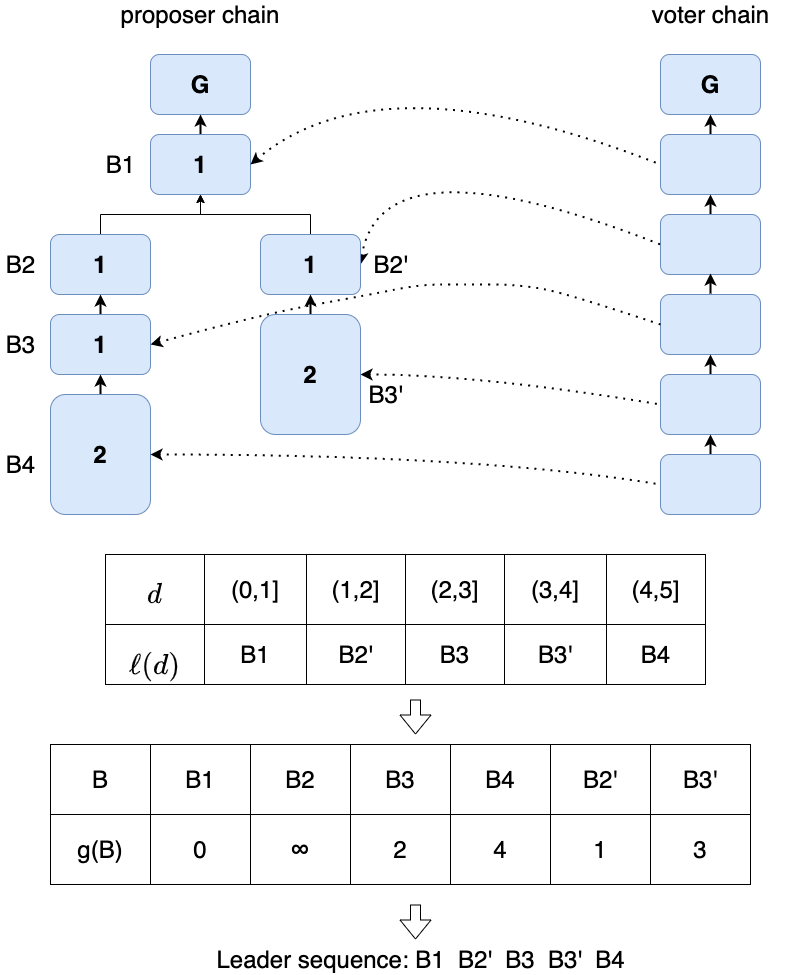}
    \vspace{-0.1in}
    \caption{An example of the ledger formation rule in \schemenosp. For simplicity, we only have one voter chain in the example. The number inside each proposer block is the block difficulty. In this example, the heaviest proposer chain has chain difficulty 5. We find the leader block $\ell(d)$ for each difficulty level $d$ in $(0,5]$ according to the votes (as shown in the first table). Then we find the grade $g(\cdot)$ of each proposer block by Equation (\ref{eqn:grade}) as shown in the second table. Finally, the proposer blocks are ordered by their grades.}
    \label{fig:prism_ledger}
    \vspace{-0.25in}
\end{figure}

\noindent{\bf Main result: persistence and liveness of \scheme (Informal)} 
We show that \scheme generates a transaction ledger that satisfies \textit{persistence} and \textit{liveness} in a variable mining power setting in Theorem \ref{thm:prism}.  

\section{OHIE}
\label{sec:ohie}

\subsection{Fixed Difficulty Algorithm}

\ohie \cite{yu2020ohie} composes $m$ parallel instances of \bitcoin longest chains. Each chain has a distinct genesis block, and the chains have ids from $0$ to $m-1$. Similar to \schemenosp, \ohie also uses cryptographic sortition to ensure that miners extend the $m$ chains concurrently and they do not know which chain a new block will extend until the PoW puzzle is solved.
    
Each individual chain in \ohie inherits the proven security properties of longest chain protocol \cite{backbone}, and all blocks on the $m$ chains confirmed by the longest chain confirmation rule (eg. the $k$-deep rule) are called {\em partially-confirmed}. However, this does not yet provide a total ordering of all the confirmed blocks across all the $m$ chains in \ohienosp. The goal of \ohie is to generate a {\em sequence of confirmed blocks} (SCB) across all $m$ parallel chains. Once a partially-confirmed block is added to SCB, it becomes {\em fully-confirmed}.

In \ohienosp, each block has two additional fields used for ordering blocks across chains, denoted as a tuple $({\tt rank},\,\, {\tt next\_rank})$. In SCB, the blocks are ordered by increasing ${\tt rank}$ values, with tie-breaking based on the {\em chain ids}. For any new block $B$ that extends from its parent block denoted as ${\tt parent}(B)$, we directly set $B$'s ${\tt rank}$ to be the same as ${\tt parent}(B)$'s ${\tt next\_rank}$. A genesis block always has ${\tt rank}$ of $0$ and ${\tt next\_rank}$ of $1$. Properly setting the ${\tt next\_rank}$ of a new block $B$ is the key  design in \ohienosp. Let $\mathcal{B}$ be the set of all tips of the $m$ longest chains before $B$ is added to its chain, then the ${\tt next\_rank}$ of $B$ is given by 
\begin{equation*}
    {\tt next\_rank}(B) = \max\{{\tt rank}(B)+1,\max_{B' \in \mathcal{B}} \{{\tt next\_rank}(B')\}\}.
\end{equation*}
If $B$ copies the ${\tt next\_rank}$ of a block $B'$ on a chain with different id, then a reference link to $B'$ (or the hash of $B'$) is added into $B$. In the example of Figure \ref{fig:ohie_fixed}, when  B11 is mined, B04 has the highest ${\tt next\_rank}$, so B11 copies the ${\tt next\_rank}$ of B04 and has a reference link to B04.

\ohie generates a SCB in the following way. Consider any given honest node at any given time and its local view of all the $m$ chains. Let $y_i$ be the ${\tt next\_rank}$ of the last partially-confirmed block on chain $i$ in this view. Let ${\tt confirm\_bar} \leftarrow \min_{i=1}^k y_i$. All partially-confirmed blocks whose rank is smaller than ${\tt confirm\_bar}$ are deemed fully-confirmed and included in SCB. Finally, all the fully-confirmed blocks will be ordered by increasing ${\tt rank}$ values, with tie-breaking favoring smaller chain ids. As an example, in Figure ~\ref{fig:ohie_fixed}, we have $y_0 = 4, y_1 =7, y_2=9$, hence ${\tt confirm\_bar}$ is $4$. Therefore, the $8$ partially-confirmed blocks whose ${\tt rank}$ is below $4$ become fully-confirmed.

\begin{figure}
    \centering
    \includegraphics[width=7cm]{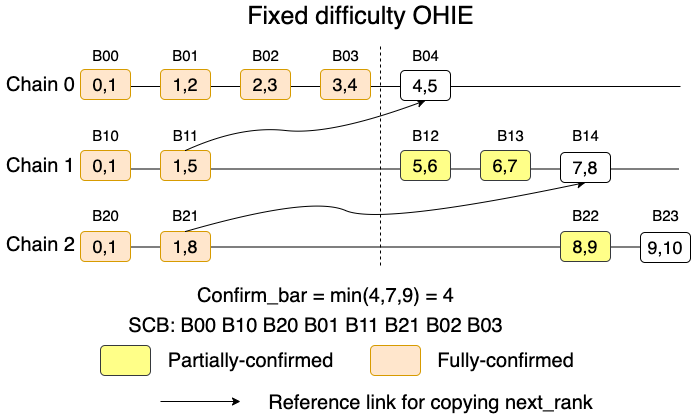}
    \vspace{-3mm}
    \caption{\ohie with fixed difficulty. Each block has a tuple $({\tt rank},\,\, {\tt next\_rank})$. In this figure, a block that is at least 2-deep in its chain is partially-confirmed. The blocks arrive in this order: B00, B10, B20, B01, B02, B03, B04, B11, B12, B13, B14, B21, B22, B23.}
    \label{fig:ohie_fixed}
    \vspace{-3mm}
\end{figure}

\begin{figure}
    \centering
    \includegraphics[width=8cm]{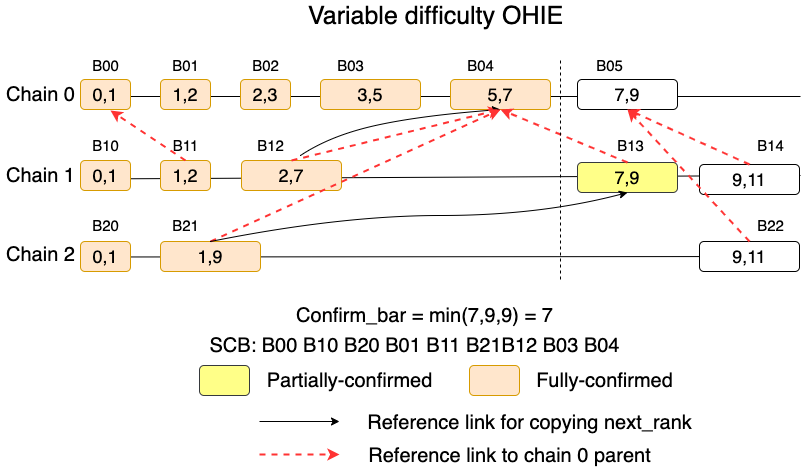}
    \vspace{-3mm}
    \caption{\ohie with variable difficulty. Each block has a tuple $({\tt rank},\,\, {\tt next\_rank})$. In this figure, a block that is at least 2-deep in its chain is partially-confirmed. The width of a block represents its mining difficulty. Different from the fixed difficulty algorithm, the mining difficulty is adjusted every 3 blocks on chain 0; Each block $B$ on chain $i$ ($i>0$) has a chain 0 parent (shown by the red reference link), which decides the mining difficulty of $B$. The blocks arrive in this order: B00, B10, B20, B11, B01, B02, B03, B04, B12, B13, B21, B05, B14, B22.}
    \label{fig:ohie_variable}
    \vspace{-5mm}
\end{figure}

\subsection{Variable Difficulty Algorithm}

Following the same meta-principle of designing variable difficulty \schemenosp, we can also turn the fixed difficulty \ohie into a variable difficulty algorithm by making the following changes. 
\begin{itemize}
    \item Each individual chain follows the heaviest chain rule instead of the longest chain rule. 
    \item The mining difficulty of chain 0 is adjusted the same way as the \bitcoin rule \cite{full2020}.
    \item Following our design principle \textbf{M1}, each block $B$ on chains $1,2,..., (m-1)$ will also have a {\em chain 0 parent} $\hat{B}$ (assigned before mining). The mining difficulty of $B$ is the same as the difficulty used to mine a child block of $\hat{B}$. To prevent the adversary from adopting an old mining difficulty from chain 0, we require that on each chain the referred chain 0 parent should have non-decreasing chain difficulty (\textbf{M2}). As an example in Figure ~\ref{fig:ohie_variable}, each block on chain 1 and chain 2 refers to (shown in red dashed arrow) a chain 0 parent with non-decreasing chain difficulty, which decides the mining difficulty of the block.
    \item A straightforward adoption on how to decide the ${\tt next\_rank}$ of a block would follow from our design principle \textbf{M3}. Let $\mathcal{B}$ be the set of all tips of the $m$ heaviest chains before $B$ is added to its chain, then the ${\tt next\_rank}$ of $B$ is given by 
    \begin{equation*}
        {\tt next\_rank}(B) = \max\{{\tt rank}(B)+{\rm diff}(B),\max_{B' \in \mathcal{B}} \{{\tt next\_rank}(B')\}\}.
    \end{equation*}
    If $B$ copies the ${\tt next\_rank}$ of a block $B'$ on a chain with different id, then a reference link to $B'$ (or the hash of $B'$) is added into $B$. Note that $B'$ may be different from $B$'s chain 0 parent, eg. B21 in Figure ~\ref{fig:ohie_variable}. We point out that this design is not necessary for the security analysis, but it is a very natural choice. 
    
\end{itemize}

\noindent{\bf Main result: persistence and liveness of \ohie (Informal)} 
We show that \ohie generates a transaction ledger that satisfies \textit{persistence} and \textit{liveness} in a variable mining power setting in Theorem \ref{thm:ohie}.  
\section{FruitChains}
\label{sec:fruit}

\subsection{Fixed Difficulty Algorithm}

The \fruitchains protocol was developed in order to solve the selfish mining problem and develop incentives which are approximately a Nash equilibrium. A key underlying step in \fruitchains is to ensure that a node that controls a certain fraction of mining power receives reward nearly proportional to its mining power, irrespective of adversarial action. \fruitchains runs an instance of Nakamoto consensus but instead of directly putting the transactions inside the blockchain, the transactions are put inside ``fruits'' and fruits are included by blocks. Mining fruits also requires solving some PoW puzzle. Similar to \scheme and \ohienosp, the \fruitchains protocol also uses cryptographic sortition to ensure that miners mine blocks and fruits concurrently and they do not know the type of the blocks until the puzzle is solved. Additionally, a fruit is required to ``hang'' from a block which is not too far from the block which includes the fruit. 

In \fruitchainsnosp, each of the fruit will have two parent blocks, we call them fruit parent and block parent: the fruit parent is a recently stabilized/confirmed block that the fruit is hanging from; the block parent should be the tip of the longest chain. A block will also have a fruit parent because the fruit mining and block mining are piggybacked atop each other, but a block actually does not care about this field. See Figure ~\ref{fig:fruit} for illustration. We say that a fruit $B_f$ is recent w.r.t. a chain $\C$ if the fruit parent of $B_f$ is a block that is at most $R$ deep in $\C$, where $R$ is called the recency parameter. The \fruitchains protocol requires that blocks only include recent fruits. Intuitively, the reason why fruits need to be recent is to prevent the ``fruit withhold attack'': without it, an attacker could withhold fruits, and suddenly release lots of them at the same time, thereby creating a very high fraction of adversarial fruits in some segment of the chain.

\begin{figure}
    \centering
    \includegraphics[width=6cm]{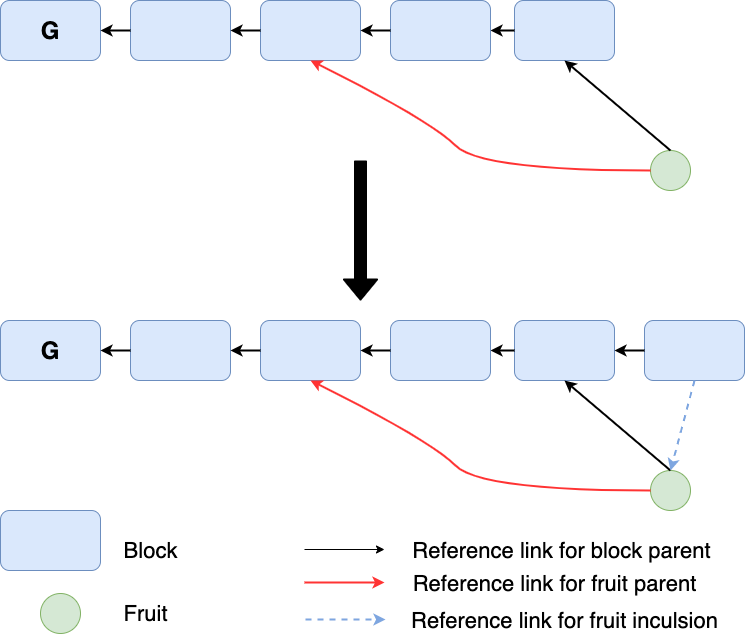}
    \vspace{-0.15in}
    \caption{The \fruitchains protocol. }
    \label{fig:fruit}
    \vspace{-0.25in}
\end{figure}

We term a blockchain protocol as fair if players controlling a $\phi$ fraction of the computational resources will reap a $\phi$ fraction of the rewards. Intuitively, the reason why the \fruitchains protocol guarantees fairness is that even if an adversary tries to ``erase'' some block mined by an honest player (which contains some honest fruits), by the liveness of the longest chain protocol, eventually an honest player will mine a new block including those fruits and the block will be stable -- in fact, by setting the recency parameter $R$ reasonably large, we can make sure that any fruit mined by an honest player will be included sufficiently deep in the chain. And further, if rewards and transaction fees are evenly distributed among the fruits in the long segment of the chain, then the \fruitchains protocol guarantees fairness.

\subsection{Variable Difficulty Algorithm}

Following our meta-principles, we can also turn the fixed difficulty \fruitchains into a variable difficulty algorithm by making the following changes. 
\begin{itemize}
    \item The underlying blockchain protocol follows the heaviest chain rule instead of the longest chain rule, i.e., the block parent of a block/fruit is the tip of the heaviest chain.
    \item The mining difficulty is adjusted the same way as the \bitcoin rule \cite{full2020}, and the block/fruit mining will use the same mining difficulty, or the difficulties of fruit and block will remain the same ratio (\textbf{M1}). 
    \item A fruit $B_f$ is recent w.r.t. a chain $\C$ at round $r$ if the fruit parent of $B_f$ is in $\C$ and has timestamp at least $r-R$, where $R$ is called the recency parameter. And again, blocks only include recent fruits, i.e., a block $B$ with timestamp $r$ is valid if for all fruits $B_f \in B$, the fruit parent of $B_f$ has timestamp at least $r-R$.
\end{itemize}

If rewards and transaction fees are designed to distribute proportional to the fruit difficulty in a sufficiently long segment of the chain, then the variable difficulty \fruitchains protocol guarantees fairness under a variable mining power setting. This is where the meta-principle \textbf{M3} kicks in. In the fixed difficulty setting, the reward is distributed equally among all fruit miners {\bf equally} in a window of blocks. In the variable difficulty setting, the reward is distributed {\bf proportional to the difficulty of the fruits}. To model this in our calculation of fairnesss, we say that the variable difficulty protocol is fair if the {\bf fraction of difficulty} of fruits of a given miner in a window is approximately proportional to its mining power. Note that monotonicity condition (\textbf{M2}) does not apply to variable difficulty \fruitchains as there is no chaining structure among the fruits. But the recency condition on the fruits has the same effect and does prevent the adversary from adopting an old mining difficulty for fruits.

\noindent{\bf Main result: persistence, liveness and fairness of \fruitchains (Informal)} 
We show that \fruitchains generates a transaction ledger that satisfies \textit{persistence}, \textit{liveness} and \textit{fairness} in a variable mining power setting in Appendix~\ref{app:fruit}.

\section{Security Analysis}
\label{sec:proof}

\subsection{Desired Security  Properties}
\label{sec:prop}
\begin{nota}
\label{notation:chains}
We denote by $\C^{\lceil \ell}$ the chain resulting from ``pruning'' the blocks with timestamps within the last $\ell$ rounds. If $\C_1$ is a prefix of $\C_2$, we write $\C_1 \prec \C_2$. The latest block in the chain $\mathcal{C}$ is called the head of the chain and is denoted by head$(\mathcal{C})$. We denote by $\C_1 \cap \C_2$ the common prefix of chains $\C_1$ and $\C_2$. We say that a chain $C$ is held by or belongs to an honest party if it is one of the heaviest chains in its view. 
\end{nota}
The following two properties called common prefix and chain quality, are essential in proving the persistence and liveness of the transaction ledger. The common prefix property states that any two honest parties’ chains at two rounds have the earlier one subsumed in the later as long as the last a few blocks are removed, while chain quality quantifies the contributions of the honest parties to any sufficiently long segment of the chain.

\begin{defn}[Common Prefix]
\label{def:common}
The common prefix property with parameter $\ell_{\rm cp} \in \mathbb{N}$ states that for any two honest players 
holding chains $\C_1$, $\C_2$ at rounds $r_1$, $r_2$, with $r_1 \leq r_2$, it holds that $\C_1^{\lceil \ell_{\rm cp}} \prec \C_2$.
\end{defn}

\begin{defn}[Chain Quality]
\label{def:quality}
The chain quality property is defined for two parameters $\ell_{\rm cq} \in \mathbb{N}$ and $\mu \in \mathbb{R}$. Let $\mathcal{C}$ be a chain held by any honest party at round $r$ and let $S_0 \subseteq [0,r]$ be an interval with at least $\ell_{\rm cq}$ consecutive rounds. Let $\C(S_0)$ be the segment of $\C$ containing blocks with timestamps in $S_0$ and $d$ be the total difficulty of all blocks in $\C(S_0)$. The chain quality property states that the honest blocks in $\C(S_0)$ have a total difficulty of at least $\mu d$. 
\end{defn}

In the context of \schemenosp, let $\texttt{LedSeq}_d(r)$ be the \textit{leader sequence}  up to difficulty level $d$ at round $r$. And the leader sequence at the end of round $\rmax$, the end of the protocol execution, is the \textit{final leader sequence}, $\texttt{LedSeq}_d(\rmax)$. Then similar to a single chain, we can define the following properties on the leader sequence.

\begin{defn}[Leader Sequence Common Prefix ]
\label{def:leader-prefix}
The leader sequence common prefix property with parameter $\ell_{\rm lscp} \in \mathbb{N}$ states that for a fixed difficulty level $d$, let $R_d$ be the first round in which a proposer block covering $d$ was received by all honest players, then it holds that
\begin{equation}
    \texttt{LedSeq}_{d}(r) = \texttt{LedSeq}_{d}(\rmax) \quad \forall r\geq R_d+\ell_{\rm lscp}.
\end{equation}
\end{defn}

\begin{defn}[Leader Sequence Quality]
\label{def:leader-quality}
The leader sequence property is defined for two parameters $\ell_{\rm lsq} \in \mathbb{N}$ and $\mu \in \mathbb{R}$. Let $\mathcal{C}$ be a proposer chain held by any honest party at round $r$ and let $D$ be the difficulty range covered by all blocks in $\C$ with timestamps in the last $\ell_{\rm lsq}$ rounds. The leader sequence quality property states that leader blocks mined by honest players cover at least $\mu$ fraction of $D$.
\end{defn}
Our goal is to generate a robust transaction ledger that satisfies {\em persistence}  and {\em liveness} as defined in  \cite{backbone,yu2020ohie}.

\begin{defn}[from \cite{backbone,yu2020ohie}]
    \label{def:public_ledger}
    A protocol $\Pi$ maintains a robust public transaction ledger if it organizes the ledger as a blockchain of transactions and it satisfies the following two properties:
    \begin{itemize}
        \item (Persistence) Consider the confirmed ledger $L_1$ on any node $u_1$ at any round $r_1$, and the confirmed ledger $L_2$ on any node $u_2$ at any round $r_2$ (here $u_1$ ($r_1$) may or may not equal $u_2$ ($r_2$)).  If $r_1 +\Delta < r_2$, then $L_1$ is a prefix of $L_2$.
        \item (Liveness) Parameterized by $u \in \mathbb{R}$, if a transaction {\sf tx} is received by all honest nodes for more than $u$ rounds, then all honest nodes will contain {\sf tx} in the same place in the confirmed ledger.
    \end{itemize}
\end{defn}

\subsection{Proof Sketch}
\label{sec:sketch}

\begin{figure}
    \centering
    \includegraphics[width=8cm]{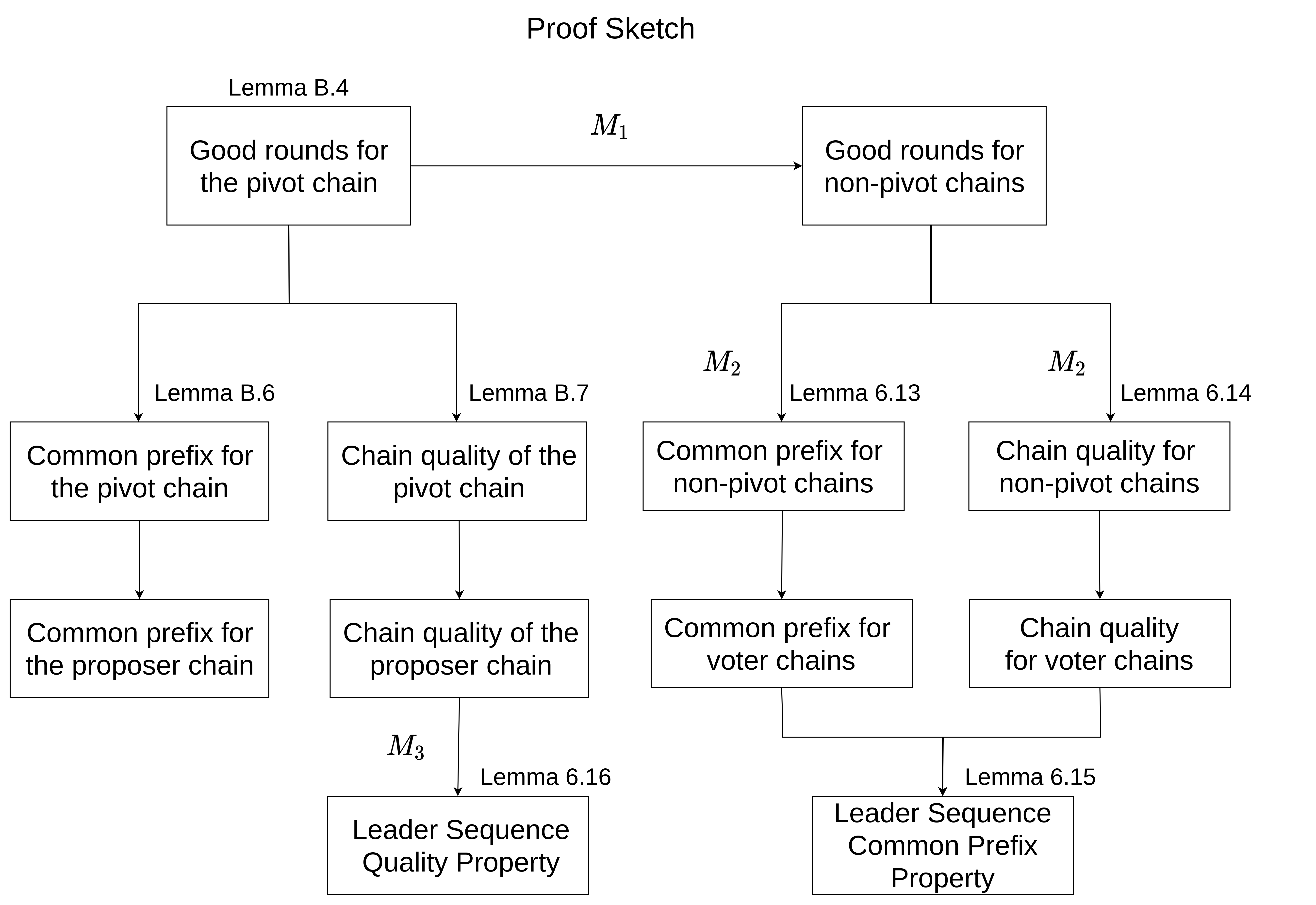}
    \vspace{-0.15in}
    \caption{Proof sketch for \schemenosp. M1, M2 and M3 are crucial in proving these properties for the leader sequence.} 
    \label{fig:proof_sketch}
    \vspace{-0.2in}
\end{figure}

Since there is a pivot chain in all three protocols (by {\bf M1}), the first step of our analysis is to prove some desired properties (including chain growth, common prefix, and chain quality) of the pivot chain.
As the pivot chain just follows the difficulty adjustment rule as in \bitcoinnosp, we can directly borrow results from a beautiful paper \cite{full2020}. The key step is to show that by adopting the heaviest pivot chain, honest nodes are always mining with ``reasonable'' block difficulties (this is formally defined as Good round/chain in Section \ref{sec:def}). We state all the useful lemmas and summarize the proof from \cite{full2020} in Appendix~\ref{app:backbone}.

The key technical challenge involves analyzing the properties of the non-pivot chains. Unlike in a pivot chain where all blocks in an epoch will have the same block difficulty, the block difficulties may experience sudden changes in non-pivot chains. This presents a significant barrier to surmount in our analysis, and differs from previous work in this area. Recall that {\bf M1} ensures that an honest party chooses the target of the next block in a non-pivot chain from the tip of the heaviest pivot chain in its view. Hence, the targets used by an honest party for the non-pivot chains are also reasonable. 
Then how about the non-pivot-chain blocks mined by the adversary?
As discussed in Section \ref{sec:prism}, allowing the miners to choose arbitrary mining difficulty in a non-pivot chain is risky. So we use the monotonicity condition {\bf M2} to ensure that non-pivot-chain blocks also have ``reasonable'' block difficulties even if the adversary mines them.

Then we prove that any two heaviest non-pivot chains cannot diverge for too long to prove the common prefix property.  We do this by considering two non-pivot chains $C_1$ and $C_2$ (in one of the non-pivot block tree) that diverge for too long and consider the last common honest block $B$ of $C_1$ and $C_2$. \textbf{M2} ensures that the blocks arriving after $B$ should refer to a pivot-chain block with monotonically non-decreasing chain difficulty than the one referred by $B$. We also argue that the chain difficulty intervals covered by uniquely successful honest blocks (defined as honest blocks that are mined more than $\Delta$ rounds apart) in chains $C_1$ and $C_2$ do not overlap similar to the analysis for the common prefix in \cite{full2020}. To make $C_1$ and $C_2$ diverge, the adversary has to accumulate an enormous total difficulty compared to uniquely successful honest blocks. 

When the number of adversarial queries is high in the chains $C_1$ and $C_2$ after the block $B$, we bound the difficulty accumulated by the adversary via concentration. When it is low, the variance is high; we prove this by dividing the problem into 5 cases. Since the adversary cannot contribute an enormous total difficulty compared to uniquely successful honest blocks in one of the heaviest chains, the chain quality property also holds. The full proof can be found in Section \ref{sec:non-pivot}.

The last step of our proof is using the desired proprieties on each individual chain to show the security of the full parallel-chain protocol. Since each parallel-chain protocol has its own way of forming the transaction ledger, the proof also has to differ. By properly turning the concept of block level to block's chain difficulty ({\bf M3}), we make sure that our proof works out for all three protocols. We complete the proof of persistence and liveness for \scheme in Section \ref{sec:proof_prism} (and the flowchart of the proof sketch can be found in Figure \ref{fig:proof_sketch}), while the proof for \ohie can be found in Appendix \ref{app:ohie}. In addition, we define and prove block reward fairness of \fruitchains under a variable mining setting in Appendix \ref{app:fruit}.

\subsection{Definitions}
\label{sec:def}

\begin{table}[h]
\vspace{-0.2in}
\centering
\begin{tabular}{ |c l| }
\hline
    $m \in \mathbb{N}$ & number of voter/parallel chains in \schemenosp/\ohie \\
    $n_r$ & number of honest parties mining in round $r$ \\
    $t_r$ & number of corrupted parties mining in round $r$ \\
    $\delta$ & advantage of honest parties ($t_r < (1 - \delta)n_r$ for all $r$) \\
    $\Delta$ & network delay in rounds \\
    $\kappa$ & security parameter; length of the hash function output  \\
    $\Phi \in \mathbb{N}$ & the length of an epoch in number of blocks\\
    $\tau \geq 1$ & the dampening filter (Definition~\ref{def:targetrecalc})\\
    $(\gamma, s)$ & restrictions on the fluctuation of the number of \\
    & parties across rounds (Definition~\ref{def:respecting}) \\
    $f$ & expected mining rate in number of blocks per round \\
    $\epsilon$ & quality of concentration of random variables \\
    $\lambda$ & related to the properties of the protocol \\
    $\ell$ & minimum number of rounds for concentration bounds\\
    $\rmax$ & total number of rounds in the execution\\
 \hline
\end{tabular}
\caption{The parameters used in our analysis.}
\vspace{-0.3in}
\label{tbl:notation}
\end{table}

Let $T, \Lambda, \Phi$ and $n$ denote the target of a block, duration of an epoch, epoch length and number of honest parties respectfully. Throughout the analysis, the block difficulty of a block with target $T$ is set to be $ 1/T$. The chain difficulty of a chain is equal to the sum of all block difficulties that comprise the chain. 
The following is the target recalculation function for the pivot chain which is the same function used in Bitcoin.
\begin{defn}[from \cite{backbone}]
\label{def:targetrecalc}
Consider a pivot chain of $v$ blocks with timestamps $(r_1 \ldots r_v)$. For fixed constants $\kappa, \tau, \Phi, n_0$ the initial number of participants, $T_0$ the initial target, the target calculation function $\mathcal{T}: \mathbb{Z}^* \rightarrow \mathbb{R}$ is defined as

\begin{align*}
    \mathcal{T}(\emptyset) &= T_0,\\
    \mathcal{T}(r_1 \ldots r_v) &= \left\{ \begin{array}{ll}
         \frac{1}{\tau}T& \text{if } \frac{n_0}{n(T,\Lambda)}T_0 < \frac{1}{\tau}T  \\
         \tau T&  \text{if } \frac{n_0}{n(T,\Lambda)}T_0 > {\tau}T \\
         \frac{n_0}{n(T,\Lambda)}T_0 & \text{ otherwise}
    \end{array}\right.
\end{align*}
where $n(T,\Lambda) = 2^\kappa \Phi/T\Lambda $, with $\Lambda = r_{\Phi'} - r_{\Phi' -\Phi}$, $T = \mathcal{T}(r_1 \ldots r_{\Phi'-1})$, and $\Phi' = \Phi \lfloor v/\Phi \rfloor$.

\end{defn}

We now define a notion of ``good'' properties such as good round and good chain. These properties will bound the targets used by the honest parties, which will help us prove chain quality and common prefix. 
\begin{defn} [Good round, from \cite{full2020}]
Let $T^{min}_r$ and $T^{max}_r$ denote the minimum and the  maximum targets the $n_r$ honest parties are querying the oracle for in round $r$. Round r is good if $f/2\gamma^2 \leq pn_rT^{min}_r$ and $ pn_rT^{max}_r \leq (1 + \delta)\gamma^2f$.
\end{defn}
\begin{defn} [Good chain, from \cite{full2020}]
Round $r$ is a target-recalculation point of a pivot chain $\mathcal{C}$, if $\mathcal{C}$ has a block with timestamp $r$ and height a multiple of $\Phi$. A target-recalculation point $r$ is good if the target $T$ of the next block satisfies $f/2\gamma \leq pn_rT \leq (1+\delta)\gamma f$. A pivot chain $\mathcal{C}$ is good if all its target-recalculation points are good.
\end{defn}

We will use the superscript $P$ to denote the variables, blocks, chains and sets corresponding to the pivot chain/tree and $i$ to denote the ones of the $i^{th}$ non-pivot chain/tree.

At any round $r$ of an execution, the adversary may keep chains in private that have the potential to be adopted by an honest party (because the private chains are heavier than the heaviest chain adopted by the honest party). So, we expand our chains of interest beyond the chains that belong to an honest party. For every non-pivot tree and the pivot tree, we define a set of valid chains $\mathcal{S}_r^{P} $ and $ \mathcal{S}_r^{i}$ \cite{full2020} that include the chains that belong to or have the potential to be adopted by an honest party. 

We will be dealing with random variables to quantify the difficulty accumulated by the honest parties and the adversary in our analysis. At round $r$, define the real random variable $D^P_r$ equal to the sum of the difficulties of all pivot-chain blocks computed by honest parties. Also, define $Y^P_r$ to equal the maximum difficulty among all pivot-chain blocks computed by honest parties, and $Q^P_r$ to equal $Y^P_r$ when $D^P_u = 0$ for all $r < u < r + \Delta$ and 0 otherwise. We call an honest block uniquely successful if it is mined at round $r$ such that $Q_r>0$. Similarly define $D^i_r, Y^i_r$ and $Q^i_r$ for the $i$-th non-pivot chain ($1\leq i \leq m$ in \scheme and $1\leq i \leq m-1$ in \ohienosp). 
For a set of rounds $S$, we define $D^P(S) = \sum_{r\in S}D^P_r, Q^P(S)= \sum_{r\in S}Q^P_r$ and $D^i(S)= \sum_{r\in S}D^i_r, Q^i(S)= \sum_{r\in S}Q^i_r$ for all $i$.

Regarding the adversary, for a set of $J$ adversarial queries to the oracle, let $T(J)$ be target associated with the first query in $J$. Define the real random variable $A^P(J)$, as the sum of difficulties of all the adversarial blocks created during queries in $J$ with difficulty less than $\tau/T(J)$. For all $i$, define $A^i(J)$ as the sum of difficulties of all the adversarial blocks created during queries in $J$ with difficulty less than $b^i(J) = max_{j\in J}sup\{A^i_j - A^i_{j-1} | \mathcal{E}_{j-1} = E_{j-1}\}$, a function associated with the set of queries $J$ (defined according to Theorem 8.1 in \cite{dubhashi2009concentration}). $A^P_j$ is the difficulty of the pivot-chain block with difficulty at most $\tau/T(J)$ obtained at the $j^{th}$ query of $J$. $A^i_j$ is the difficulty of the block obtained at $j^{th}$ query of $J$ for non-pivot chain $i$.

Let $\mathcal{E}$ denote the entire execution and let $\mathcal{E}_r$ be the execution just before round $r+1$. 
To obtain meaningful concentration of our random variables, we should be considering a sufficiently long sequence of at least

\begin{align}
\label{eq:re1}
    \ell \triangleq \frac{4(1+3\epsilon)}{\epsilon^2f[1 - (1+\delta)\gamma^2f]^{\Delta+1}} \max\{\Delta,\tau\}\gamma^3 \lambda 
\end{align}
consecutive rounds. 

We require $\Phi$ the duration of an epoch to be large enough in order to obtain meaningful security bounds:
\begin{align}
\label{eq:re2}
 \Phi \geq  4(1+\delta)\gamma^2 f(\ell + 3\Delta)\epsilon.
\end{align}

In order for the proofs for the security analysis to work, the parameters of the protocol should satisfy the following conditions:
\begin{align}
\label{eq:re3}
    [1 - (1+\delta)\gamma^2f]^{\Delta} \geq 1-\epsilon, ~~8\epsilon \leq \delta \leq 1.
\end{align}
Note that Equations (\ref{eq:re2}) and (\ref{eq:re3}) can always be satisfied by setting $\Phi$ to be large enough and $f$ to be small enough. Also note that (\ref{eq:re2}) and (\ref{eq:re3}) are not tight bounds on the parameters and are just sufficient conditions for the analysis to work.

We now define what a typical execution, which will help us bound the random variables in our analysis.
\begin{defn}[Typical Execution]
\label{def:typicality}
For any set $S$ of at least $\ell$ consecutive good rounds, any set of $J$ consecutive adversarial queries and $\alpha(J) = 2(\frac{1}{\epsilon} + \frac{1}{3})\lambda/T(J)$, an execution $E$ is typical if 
\begin{align*}
    &(1-\epsilon)[1 - (1+\delta)\gamma^2 f]^\Delta pn(S) < Q^{i/P}(S)\leq D^{i/P}(S) < (1+\epsilon)pn(S),\\
    &A^P(J) < p|J| + \max\{\epsilon p|J|, \tau \alpha(J)\}, \\
    &A^i(J) < p|J| + \max\{\epsilon p|J|, b^i(J)\lambda(\frac{1}{\epsilon} + \frac{1}{3}) \},
\end{align*}
where $b^i(J) = max_{j\in J}sup\{A^i_j - A^i_{j-1} | \mathcal{E}_{j-1} = E_{j-1}\}$.
\end{defn}
We now show that a typical execution is a high-probability event.
\begin{thm}
\label{thm:votertypicality}
For an execution $\mathcal{E}$ of $\rmax$ rounds, in a $(\gamma,s)$-respecting environment, the probability of the event  ``$\mathcal{E}$ not typical'' is bounded by $\mathcal{O}(\rmax^2)e^{-\lambda}$.
\end{thm}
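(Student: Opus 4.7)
The plan is to establish each of the concentration inequalities that define typicality for a fixed interval $S$ (or a fixed query set $J$), and then take a union bound over all such intervals in $[0, \rmax]$. Since the total number of intervals is at most $\binom{\rmax}{2} = O(\rmax^2)$, any bound of the form $e^{-\Omega(\lambda)}$ per interval translates into the claimed $O(\rmax^2) e^{-\lambda}$ overall failure probability. So the real content is producing the per-interval tail bound with the stated exponent.

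For the honest random variables $D^{i/P}(S)$, I would view the sequence $\{D^{i/P}_r\}_r$ as a sum of bounded martingale differences with respect to the filtration $\{\mathcal{E}_r\}$: conditioned on $\mathcal{E}_{r-1}$, each honest query contributes either $0$ or its (bounded) difficulty depending on the random-oracle outcome, and in a good round the per-round mean is $p n_r$ up to the target bounds. A standard Chernoff/Azuma argument over $|S| \geq \ell$ good rounds gives the upper and lower tail; the length threshold $\ell$ from Equation~(\ref{eq:re1}) is exactly what is needed so that the deviation $\epsilon p n(S)$ dominates the variance $\gamma^3 p n(S) \max\{\Delta,\tau\}$ and the exponent is $\Omega(\lambda)$. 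For $Q^{i/P}(S)$ the summands are not independent because $Q^{i/P}_r$ depends on the event that no honest block is mined in $(r, r+\Delta)$; but standard trick from \cite{full2020} is to replace $Q$ by a sum of $\Delta$ interleaved subsequences, each of which is a sum of independent bounded variables, yielding the $[1-(1+\delta)\gamma^2 f]^\Delta$ factor in the lower bound as the probability that the $\Delta$-round ``silence'' window is honest-block-free.

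The adversarial variables $A^P(J)$ and $A^i(J)$ are the real source of technical care, because the adversary is adaptive and can correlate his queries with the history, and because block difficulties in a non-pivot chain can vary dramatically (the whole reason {\bf M2} is imposed). The approach is to build the Doob martingale $\{A^{P/i}_j - \mathbb{E}[A^{P/i}_j \mid \mathcal{E}_{j-1}]\}_j$ and apply Freedman/Bernstein-type concentration for martingales with bounded differences (e.g., Theorem 8.1 in \cite{dubhashi2009concentration}, which is exactly where the constant $b^i(J)$ is defined). For $A^P(J)$ the per-query increment is capped at $\tau/T(J)$ by the truncation rule built into its definition, so the Bernstein bound produces the additive $\tau \alpha(J)$ slack when $\epsilon p|J|$ is not the dominant term. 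For $A^i(J)$ the cap is the adaptively-defined $b^i(J)$, and the same Bernstein inequality yields the slack $b^i(J) \lambda(1/\epsilon + 1/3)$. The main obstacle here will be justifying that these truncation caps are the right ones in the face of variable-difficulty non-pivot mining: one has to argue that any adversarial query which could plausibly contribute to a heaviest non-pivot chain is indeed bounded by $b^i(J)$, which is ultimately a consequence of {\bf M2} and the good-chain analysis from \cite{full2020}.

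Finally, a union bound over the at most $\rmax^2$ choices of starting/ending round for $S$, and the at most $\rmax^2$ choices of first/last adversarial query index for $J$, multiplies each $e^{-\Omega(\lambda)}$ tail bound by $O(\rmax^2)$, yielding the stated $O(\rmax^2) e^{-\lambda}$ bound. The $\gamma^3$ and $\max\{\Delta,\tau\}$ factors in the definition of $\ell$ are precisely what is needed to absorb the $(\gamma,s)$-respecting fluctuations of $n_r$ and the worst-case per-query difficulty into the exponent, so that the effective Chernoff/Bernstein exponent is at least $\lambda$ whenever $|S| \geq \ell$ or $|J|$ is correspondingly large. I would not expect any genuinely new ideas beyond those in \cite{full2020}, but the non-pivot case needs the truncation-with-$b^i(J)$ viewpoint and a careful restatement in the variable-target setting.
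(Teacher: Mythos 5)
Your proposal follows essentially the same route as the paper's proof: the honest-party and pivot-chain bounds are inherited from \cite{full2020}, and the new content is precisely the bound on $A^i(J)$ via the Doob martingale $X_k = \sum_{j\in[k]} A_j - \sum_{j\in[k]}\mathbb{E}[A_j \mid \mathcal{E}_{j-1}]$ with increment cap $b^i(J)$ and variance proxy $v = b^i(J)p|J|$, to which the Freedman/Bernstein-type inequality (Theorem~\ref{thm:martingale}) is applied with $t = \max\{\epsilon p|J|,\, b^i(J)\lambda(1/\epsilon + 1/3)\}$, followed by the $\mathcal{O}(\rmax^2)$ union bound over interval endpoints. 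No substantive difference from the paper's argument.
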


The proof for Theorem ~\ref{thm:votertypicality} can be found in Appendix \ref{app:typicalproof}


\subsection{Non-pivot chain properties}
\label{sec:non-pivot}
Pivot chain behaves similar to the Bitcoin chain and its properties can be found in Appendix ~\ref{app:backbone}

Next we prove some desired properties for the non-pivot chains. 

\begin{lem}[Chain growth for non-pivot chain, from \cite{full2020}]
\label{lem:chaingrowth1}
Suppose that at round $u$ of an execution $E$, an honest party broadcasts a $i$-th non-pivot chain of difficulty $d$. Then, by round $v$, every honest party receives a chain of difficulty at least $d + Q^{i}(S)$, where $S = \{ r:u + \Delta \leq r \leq v - \Delta \}$.
\end{lem}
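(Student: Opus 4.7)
The plan is to prove this by a standard chain-growth induction driven by the uniquely successful rounds in $S$, i.e., those rounds $r \in S$ with $Q^i_r > 0$. Let $D(r)$ denote the largest value such that by round $r$ every honest party has received some $i$-th non-pivot chain of difficulty at least $D(r)$; note that $D$ is non-decreasing. By $\Delta$-synchrony, the chain of difficulty $d$ broadcast at round $u$ reaches every honest party by round $u + \Delta$, so $D(u + \Delta) \geq d$. The desired inequality $D(v) \geq d + Q^i(S)$ then follows by accumulating the contribution of each uniquely successful round in $S$.

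The key observation is that the definition of $Q^i_r$ forces successive uniquely successful rounds to be separated by at least $\Delta$: if $r$ is uniquely successful, no honest $i$-chain block is mined in the open interval $(r, r + \Delta)$, so the next uniquely successful round $r'$ satisfies $r' \geq r + \Delta$. Enumerating the uniquely successful rounds in $S$ as $r_1 < r_2 < \cdots < r_K$, I would prove by induction on $k$ that $D(r_k + \Delta) \geq d + \sum_{j \leq k} Q^i_{r_j}$. For the inductive step, the honest miner producing the block at round $r_k$ extends a chain of difficulty at least $D(r_k)$ (it holds the heaviest chain in its view, which is at least any chain all honest parties have received), and the appended block contributes $Y^i_{r_k} = Q^i_{r_k}$. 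Since $r_k$ is uniquely successful, this extended chain propagates unobstructed and is received by every honest party by round $r_k + \Delta$, so $D(r_k + \Delta) \geq D(r_k) + Q^i_{r_k}$. Combined with $r_k \geq r_{k-1} + \Delta$ and the inductive hypothesis $D(r_k) \geq D(r_{k-1} + \Delta) \geq d + \sum_{j < k} Q^i_{r_j}$, this closes the induction. Because $S$ forces $r_1 \geq u + \Delta$ and $r_K + \Delta \leq v$, we conclude $D(v) \geq D(r_K + \Delta) \geq d + Q^i(S)$.

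The one subtlety that distinguishes this from the fixed-difficulty chain-growth proof is verifying that the block produced by the honest miner at round $r_k$ is actually a valid extension of its heaviest $i$-th chain; in the variable-difficulty regime, a non-pivot block must satisfy the monotonicity condition \textbf{M2} with respect to the pivot parent of its predecessor. However, an honest miner always attaches the tip of its own heaviest pivot chain as the pivot parent, and that tip's chain difficulty is monotone non-decreasing in the miner's local view over time. Thus \textbf{M2} is automatically honored across successive honest extensions, and the argument from \cite{full2020} carries over with only cosmetic change, since the sum $Q^i(S)$ already absorbs the per-block difficulty variation intrinsic to the non-pivot setting.
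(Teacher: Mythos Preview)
Your proof is correct and follows exactly the approach the paper invokes: the paper states that the proof of this lemma is identical to that of the pivot-chain chain-growth lemma (Lemma~\ref{lem:chaingrowth}) from \cite{full2020}, and your induction over uniquely successful rounds is precisely that argument.

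One minor remark on your closing paragraph about \textbf{M2}: your justification (``the miner's heaviest pivot tip is monotone non-decreasing in its local view over time'') only directly covers the case where the voter parent was itself mined by an honest party. The voter parent at round $r_k$ may be adversarial, in which case the relevant observation is that the honest miner, in order to have validated and adopted that voter parent, must already hold its pivot parent in its view; hence the miner's current heaviest pivot tip has chain difficulty at least that of the voter parent's pivot parent, and \textbf{M2} is satisfied. This is a small refinement of your reasoning rather than a gap, and in any case the paper does not raise the point at all, treating the proof as literally identical to the pivot-chain case.
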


The proof of Lemma ~\ref{lem:chaingrowth1} is identical to Lemma \ref{lem:chaingrowth}.

At round $r$, to mine on a non-pivot chain block, an honest party picks a target from the tip of a pivot chain in $\mathcal{S}^P_r$ which has good targets at round $r$ because of Lemma \ref{lem:allgoodrounds}. So, as a consequence of \textbf{M1}, all the targets used by the honest parties on a non-pivot chain also satisfies $f/2\gamma^2 \leq pn_rT_r \leq f(1+\delta)\gamma^2$. 

\begin{lem}[Common prefix for non-pivot chains]
\label{lem:voter-prefix}
For a typical execution in a $(\gamma, 2(1 + \delta)\gamma^2 \Phi/f)$-respecting environment, each non-pivot chain satisfies the common-prefix property with parameter $\ell_{\rm cp} = \ell +2\Delta$.
\end{lem}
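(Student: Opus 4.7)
The plan is a proof by contradiction, mirroring the common-prefix argument for the pivot chain in \cite{full2020} but routed through the pivot chain by invoking \textbf{M1} and \textbf{M2} at the right moments. Suppose an honest party at round $r_1$ holds an $i$-th non-pivot chain $\C_1$ and another at round $r_2 \geq r_1$ holds $\C_2$ with $\C_1^{\lceil \ell_{\rm cp}} \not\prec \C_2$. Let $B^*$ be the head of $\C_1 \cap \C_2$, with timestamp $r^*$; by assumption $r^* \leq r_1 - \ell - 2\Delta - 1$, and the window $S = \{r^* + \Delta, \ldots, r_1 - \Delta\}$ therefore satisfies $|S| \geq \ell$. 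Chain growth (Lemma~\ref{lem:chaingrowth1}) applied at $B^*$ forces each of $\C_1, \C_2$ to accumulate chain difficulty at least $Q^i(S)$ past $B^*$.

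Next I would pin down the mining targets in the divergent segments. By \textbf{M1}, honest miners on the $i$-th tree derive their target from the tip of a pivot chain in $\mathcal{S}_r^P$, which under the $(\gamma, 2(1+\delta)\gamma^2\Phi/f)$-respecting hypothesis is a good chain throughout $S$ by the pivot-chain analysis summarized in Appendix~\ref{app:backbone}; hence $f/(2\gamma^2) \leq p n_r T_r \leq (1+\delta)\gamma^2 f$ for every honest target in $S$. By \textbf{M2}, every block on $\C_1$ or $\C_2$ past $B^*$ refers to a pivot-chain block of chain difficulty at least that referenced by $B^*$, so even adversarial targets past $B^*$ are confined within a factor $\tau$ of the good range. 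Using this, a standard argument shows that a uniquely successful honest block in $S$ (with $Q^i_r > 0$) cannot be replaced on the other branch by another uniquely successful honest block, because the earlier one reaches every honest party within $\Delta$ rounds and is incorporated into any honest extension. Hence every uniquely successful honest block in the divergent portions of $\C_1 \cup \C_2$ must be ``matched'' by an adversarial block of comparable difficulty, and a counting argument in the spirit of \cite{full2020} yields $A^i(J) \geq Q^i(S) \geq (1-\epsilon)[1-(1+\delta)\gamma^2 f]^\Delta p n(S)$ for the set $J$ of adversarial queries supporting the divergent portions.

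The main obstacle is the upper bound on $A^i(J)$, because non-pivot targets may jump abruptly between consecutive blocks. In the high-query regime the typicality bound $A^i(J) < (1+\epsilon) p|J|$ combines with the honest majority $t_r < (1-\delta) n_r$ and $8\epsilon \leq \delta$ to contradict the lower bound directly. The delicate case is the low-query regime, where the typicality bound is dominated by the high-variance term $b^i(J)\lambda\big(\tfrac{1}{\epsilon} + \tfrac{1}{3}\big)$; here \textbf{M2} is crucial in capping $b^i(J)$ by $\tau$ times the largest good target-inverse available in $S$, rather than letting the adversary inherit an arbitrarily stale pivot target. I would then split into the five subcases foreshadowed in Section~\ref{sec:sketch}, distinguishing whether $|J|$ is comparable to $\ell$ or to $\Phi$ and whether $b^i(J)\lambda(1/\epsilon + 1/3)$ is dominated by $\epsilon p|J|$ or not, and invoke the length condition (\ref{eq:re1}) on $\ell$ in each to force $A^i(J) < Q^i(S)$. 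The resulting contradiction rules out the divergence and establishes $\C_1^{\lceil \ell_{\rm cp}} \prec \C_2$.
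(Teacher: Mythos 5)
Your overall architecture matches the paper's: proof by contradiction anchored at the divergence point, chain growth to lower-bound the difficulty each branch must accumulate, typicality to upper-bound $A^i(J)$ split into a high-query regime (concentration) and a low-query regime (five cases), with \textbf{M2} used to control $b^i(J)$ via the referenced pivot chains. Two points, however, deviate from what actually makes the argument close. First, and most substantively, your counting step is stated too strongly: you claim the pairing of uniquely successful honest blocks with blocks on the other branch yields $A^i(J) \geq Q^i(S)$. That inequality does not follow, because the block matched to a uniquely successful honest block need not be adversarial --- it can be an honest block from a non-uniquely-successful round sitting on the other branch. The correct conclusion of the pairing argument (and the one the paper proves) is $2Q^i(S) \leq D^i(U) + A^i(J)$, and the contradiction then requires the \emph{additional} typicality bound $D^i(U) < (1+5\epsilon)Q^i(S)$ alongside $A^i(J) < (1-\delta+3\epsilon)Q^i(S)$, combined via $\delta \geq 8\epsilon$. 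As written, your ``high-query regime'' contradiction (comparing $A^i(J)$ directly against $Q^i(S)$) rests on a false intermediate inequality.

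Second, two smaller structural points. The window should be anchored at the timestamp $w$ of the \emph{last honest block} on the common prefix, not at the head of the intersection: the adversarial blocks between that honest block and the (possibly adversarial) head must be chargeable to the query set $J$, which your choice of $r^*$ does not guarantee. And the five-case analysis in the low-query regime is not organized by comparing $|J|$ to $\ell$ or $\Phi$; it is organized around the pivot chain $C^*$ that realizes $b^i(J)$ --- specifically whether the timestamp $x$ of its last honest block falls before or after $w$, how far before, and whether $C^*$ has at most one or at least two target-recalculation points after that block. This is exactly where the pivot/non-pivot coupling is resolved (e.g., when $C^*$ has two or more recalculation points past its last honest block, the adversary must itself have accumulated at least $b\Phi/\tau$ of pivot-chain difficulty, which is then played off against $Q^i$). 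Your sketch gestures at the right ingredients but does not identify the quantity that the cases actually discriminate on, so the low-query regime remains the unproved heart of the lemma.
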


The proof of Lemma ~\ref{lem:voter-prefix} is in Appendix ~\ref{app:proof}

\begin{lem}[Chain quality for non-pivot chains]
\label{lem:voter-quality}
For a typical execution in a $(\gamma, 2(1 + \delta)\gamma^2 \Phi/f)$-respecting environment, each non-pivot chain satisfies the chain-quality property with parameter $\ell_{\rm cq} = \ell +2\Delta$ and $\mu = \delta - 3\epsilon$.
\end{lem}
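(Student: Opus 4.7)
The plan is to mirror the chain-quality argument for the pivot chain from~\cite{full2020}, adapting it to the fact that targets on a non-pivot chain can vary block-to-block. Fix an honest party's non-pivot chain $\C$ at round $r$, a window $S_0 \subseteq [0,r]$ of at least $\ell_{\rm cq} = \ell + 2\Delta$ rounds, and let $d$ be the total difficulty of the segment $\C(S_0)$. Suppose for contradiction that the honest contribution to $\C(S_0)$ is strictly less than $(\delta - 3\epsilon) d$, so the adversarial contribution exceeds $(1 - \delta + 3\epsilon) d$.

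The key preliminary is that every block in $\C(S_0)$ has a target inside the good range. By \textbf{M1} and the already-established fact that honest parties always mine non-pivot blocks using a target read off a good pivot tip, every honest non-pivot block in $\C(S_0)$ satisfies $f/(2\gamma^2) \leq p n_r T \leq (1+\delta)\gamma^2 f$. By \textbf{M2}, every adversarial non-pivot block in $\C(S_0)$ must point to a pivot-chain block whose chain difficulty lies between the chain difficulties of the pivot references of the honest non-pivot blocks bracketing it; combined with the $(\gamma, 2(1+\delta)\gamma^2 \Phi/f)$-respecting assumption on the environment and the dampening factor $\tau$, this forces adversarial targets into the same good range up to a factor of $\tau \gamma^2$. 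This is what lets us control the bound $b^i(J)$ appearing in Definition~\ref{def:typicality}.

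Next, let $u_0$ be the timestamp of the last honest non-pivot block in $\C$ preceding $\C(S_0)$ (or $0$), and $u_1$ the timestamp of the first honest non-pivot block in $\C$ following $\C(S_0)$ (or $r$); set $U = [u_0, u_1]$, $S = [u_0 + \Delta, u_1 - \Delta]$, and let $J$ denote the set of adversarial oracle queries made during $U$. Since $|S_0| \geq \ell + 2\Delta$, we have $|S| \geq \ell$. Every uniquely successful honest round in $S$ deposits its block into $\C(S_0)$ by the standard uncontested-block argument for the heaviest-chain rule (no adversarial or late honest block can displace it within $\Delta$ rounds), so the total honest contribution to $\C(S_0)$ is at least $Q^i(S)$; every adversarial block in $\C(S_0)$ has difficulty at most $b^i(J)$, so the total adversarial contribution is at most $A^i(J)$.

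Invoking the typicality bounds $Q^i(S) \geq (1-\epsilon)[1-(1+\delta)\gamma^2 f]^{\Delta} p n(S)$ and $A^i(J) \leq p|J| + \max\{\epsilon p|J|, b^i(J)\lambda(1/\epsilon + 1/3)\}$, and using $|J| \leq \sum_{r \in U} t_r$ together with the honest-majority bound $t_r < (1-\delta) n_r$, a direct calculation in the style of the chain-quality step of \cite{full2020} contradicts the assumed adversarial excess $(1-\delta+3\epsilon)d$ and yields $\mu = \delta - 3\epsilon$. The main obstacle is the bound on $b^i(J)$: unlike in the pivot chain where the target is constant across an entire epoch, on a non-pivot chain the adversary could in principle mine with any target present anywhere in the pivot tree. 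Monotonicity (\textbf{M2}) is exactly what prevents this, pinning the adversary's target to the good range around the current pivot tip; from there the argument parallels the fixed-difficulty case with block counts replaced by sums of difficulties throughout.
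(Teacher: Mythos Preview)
Your overall skeleton matches the paper's: extend the segment to honest endpoints, set $U$, $S$, $J$ with $|S|\ge\ell$, and derive a contradiction between the adversary's difficulty in the chain and the bound $A^i(J) < (1-\delta+3\epsilon)Q^i(S)$. But two steps in your proposal are not right.

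First, the claim ``every uniquely successful honest round in $S$ deposits its block into $\C(S_0)$, so the honest contribution to $\C(S_0)$ is at least $Q^i(S)$'' is false. Unique success only guarantees \emph{chain growth}: the chain difficulty of any honestly held chain increases by at least $Q^i(S)$ over $S$. It does not guarantee that those particular honest blocks lie on $\C$; the growth could be realized by adversarial blocks covering the same difficulty range. The paper's argument uses this correctly: from chain growth one gets $d' \ge Q^i(S)$ for the enlarged segment, and then the adversarial difficulty in the segment is at least $d' - x$ where $x$ is the honest contribution. If $x < \mu d'$, then $A^i(J) \ge d' - x > (1-\mu)d' \ge (1-\mu)Q^i(S)$, which contradicts $A^i(J) < (1-\delta+3\epsilon)Q^i(S)$. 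You need the inequality to run through $d'$, not through a lower bound on honest contribution.

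Second, your justification of the crucial inequality $A^i(J) < (1-\delta+3\epsilon)Q^i(S)$ is too optimistic. You argue that \textbf{M2} sandwiches the adversarial block's pivot reference between those of the bracketing honest blocks and therefore ``forces adversarial targets into the same good range up to a factor of $\tau\gamma^2$.'' Monotonicity only gives a lower bound on the chain difficulty of the referenced pivot block; the adversary can still reference a \emph{private} pivot chain whose chain difficulty exceeds that of the preceding honest reference but whose next-block target is arbitrarily small (via many private target-recalculation points). So $b^i(J)$ is not bounded by goodness alone. In the paper this inequality is obtained by reusing verbatim the five-case analysis from the common-prefix proof (Lemma~\ref{lem:voter-prefix}), which separately handles the cases where the adversary's pivot reference has zero, one, or several recalculation points after its last honest ancestor, and where that ancestor is before or after the start of the window. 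Your sketch needs to invoke that analysis rather than a direct target-range argument.
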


The proof of Lemma~\ref{lem:voter-quality} is in Appendix~\ref{app:proof2}.

\subsection{Persistence and Liveness of \scheme}
\label{sec:proof_prism}

\begin{lem}[Leader sequence common prefix]
\label{lem:leader-prefix}
For a typical execution in a $(\gamma, 2(1 + \delta)\gamma^2 \Phi/f)$-respecting environment, the leader sequence satisfies the leader-sequence-common-prefix property with parameter $\ell_{\rm lscp} = 2\ell +4\Delta$.
\end{lem}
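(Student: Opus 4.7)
My plan is to reduce the leader-sequence common-prefix property to the chain-growth and common-prefix properties of each voter chain (Lemmas~\ref{lem:chaingrowth1} and~\ref{lem:voter-prefix}). Fix any $d^*\le d$: since the heaviest proposer chain of an honest party at any round $r\ge R_d$ has chain difficulty at least that of the block covering $d$, its ancestors together cover every level in $(0,d]$, so the first round $R_{d^*}$ at which some proposer block covering $d^*$ has been delivered to all honest parties satisfies $R_{d^*}\le R_d$. It therefore suffices to prove that for every $i$ the vote $v_i(d^*)$ is identical across all honest parties and constant in time from round $R_d+2\ell+4\Delta$ onwards: the plurality $\ell(d^*)$ is then frozen, hence so is the set $\{B_p:g(B_p)\le d\}=\bigcup_{d^*\le d}\{\ell(d^*)\}$ together with each of its grades, which yields $\LS_d(r)=\LS_d(\rmax)$.

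For the first $\ell+2\Delta$ rounds of the window I would invoke chain growth. After round $R_d+\Delta$ every honest miner's proposer parent covers chain difficulty at least $d$, so its outstanding $i$-th voter content -- following the sanitization rule of Section~\ref{sec:prism-full-cfm-rule} -- casts a vote whose interval contains $d^*$ unless a predecessor on its current heaviest voter chain already does. Applying Lemma~\ref{lem:chaingrowth1} with $u=R_d+\Delta$ and $v=R_d+\ell+3\Delta$, the intervening good set $S=[R_d+2\Delta,R_d+\ell+2\Delta]$ has length $\ell$, and by typicality $Q^i(S)>0$. Hence by round $R_d+\ell+3\Delta$ every honest party's heaviest $i$-th voter chain contains an honestly mined voter block with timestamp at most $R_d+\ell+2\Delta$ that casts (or is preceded by a block that casts) a sanitized vote whose interval contains $d^*$.

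For the next $\ell+2\Delta$ rounds of the window I would invoke the voter-chain common-prefix bound. By Lemma~\ref{lem:voter-prefix} with $\ell_{\mathrm{cp}}=\ell+2\Delta$, at any $r\ge R_d+2\ell+4\Delta$ the identified voter block has timestamp at most $r-(\ell+2\Delta)$ and therefore sits in the common prefix of every honest party's heaviest $i$-th voter chain, now and forever after. Consequently $v_i(d^*)$ is one fixed block for every $i\in\{1,\ldots,m\}$ and every $d^*\le d$ from round $R_d+2\ell+4\Delta$ on, so the plurality-based leader $\ell(d^*)$ is frozen; sorting the resulting set $\{\ell(d^*):d^*\le d\}$ by $g(\cdot)$ then yields the claimed equality of leader sequences.

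The step I expect to require the most care is showing, within the chain-growth stage, that the honest vote actually lands on the heaviest $i$-th voter chain rather than being displaced by an adversarial chain whose tip refers to an alternative proposer block covering $d^*$. The monotonicity condition \textbf{M2} is what disposes of this: every voter block (honest or adversarial) after $R_d$ must refer to a proposer block of chain difficulty no less than that referred to by its voter parent, so the adversary cannot revive an older, lighter proposer ancestor to manufacture a competing vote at $d^*$. Combined with the chain-quality bound of Lemma~\ref{lem:voter-quality} on the same window, this forces an honest vote at $d^*$ into the heaviest voter chain; the rest is interval bookkeeping on sanitized intervals and a plurality count.
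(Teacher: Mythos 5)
Your proposal is correct and follows essentially the same route as the paper: voter-chain chain quality (Lemma~\ref{lem:voter-quality}) places an honest voter block mined after $R_d$ --- which therefore votes, directly or via an ancestor, on every difficulty up to $d$ --- inside the window $(r-2\ell-4\Delta,\, r-\ell-2\Delta)$, and voter-chain common prefix (Lemma~\ref{lem:voter-prefix}) makes that block, and hence every vote $v_i(d^*)$ for $d^*\le d$, permanent. One caution: the intermediate step in which chain growth plus $Q^i(S)>0$ is claimed to put an honestly mined voter block on every honest party's heaviest chain is a non sequitur as stated (chain growth only lower-bounds accumulated difficulty, and that growth could be realized entirely by adversarial blocks on the adopted chain); as you yourself note in the final paragraph, it is chain quality, not chain growth, that forces an honest block into the heaviest voter chain, which is exactly the single ingredient the paper uses for this stage.
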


\begin{lem}[Leader sequence quality]
\label{lem:leader-quality}
For a typical execution in a $(\gamma, 2(1 + \delta)\gamma^2 \Phi/f)$-respecting environment, the leader sequence satisfies the leader-sequence-quality property with parameter $\ell_{\rm lsq} = \ell +2\Delta$ and $\mu = \delta - 3\epsilon$.
\end{lem}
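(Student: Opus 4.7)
The plan is to transfer the $(\delta-3\epsilon)$ chain-quality of each individual voter chain (Lemma~\ref{lem:voter-quality}) into a lower bound on the fraction of the proposer-chain difficulty range $D$ on which the leader block $\ell(d)$ is honest. The key structural fact, enabled by \textbf{M1}, is that the targets used on the voter chains are dictated by the heaviest proposer chain, so every honest query in the round window $S_0$ of length $\ell+2\Delta$ uses a ``good'' target controlled by the pivot-chain results inherited from \cite{full2020}. This control aligns the total voter-chain difficulty accumulated in $S_0$ with the proposer-chain difficulty $D$ up to $(1\pm\epsilon)$ factors, which is needed to translate per-chain bounds back to a bound on $D$.

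First, I would apply Lemma~\ref{lem:voter-quality} to each of the $m$ voter chains over $S_0$, concluding that on each voter chain $i$, honest voter blocks in $S_0$ contribute at least a $(\delta-3\epsilon)$-fraction of that chain's difficulty. Let $h(d)$ denote the number of voter chains whose sanitized vote $v_i(d)$ at difficulty $d$ is cast by an honest voter block; summing the per-chain bound (after re-expressing voter-chain difficulty in $S_0$ against $D$) yields the integral lower bound $\int_D h(d)\,dd \geq (\delta-3\epsilon)\cdot m\cdot D$ up to $O(\epsilon)$ slack. Next, by the common prefix property of the proposer chain (from the pivot-chain analysis of \cite{full2020}), together with \textbf{M2} (which forbids adversarial voter blocks from referencing stale high-difficulty proposer blocks) and \textbf{M3} (which casts votes for difficulty intervals rather than block levels), every honest voter-block vote at a fixed $d$ is for the \emph{same} honest proposer block $B_p(d)$. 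Thus $B_p(d)$ receives $h(d)$ votes while the adversary can place at most $m-h(d)$ votes on any single competing block, so $\ell(d)=B_p(d)$ whenever $h(d) > m-h(d)$. A final averaging step, charging the budget of $m-h(d)$ adversarial votes at difficulties where the adversary wins against the per-chain chain-quality reserve and using the integral bound above, then shows that the set $\{d \in D : \ell(d)\text{ adversarial}\}$ has length at most $\bigl(1-(\delta-3\epsilon)\bigr)D$, concluding the lemma.

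The principal obstacle is the coupling between voter-chain and proposer-chain difficulty: $S_0$ is a round interval while $D$ and the voter-chain difficulties in $S_0$ are only loosely tied, because voter-chain target updates lag the proposer chain's target across epoch boundaries and the adversary's blocks (subject only to \textbf{M2}) may further perturb this alignment. Tracking this mismatch through the good-round/good-chain framework of Section~\ref{sec:def} without eroding the leading $(\delta-3\epsilon)$ constant is where most of the bookkeeping will go. A secondary subtlety is that for $d$ corresponding to proposer blocks very near the tip, proposer-chain common prefix may not yet have kicked in; one must verify that the $\Delta$-round buffer built into $\ell_{\rm lsq}=\ell+2\Delta$ suffices to make all honest voters' views of the relevant proposer blocks coincide, so that honest votes at $d$ genuinely aggregate onto a single $B_p(d)$ before the plurality step is invoked.
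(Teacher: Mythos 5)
Your strategy—pushing the $(\delta-3\epsilon)$ chain quality of the $m$ voter chains through the vote-counting and plurality rule—has two gaps that I do not think can be repaired. First, honest voter blocks do not, in general, vote for \emph{honest} proposer blocks: by the block proposal rule an honest voter votes for whichever proposer blocks lie on the heaviest proposer chain in its view, and those may well be adversarial. So even if every sanitized vote $v_i(d)$ at a difficulty $d$ is cast by an honest voter block, the leader $\ell(d)$ can still be adversarial; chain quality of the voter chains says nothing about the honesty of the proposer blocks being voted for, which is what Definition~\ref{def:leader-quality} is about. Second, even granting that honest voters at a fixed $d$ all converge on one honest candidate, your condition $h(d)>m-h(d)$ is a majority condition (the adversary can concentrate all of its $m-h(d)$ votes on a single competing block), while the integral bound $\int_D h(d)\,dd\geq(\delta-3\epsilon)m\,|D|$ is consistent with $h(d)=(\delta-3\epsilon)m$ for \emph{every} $d$; for $\delta<1/2+3\epsilon$ this never triggers the majority condition, so the final averaging step cannot yield a $(\delta-3\epsilon)$ fraction of $D$. (A further unresolved issue, which you flag yourself, is that the per-chain quality is measured in voter-chain difficulty over a round window, not in the measure of proposer difficulties $d$ at which the deciding vote is honest; these are not interchangeable.)

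The paper's proof takes a different and much shorter route that never touches the voter chains: leader-sequence quality is established purely by difficulty accounting on the proposer (pivot) tree. With $D=(d',d]$ the range covered in the last $\ell+2\Delta$ rounds, one defines $d^*$ as the largest difficulty $\tilde d\leq d'$ whose \emph{first} covering proposer block is honest, and lets $U$ run from the round $r^*$ at which that block was mined to the current round $r$. Two observations follow: every difficulty in $(d^*,d']$ is also covered by at least one adversarial proposer block, and every proposer block covering $(d^*,d]$ was mined in $U$. If honest leader blocks covered less than $\mu(d-d')$ of $D$, then the adversarial leader blocks over $(d^*,d]$ together with the adversarial non-leader blocks that cover the honestly-led portion $L_h'$ of $(d^*,d']$ force $A^P(J)\geq d-d^*-L_h>(1-\mu)(d-d^*)\geq(1-\mu)Q^P(S)$, contradicting Lemma~\ref{lem:proptypicalrel} for $\mu=\delta-3\epsilon$. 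The lesson is that whichever proposer block wins the election at difficulty $d$ is still a proposer block covering $d$ and mined inside $U$, so the adversary must pay in proposer-chain difficulty for every difficulty level at which it either wins the election or merely pre-empts an honest first-coverer—no analysis of the votes themselves is needed.
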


The proofs of Lemma~\ref{lem:leader-prefix} and Lemma~\ref{lem:leader-quality} are in Appendix~\ref{app:proof3}.

\begin{thm}[Persistence and liveness of \schemenosp]
\label{thm:prism}
For a typical execution in a $(\gamma, 2(1 + \delta)\gamma^2 \Phi/f)$-respecting environment, \scheme satisfies persistence and liveness with parameter $u = \frac{4(1+\epsilon)\gamma^2(\ell + 2\Delta)}{(\delta - 3\epsilon)(1-\epsilon)^2}$.
\end{thm}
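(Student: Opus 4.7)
The plan is to reduce persistence and liveness of the Prism ledger to the two leader-sequence properties established in Lemma~\ref{lem:leader-prefix} and Lemma~\ref{lem:leader-quality}, applied on top of the pivot-chain common-prefix, chain-growth and chain-quality lemmas from Appendix~\ref{app:backbone}. Throughout, I would condition on the execution being typical (which occurs with overwhelming probability by Theorem~\ref{thm:votertypicality}) and on the $(\gamma, 2(1+\delta)\gamma^2\Phi/f)$-respecting environment assumption so that every round in the windows we consider is good and the honest difficulty rate per round is pinched between $f/(2\gamma^2)$ and $(1+\delta)\gamma^2 f$.

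For persistence, I would first make the confirmation rule explicit: an honest node includes proposer blocks with grade at most $d$ in its confirmed ledger only once $\ell_{\rm lscp} = 2\ell + 4\Delta$ rounds have elapsed since the first proposer block covering difficulty $d$ was received by all honest parties. Suppose $u_1$ confirms $L_1$ at round $r_1$ and $u_2$ confirms $L_2$ at round $r_2 \geq r_1 + \Delta$. Let $d^\star$ be the largest grade appearing in $L_1$. The confirmation rule and Lemma~\ref{lem:leader-prefix} immediately give $\texttt{LedSeq}_{d^\star}(r_1) = \texttt{LedSeq}_{d^\star}(\rmax)$, and the same argument for $u_2$ together with $r_2 \geq r_1$ shows that the leader sequence up to $d^\star$ as seen by $u_2$ must equal this final leader sequence, hence $L_1 \prec L_2$.

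For liveness, consider a transaction \textsf{tx} that has been held by every honest node for more than $u$ rounds. Any honest proposer block mined after \textsf{tx} arrived contains it. By chain-growth plus typicality, the pivot chain accumulates honest difficulty at rate at least $(1-\epsilon)[1-(1+\delta)\gamma^2 f]^\Delta p n_r \geq (1-\epsilon)^2 f/(2\gamma^2)$ per round during any window of $\ell_{\rm lsq} = \ell + 2\Delta$ good rounds. Lemma~\ref{lem:leader-quality} then guarantees that in any such window, honest leader blocks cover at least a $\mu = \delta - 3\epsilon$ fraction of the pivot-chain difficulty added in that window. Converting difficulty back to rounds and accounting for the $(1+\epsilon)$ upper-bound on per-round difficulty (so that $d$ units of difficulty span at most $d / \big((1-\epsilon)^2 f/(2\gamma^2)\big)$ rounds) shows that within $(\ell + 2\Delta) \cdot 2(1+\epsilon)\gamma^2/((1-\epsilon)^2(\delta-3\epsilon))$ rounds at least one honest proposer block $B_p$ containing \textsf{tx} appears as the leader of some difficulty level. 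A further $\ell_{\rm lscp} = 2\ell + 4\Delta$ rounds lock in the leader sequence up through $g(B_p)$ by Lemma~\ref{lem:leader-prefix}, after which every honest node places \textsf{tx} in the same position of its confirmed ledger. Summing the two contributions and bounding yields $u = \frac{4(1+\epsilon)\gamma^2(\ell+2\Delta)}{(\delta-3\epsilon)(1-\epsilon)^2}$.

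The main obstacle is not the logical chain of implications, which is standard in this line of work, but rather correctly combining the difficulty-to-round conversions so that all slack factors ($\gamma^2$ from the respecting environment, $(1\pm\epsilon)$ from typicality, $(\delta-3\epsilon)$ from chain quality) collapse into the single constant $u$. In particular, I expect the trickiest step to be verifying that both the waiting time for an honest leader block containing \textsf{tx} and the subsequent $\ell_{\rm lscp}$ stabilization delay can be jointly bounded by the stated $u$ without an additional additive term, which requires carefully observing that $\ell_{\rm lscp} = 2\ell_{\rm lsq}$ and exploiting the factor $4$ in the numerator to absorb both waits.
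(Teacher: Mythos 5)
Your persistence argument matches the paper's (which simply identifies persistence with the leader-sequence common-prefix property of Lemma~\ref{lem:leader-prefix}), and your constant arithmetic does collapse into the stated $u$. The problem is in the logical structure of your liveness argument, specifically phase two. You wait until an honest block $B_p$ containing \textsf{tx} ``appears as the leader of some difficulty level'' $d=g(B_p)$, and then wait a further $\ell_{\rm lscp}$ rounds to ``lock in'' the leader sequence up through $d$. But Lemma~\ref{lem:leader-prefix} only says that after round $R_d+\ell_{\rm lscp}$ the leader at level $d$ equals its \emph{final} value $\texttt{LedSeq}_d(\rmax)$; it does not say that the final value is whatever block happened to be the plurality winner at the moment $B_p$ first became the current leader. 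Between the end of your phase one and round $R_d+\ell_{\rm lscp}$, the votes $\{v_i(d)\}$ can still shift as voter chains reorganize within their unstable suffixes, so the level $d$ could stabilize on a different, possibly adversarial, block that does not contain \textsf{tx}. Your two-phase decomposition therefore certifies only that \emph{some} block becomes the permanent leader at level $d$, not that an honest one does.

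The paper closes this hole with a single-window counting argument at round $r \geq r_0+u$ rather than a sequential wait. Taking $S_1=\{r_0,\dots,r\}$ and $S_2$ the last $2\ell+4\Delta$ rounds, Lemma~\ref{lem:leader-quality} gives that honest leader blocks \emph{in the view at round $r$} cover difficulty at least $(\delta-3\epsilon)(1-\epsilon)^2 p n_r u/\gamma$ over $S_1$, while every not-yet-permanent difficulty level is covered by some block mined in $S_2$, and those cover at most $D^P(S_2)+A^P(S_2) < 4(1+\epsilon)p\gamma n_r(\ell+2\Delta)$. The choice of $u$ makes the first quantity exceed the second, so at least one honest leader block was mined in $S_1\setminus S_2$ and hence, by Lemma~\ref{lem:leader-prefix}, is \emph{already} permanent at round $r$. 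This is where the factor $4(1+\epsilon)\gamma^2$ actually comes from (the $\gamma^2$ is the ratio of the upper bound $D^P(S_2)+A^P(S_2)\le 2(1+\epsilon)p\gamma n_r|S_2|$ to the lower bound $Q^P(S_1)\ge(1-\epsilon)^2 p n_r u/\gamma$), not from absorbing a separate $\ell_{\rm lscp}$ stabilization delay. Also a minor point: an honest proposer block mined after $r_0$ need not itself contain \textsf{tx}; the paper notes that either $B$ or a proposer block referred to by $B$ contains it. To repair your proof you should replace the sequential two-phase wait with this simultaneous counting at round $r$.
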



\begin{proof}
By our definition, the persistence of \scheme is equivalent to the leader sequence common prefix property proved in Lemma~\ref{lem:leader-prefix}.

We next prove the liveness property. Suppose a transaction {\sf tx} is received by all honest nodes before or at round $r_0$. Let $r \geq r_0 + u$ be current time and we shall prove that {\sf tx} is contained in the permanent leader sequence of all honest nodes at round $r$. As shown in Figure \ref{fig:prismliveness}, let $S_1 = \{r_0, \cdots, r\}$, $S_2 = \{r-2\ell-4\Delta, \cdots, r\}$, and $J$ be the adversarial queries in $S_2$. By Lemma \ref{lem:leader-prefix}, for a difficulty level $d$, if $d$ is covered by an honest block mined in $S1 \setminus S2$, then the block covering $d$ will be permanent in the leader sequence at round $r$. We know that the difficulty level grows at least $Q^P(S_1) \geq (1-\epsilon)^2p n_r u/\gamma$ in $S_1$. By Lemma \ref{lem:leader-quality}, we have that among the chain growth in $S_1$, different difficulty levels with size at least $(\delta - 3\epsilon)(1-\epsilon)^2p n_r u/\gamma$ is covered by honest leader blocks (which may not be permanent at round $r$). On the other hand, the proposer blocks that are not permanent (mined in $S_2$) cover different difficulty levels with size at most 
\begin{figure}
    \centering
    \includegraphics[width=7cm]{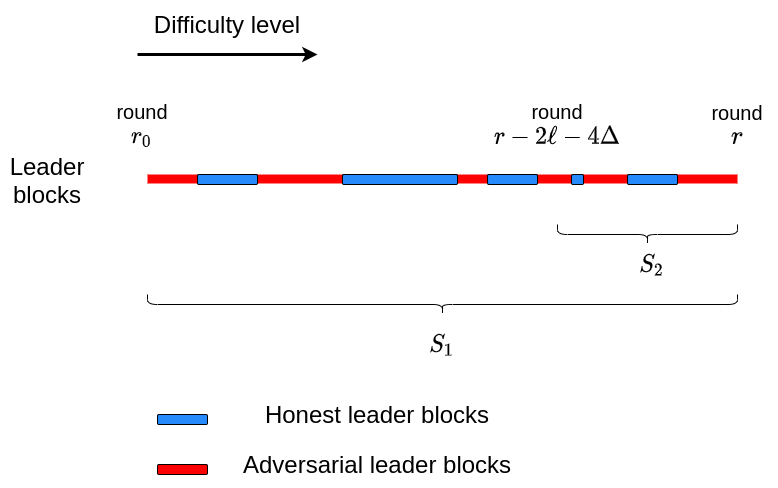}
    \vspace{-0.2in}
    \caption{The leader blocks at each difficulty level in the proposer tree.}
    \vspace{-0.2in}
    \label{fig:prismliveness}
\end{figure}
\begin{align*}
    D^P(S_2) + A^P(S_2) &< 2D^P(S_2) \leq 2(1+\epsilon)p\gamma n_r(2\ell+4\Delta) \\
    &= 4(1+\epsilon)p\gamma n_r(\ell+2\Delta).
\end{align*}
Hence at least one honest proposer block $B$ mined after $r_0$ is permanent in the leader sequence at round $r$. Since either $B$ or some proposer block referred by $B$ will contain $\sf tx$, in both case we can conclude the proof.

\end{proof}

\section{Evaluation}

In our evaluation, we  answer the following questions. 
\begin{itemize}
    \item Is the proposed scheme effective in matching the mining difficulty and the miner hash power?
    \item Does the blockchain forking rate remain low under our scheme, even with changing miner hash power?
    \item Does our scheme ensure that non-pivot chains adopt the difficulty of pivot chains, even with presence of the adversary?
    \item Does our scheme cause major computation and communication overhead when applied? 
\end{itemize}

\begin{figure*}[ht]
\minipage[t]{0.32\textwidth}
  \centering
    \includegraphics[width=\textwidth]{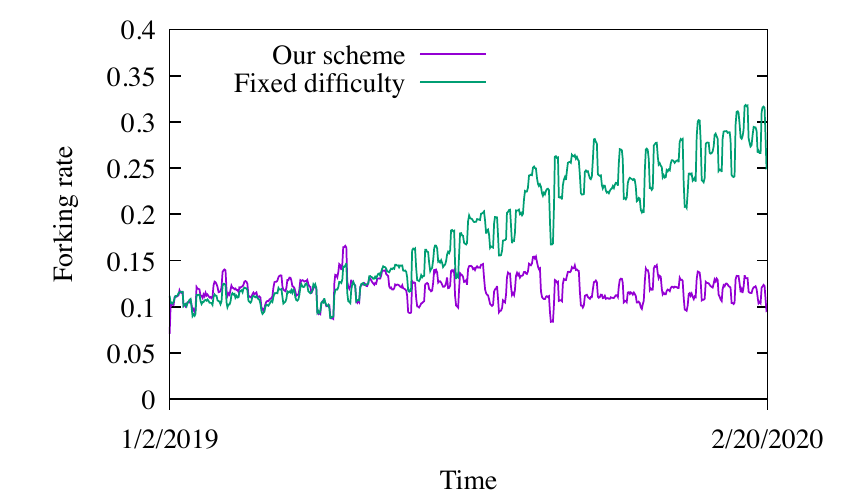}
    \vspace{-0.25in}
    \caption{Forking rate of all parallel chains in two simulations, one using our scheme and one using fixed difficulty.}
    \label{fig:health}
\endminipage\hfill
\minipage[t]{0.32\textwidth}
   \centering
    \includegraphics[width=\textwidth]{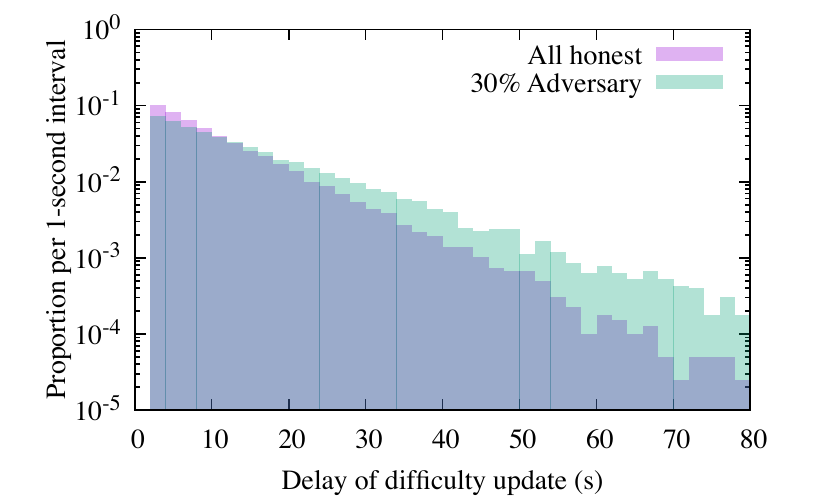}
    \vspace{-0.25in}
    \caption{Frequency histogram of the delay where non-pivot chains update their difficulty to follow that of the pivot chain. Note the y-axis is log scale.}
    \label{fig:delay}
\endminipage\hfill
\minipage[t]{0.32\textwidth}%
  \centering
    \includegraphics[width=\textwidth]{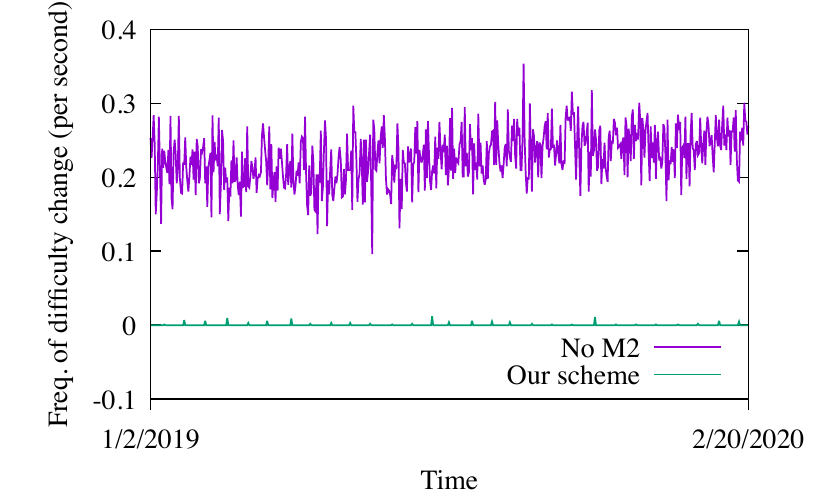}
    \vspace{-0.25in}
    \caption{Frequency of difficulty change on a non-pivot chain where 30\% of miner power is adversarial. }
    \label{fig:advdiff}
\endminipage
\end{figure*}

\subsection{Experimental Setup}
\label{sec:exp-setup}

\noindent\textbf{Simulator.} To evaluate our scheme, we build a mining simulator for parallel-chain protocols in Golang. The simulator uses a round-by-round model with an adjustable round interval. In each round, blocks are mined on each of the parallel chains, and the number of blocks mined is determined by drawing from independent Poisson random variables with mean set to the product of the round interval and the per-chain mining rate. Miners receive newly-mined blocks after an adjustable network latency.

\noindent\textbf{Simulated protocol.} Our simulator does not consider the \textit{interpretation} of the chains, such as transaction confirmation, ledger formation, etc. We only simulate the mining process. As a result, our evaluation is not tied to any particular protocol. Meanwhile, it is meaningful broadly to all PoW parallel-chain protocols, because they share this mining process.

There are 1 pivot chain and 1000 non-pivot chains. We simulate PoW mining on each of the chains at the same mining rate $f$. Each pivot-chain block contains its timestamp, difficulty, and parent. Each non-pivot-chain block also contains all these fields, plus a reference to a pivot-chain block (\textbf{M1}). We simulate two parties of miners: honest and adversary. Honest miners follow the general methodology described in section~\ref{sec:intro} by always referring to the best block in the pivot chain. They enforce the rules \textbf{M1}, \textbf{M2} by rejecting any non-compliant block . We design different adversarial miners to simulate attacks, and we provide more details later in Section \ref{sec:exp-attack}.

\noindent\textbf{Parameters.} The round interval and the network latency are set to 2 seconds according to data collected in large-scale experiments of \scheme \cite{prism-system}. The target mining rate $f$ is set to 0.1 block per second per chain according to \cite{prism-system}. The epoch length $\Phi$ is set to 2016 blocks, and the dampening filter $\tau$ is set to 4 according to \bitcoin. We replay the historical \bitcoin mining power data  \cite{hashratedata} during the simulation.

\subsection{Adaptation to Changing Miner Power}

The main purpose of our scheme is to ensure the mining difficulty
adapts to changing mining power. To show that, we simulate our scheme while varying the mining power according to the historical \bitcoin miner hash rate trace from Jan 2, 2019 to Feb 20, 2020. Figure ~\ref{fig:normal-trace} shows that even though the miner hash power has tripled during the simulated period, the mining difficulty of every chain keeps tracking the miner hash power very closely. Also, at any point in time, the max and min difficulty of all chains are very close. This demonstrates that the mining difficulty of all chains are always closely coupled, and no single chain experiences unstable difficulty or vulnerability.

\begin{figure}
    \centering
    \vspace{-6mm}
    \includegraphics[width=0.43\textwidth]{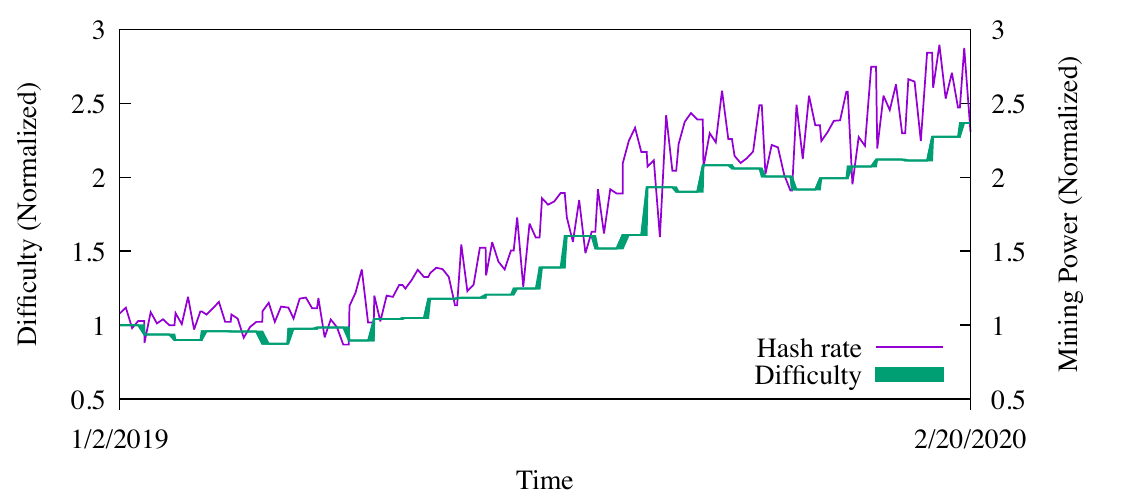}
    \vspace{-5mm}
    \caption{Miner hash power and mining difficulty of each chain when simulating our scheme over the historical \bitcoin miner power trace. Difficulty is plotted as a region to show the max and min difficulty across all chains. Both metrics are normalized over their initial values.}
    \label{fig:normal-trace}
    \vspace{-6mm}
\end{figure}

As mentioned in Section~\ref{sec:intro}, support for variable miner power is crucial to keeping the blockchain secure. If the miner hash power increases while the mining difficulty stays the same, the forking rate will increase due to decreased block inter-arrival time. To show our scheme is effective in keeping the blockchain secure, we compare the forking rate of two simulations: one using our scheme and one using a fixed mining difficulty. We use the same \bitcoin mining power data as in the previous experiment, and Figure ~\ref{fig:health} shows the results. Here, we report the forking rate as the ratio of the number of blocks not on the longest chain, to the number of blocks on the longest chain. If a fixed difficulty is used, the forking rate quickly increases as the miner power increases, to almost tripling towards the end of the simulation. In comparison, our scheme keeps the forking rate low across all parallel chains for the whole simulation. This is because the mining difficulty and the miner hash power are closely matched under our scheme, so the block mining rate stays at a safe level.

\subsection{Difficulty Update on Non-pivot Chains}

\label{sec:exp-attack}

One major challenge in designing our scheme is to ensure non-pivot chains adopt the pivot chain difficulty quickly after a new epoch begins, and we achieve it with the \textbf{M2} (Monotonicity, cf. Section~\ref{sec:intro}). To show that adversarial miners cannot delay this process, we simulate our scheme where 30\% of miners are adversarial. Adversarial miners do not voluntarily refer to the latest block on the pivot chain after a new epoch begins, but rather try to stay in the previous epoch (and mining difficulty) for as long as possible. We also simulate an all-honest scenario for comparison. We measure how soon non-pivot chains adopt new difficulty by tracking the delay from the last block of the previous epoch on the pivot chain to the first block of the new epoch of the non-pivot chain. Figure~\ref{fig:delay} shows the results. In either scenario, the difficulty of non-pivot chains is updated within 1--5 block intervals (0--50 seconds in real time). Although adversarial presence does delay the update of difficulty, the delay is not significant. This demonstrates that our mechanism ensures in-time update of non-pivot-chain difficulty.

We demonstrate that \textbf{M2} is essential to ensuring the mining difficulty does not vary too frequently on non-pivot chains. We compare two simulations where 30\% of miner power is adversarial. In one case, we apply our full scheme. In the other case, we disable \textbf{M2} so that the adversary is free to choose whatever block on the pivot chain to refer when mining non-pivot chain blocks. Specifically, the adversary always tries to mine blocks with the lowest difficulty possible by referring to the genesis pivot-chain block. We focus on one non-pivot chain, and track the \textit{frequency} of difficulty change. Difficulty change is defined as a block on the longest chain having different difficulty than its parent. Figure~\ref{fig:advdiff} shows the results. Under our scheme, non-pivot chain difficulty does not change for most of the time, and only changes swiftly at the beginning of new epochs, so the curve for our scheme stays close to zero. On the contrary, if we disable \textbf{M2}, the difficulty oscillates violently, as frequently as 0.2 times per second on average. This shows that our design is essential to maintain stable mining difficulty of non-pivot chains.

\subsection{Analysis of Overhead}
\label{sec:overhead}

Finally, we analyze and show that our schemes will cause minimal overhead when implemented on existing parallel-chain protocols. 

\noindent\textbf{Communication and storage.} Every block on the non-pivot chains needs to refer to a block on the pivot chain (\textbf{M1}), which takes the size of a hash (usually 32 bytes). This is a very small overhead compared to the size of the blockchain. For example, in \schemenosp, the size of a voter (non-pivot-chain) block is 534 bytes \cite{prism-system}. The pivot-chain reference constitutes to an increase of 6\% in communication and storage cost for voter blocks. Notice
that voter blocks themselves only make up for 0.21\% of the size of the Prism blockchain~\cite{prism-system}, so the overhead of pivot-chain referencing is negligible, regardless of the parameters.

\noindent\textbf{Computation.} Our scheme changes the mining and the transaction confirmation process of parallel chain protocols. For mining, notice that the pivot chain follows the same difficulty adjustment rule as \bitcoinnosp, which is proven practical by its real-world deployment. Mining on non-pivot chains uses the same difficulty as the pivot chain, so there is no additional bookkeeping.  

For transaction confirmation, we use \scheme as a concrete example (note that no computation overhead exists in transaction confirmation for \ohie and \fruitchainsnosp). 
Under static difficulty, \scheme selects a leader for every \textit{level} of the proposer tree.
With \textbf{M3}, we partition the proposer tree into real-valued
difficulty \textit{intervals} such that no interval is \textit{partially} occupied by any proposer block. We need to select a leader for each of such \textit{intervals} (section \ref{sec:prism-full-cfm-rule}). To determine the overhead, we need to answer: how many more intervals are there compared to levels?

We simulate the mining process of Prism with 1000 voter chains, epoch length $\Phi = 2016$ blocks, target mining rate $f=0.1$ block per second, and found the number of intervals is only 0.12\% more than the number of levels. That is, our scheme incurs a confirmation overhead of 0.12\%. This is expected, because only forks that happen at the beginning of an epoch will lead to extra intervals, and such a fork rarely exists with $\Phi=2016$ and $f=0.1$. Decreasing $\Phi$ may cause the overhead to increase because there are more epochs and it is more likely to fork at the beginning of an epoch. Table \ref{table:confirmation-overhead} plots the confirmation overhead for different $\Phi$; we see that  even at $\Phi=10$, the overhead is smaller than 1\%.

\begin{table}[t]
	\centering
	\caption{Confirmation overhead vs epoch length $\Phi$}
	\vspace{-0.1in}
	\begin{tabular}{  c || c | c | c | c } 
	 \hline
     $\Phi$ & 10  & 100 & 1000 & 2016 \\
	 \hline
	 Overhead & 0.43\% & 0.07\% & 0.11\% & 0.12\% \\ 
	 \hline
	\end{tabular}
	\vspace{-0.2in}
	\label{table:confirmation-overhead}
\end{table}

\section{Discussion}
We presented a general methodology by which any parallel chain protocol can be converted from the fixed difficulty to the variable difficulty setting. We also proved the safety, liveness, and performance of the proposed scheme using novel proof method that analyzes the coupling between the pivot and non-pivot chains. There are several open directions of research. 1) In our design methodology, we proposed using a single chain as a pivot chain to set the difficulty target for all blocks. However, if we can use the information (for example, inter-block arrival times) from all the chains together to determine the difficulty target, we can get much better statistical averaging. This can lead to protocols which can adapt to much more aggressive mining power variation than is possible with a single-chain protocol. Such a protocol needs to be designed with care since it leads to strong coupling across all the chains. In particular, every chain needs to know the state of all other chains in order to check the correctness of the difficulty target. Since other chains can have forking in the meanwhile, it may lead to unintended complex interactions. 
2) We analyzed various protocols under the variable difficulty setting. One new protocol, called Ledger-combiners \cite{fitzi2020ledger} uses parallel-chains for robustly combining multiple ledgers as well as for achieving low latency. Analyzing that protocol in the variable difficulty setting is an interesting direction for future work.

\section{Acknowledgements}
This research is supported in part by a gift from IOHK Inc., an Army Research Office grant W911NF1810332 and by the National Science Foundation under grants CCF 17-05007 and CCF 19-00636.

%
%
%


\bibliographystyle{plain}
\bibliography{reference}

\appendices
\section*{Appendix}
\section{The Difficulty Raising Attack}
\label{app:attack}

\bitcoin set its target recalculation using a ``dampening filter''-like adjustment (as defined in Definition~\ref{def:targetrecalc}). It turns out that this design is surprisingly foresighted. If we make a relaxation of the adjustment mechanism by removing the dampening filter, then it is subject to an attack called difficulty raising attack firstly discovered in \cite{bahack2013theoretical}. At a high level, in this attack the adversary mines private blocks with timestamps in rapid succession, and induce one block with arbitrarily high difficulty in the private chain; via an anti-concentration argument, a sudden adversarial advance that can break agreement amongst honest parties cannot be ruled out. In this appendix, we describe this attack in detail and explain why having a ``dampening filter'' in the target recalculation function could resolve it.

\noindent {\bf A simple attack.} As a prelim, we first look at a simple attack if the protocol lets miners to choose their own difficulty and use the heaviest chain rule. At a first glance, this rule appears kosher - the heaviest chain rule seems to afford no advantage to any miner to manipulate their difficulty. However, this lack of advantage only holds in expectation, and the variance created by extremely difficult adversarial blocks can thwart a confirmation rule that confirms deeply-embedded blocks, no matter how deep, with non-negligible probability. We give a simple calculation here. For simplicity, we using the difficulty defined in the genesis block as the difficulty unit and the expected inter-block time (10 minutes in \bitcoinnosp) as the time unit. Let $n$ be number of honest queries to the hash function per unit time and $t$ be the number of adversarial queries per unit time. Then we know that to mine a block with unit difficulty, each query solves the PoW puzzle with probability $1/n$. We further assume that $n$ and $t$ don't change over time and the network delay among honest nodes is zero. Note that these assumptions only make the adversary weaker. The goal of the adversary is to double-spend a coin by mining a heavier chain than the public honest chain from the genesis.

Suppose honest miners are adopting the initial mining difficulty as defined in the genesis block, hence on average it take $k$ units of time to mine a honest chain with $k$ blocks. To mine a heavier chain, the adversary only needs to mine one block which has difficulty $k$ (See Figure ~\ref{fig:bahack0} for illustration), within $k$ unit of time. The adversarial can make $tk$ queries in $k$ units of time, and each query succeeds with probability $1/nk$. Hence the success probability of this attack would be
$$\mathbb{P}({\rm attack~succeeds}) = 1 - (1-\frac{1}{nk})^{tk} \approx 1- e^{t/n},$$
since $n$ and $t$ are large in PoW mining. Note that the success probability is a constant independent of $k$, therefore any $k$-deep confirmation rule will fail.

\begin{figure}
    \centering
    \includegraphics[width=6cm]{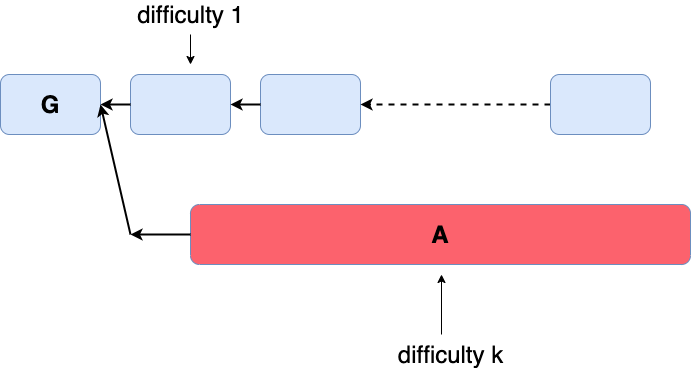}
    \caption{A simple attack if allowing miners to choose their own difficulty. The adversary mines one block which is as difficult as $k$ honest blocks.}
    \label{fig:bahack0}
\end{figure}

\noindent {\bf Difficulty raising attack.} However, even if we adopts a epoch based difficulty adjustment rule as in \bitcoin (but without the ``dampening filter''), there is still a difficulty raising attack. We using the difficulty of the first epoch (defined in the genesis block) as the difficulty unit and the expected inter-block time (10 minutes in \bitcoinnosp) as the time unit. Let $\Phi$ be the length of an epoch in number of blocks (2016 in \bitcoinnosp). And we define $n$ and $t$ the same as above.

Note that the adversary can put any timestamp in its private blocks, so the difficulty of the second epoch in its private chain can be arbitrary value as long as the adversary completes the first epoch. Let $B$ with difficulty $X$ be the first block of the second epoch in the private chain (that is each query solves the PoW puzzle with probability $1/nX$), then $B$ has chain difficulty $\Phi +X$. See Figure ~\ref{fig:bahack} for illustration. To mine an honest chain with chain difficulty $\Phi +X$, on average it takes $\Phi + X$ time. On the other hand, it takes on average $n\Phi/t$ time for the adversary to complete the first epoch in its private chain. Therefore, to succeed in this attack, the adversary needs to mine the block $B$ within $\Phi + X - n\Phi/t$ time, which happens with probability:
\begin{align*}
    \prob{({\rm attack~succeeds})} &= 1 - (1-\frac{1}{nX})^{(\Phi + X - n\Phi/t)t} \\
                                 &= 1 - (1-\frac{1}{nX})^{Xt - (n-t)\Phi} \\
                                 &\approx 1- e^{t/n},
\end{align*}
if $X \gg \Phi \gg 1$. Note that the success probability is independent of the length of the public longest chain, hence any $k$-deep confirmation rule will fail.

\begin{figure}
    \centering
    \includegraphics[width=8cm]{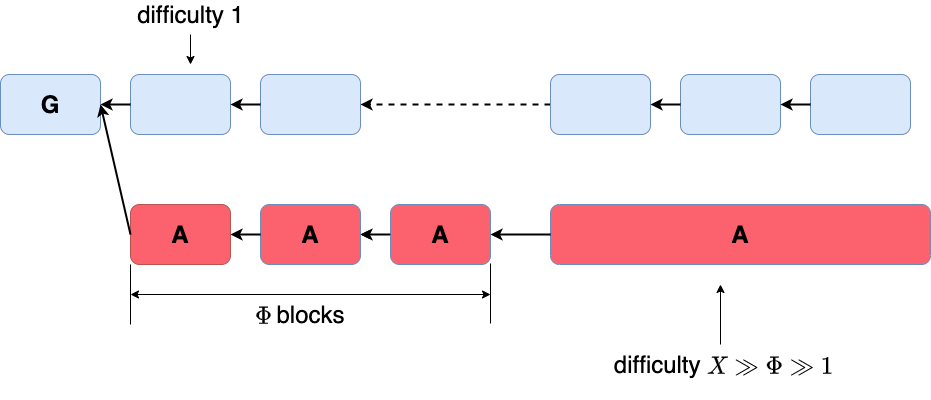}
    \caption{The difficulty raising attack. The adversary raises the difficulty to extremely high in the second epoch by faking timestamps.}
    \label{fig:bahack}
\end{figure}

However, \bitcoin is saved by the dampening filter in the target recalculation function. As in Definition \ref{def:targetrecalc}, the difficulty can
be increased by a factor of at most $\tau$ between two consecutive epochs ($\tau = 4$ in \bitcoinnosp). Then we shall analyze the difficulty raising attack under the same assumptions made above. Since the epoch size $\Phi \gg 1$, the time for the adversary to complete one epoch or mine $\Phi$ blocks with the same difficulty will satisfy the concentration bound of binomial random variables. Hence if the adversary always rises the difficulty by $\tau$ in each epoch, then it takes on average $\frac{n}{t}\sum_{i=0}^{\ell-1} \tau^i \Phi$ time for the adversary to complete $\ell$ epochs in its private chain, and the public honest chain will on average have difficulty $\frac{n}{t}\sum_{i=0}^{\ell-1} \tau^i \Phi$ during this time. Since the private chain has chain difficulty $\sum_{i=0}^{\ell-1} \tau^i \Phi$, the gap of chain difficulties between the public honest chain and the private chain will be
$$(\frac{n}{t}-1) \sum_{i=0}^{\ell-1} \tau^i \Phi = (\frac{n}{t}-1)\frac{\tau^\ell - 1}{\tau-1}\Phi.$$
Each block of the $(\ell+1)$-th epoch in the private chain will have difficulty $\tau^\ell$, hence the adversary still needs to mine approximately $\frac{n-t}{t(\tau-1)}\Phi$ blocks in order to catch up the honest chain. As $\Phi \gg 1$, the time for the adversary to catch up is still controlled by the concentration bound, and the success probability of this attack will be at most $e^{-\theta(\Phi)}$. By setting $\Phi$ large enough, the difficulty raising attack can be ruled out. 

While this specific attack could in principle be thwarted, to have security guarantee we still need to consider all possible attacks in the presence of a full-blown adversary. A full and beautiful analysis of \bitcoin rule is provided in \cite{full2020} and we shall give a proof sketch in Appendix~\ref{app:backbone}.
\section{Bitcoin Backbone Properties Revisited}
\label{app:backbone}
 
 We will briefly revisit the analysis in \cite{full2020} because the pivot chain is identical to the \bitcoin chain.

We will additionally define a stale chain and accuracy related to timestamps of the blocks.
 \begin{defn}[from \cite{full2020}]
 A block created at round $u$ is accurate is it has a timestamp $v$ such that $|u-v|\leq \ell + 2\Delta$. A chain is accurate if all its blocks are accurate. A chain is stale if for some $u\geq \ell+2\Delta$ it does not contain a honest block with timestamp $v \geq u - \ell -2\Delta$.
 \end{defn}

Recall that we define $\mathcal{S}^P_r$ as the set of pivot chains that belong to or have the potential to be adopted by an honest party at round $r$ in Section \ref{sec:def}. Now we define a series of useful predicates with respect to $\mathcal{S}^P_r$.

\begin{defn}[from \cite{full2020}]
For a round $r$,\\
$GoodRounds(r):=$ ``All rounds $u \leq r$ are good.''\\
$GoodChains(r):=$ ``For all rounds $u \leq r$, every chain in $\mathcal{S}^P_u$ is good.''\\
$NoStaleChains(r):=$ ``For all rounds $u \leq r$, there is no stale chain in $\mathcal{S}^P_u$.''\\
$Accurate(r):=$ ``For all rounds $u \leq r$, all chains in $\mathcal{S}^P_u$ are accurate.''\\
$Duration(r):=$ ``For all rounds $u \leq r$ and duration$\Lambda$ of an epoch of any chain in $\mathcal{S}^P_u$, $\frac{1}{2(1+\delta)\gamma^2}\frac{m}{f} \leq \Lambda \leq 2(1+\delta)\gamma^2 \frac{m}{f}$.''\\
\end{defn}

The following lemma provides a lower bound on the progress of the honest parties, which holds irrespective of any adversary.

\begin{lem}[Chain growth for pivot chain, from \cite{full2020}]
\label{lem:chaingrowth}
Suppose that at round $u$ of an execution $E$, an honest party broadcasts a pivot chain of difficulty $d$. Then, by round $v$, every honest party receives a chain of difficulty at least $d + Q^{P}(S)$, where $S = \{ r:u + \Delta \leq r \leq v - \Delta \}$.
\end{lem}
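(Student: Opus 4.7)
The plan is to adapt the classical Bitcoin backbone chain growth argument to the difficulty-weighted setting, following the template in \cite{full2020}. The core intuition is that honest parties, by broadcasting their heaviest chain and by following the heaviest chain rule, lock in a monotonically nondecreasing floor on the chain difficulty; uniquely successful honest rounds then strictly lift this floor, and the delay of $\Delta$ only forces us to discard a thin boundary on each side of the interval $S$.

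First I would establish the base of the recursion. At round $u$, some honest party broadcasts a pivot chain $\mathcal{C}$ of difficulty $d$. By the $\Delta$-synchrony assumption, every honest party has received $\mathcal{C}$ by round $u+\Delta$, and because honest parties follow the heaviest chain rule, from round $u+\Delta$ onward each honest party mines on (and would broadcast) a pivot chain of difficulty at least $d$. This sets up the invariant that at the start of the interval $S$, every honest party holds a pivot chain of difficulty at least $d$.

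Next I would propagate this invariant through $S$ using the uniquely successful rounds that contribute to $Q^{P}(S)$. Order the rounds $r_1<r_2<\cdots<r_k$ in $S$ with $Q^{P}_{r_i}>0$ and let $Y^{P}_{r_i}$ be the maximum honest difficulty at round $r_i$. By the definition of $Q^{P}_{r_i}$, no honest pivot block is mined in any round strictly between $r_i-\Delta$ and $r_i+\Delta$ besides the one achieving $Y^{P}_{r_i}$. I would show by induction on $i$ that immediately after the uniquely successful block at round $r_i$ is produced, every honest party who later mines (after round $r_i+\Delta$) holds a chain of difficulty at least $d+\sum_{j\le i} Y^{P}_{r_j}$. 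The inductive step uses two facts: (i) by $r_i$ the unique honest miner has received all previously broadcast chains (including those carrying the previous uniquely successful blocks, since those were broadcast by round $r_{i-1}+\Delta<r_i$), so it extends a chain of difficulty at least $d+\sum_{j<i} Y^{P}_{r_j}$, and (ii) the block it mines adds difficulty $Y^{P}_{r_i}$ on top.

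Finally I would close out the argument using the right endpoint $v-\Delta$ of $S$. Because the last uniquely successful block in $S$ is mined at some round $r_k\le v-\Delta$, it (together with the chain it extends) has been delivered to every honest party by round $v$. Hence every honest party receives by round $v$ a pivot chain of difficulty at least $d+\sum_{i=1}^{k}Y^{P}_{r_i}=d+Q^{P}(S)$, which is the desired bound. The main obstacle, and the reason one restricts to $S=\{u+\Delta,\dots,v-\Delta\}$ rather than the full interval $[u,v]$, is precisely the adversarial delivery delay: the left $\Delta$-buffer is needed so that every honest miner starts from difficulty at least $d$, and the right $\Delta$-buffer is needed so that uniquely successful honest blocks in $S$ actually reach all honest parties by round $v$. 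Handling the simultaneity and delay bookkeeping carefully in the inductive step is the only delicate point; the rest is straightforward accounting of $Q^{P}(S)$.
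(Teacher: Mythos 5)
Your proof is correct and follows essentially the same uniquely-successful-round induction as the argument in \cite{full2020}, which the paper imports for this lemma without reproducing the proof. One small note: the paper's definition of $Q^P_r$ is one-sided (it only requires $D^P_u=0$ for $r<u<r+\Delta$, and permits several honest blocks at round $r$ itself), not the two-sided "no other honest block in $(r_i-\Delta,r_i+\Delta)$" condition you state --- but your induction only actually uses the spacing $r_{i}\geq r_{i-1}+\Delta$ and the maximum-difficulty block at $r_i$, both of which follow from the one-sided definition, so the argument stands.
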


In order to prove properties like common prefix and chain quality for the pivot chain, we need all rounds in a typical execution to be good.
\begin{lem}[All rounds in a typical execution are good, Theorem 2 from \cite{full2020}]
\label{lem:allgoodrounds}
Consider a typical execution in a $(\gamma,2\gamma^2(1+\delta)\Phi/f)$-respecting environment. If the protocol is initiated such that the first round it good, and all the conditions \ref{eq:re1}, \ref{eq:re2} and \ref{eq:re3} are satisfied, then all rounds are good. 

\end{lem}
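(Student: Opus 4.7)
The plan is to proceed by strong induction on the round $r$, simultaneously establishing the five interlocking predicates $GoodRounds(r)$, $GoodChains(r)$, $NoStaleChains(r)$, $Accurate(r)$, and $Duration(r)$. The base case is immediate: $\mathcal{S}^P_0$ contains only the genesis chain, the first round is good by hypothesis, and the accuracy/staleness/duration predicates are vacuous. In the inductive step, we assume the five predicates hold for every $u < r$ and establish each one at round $r$.

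For $GoodRounds(r)$, any honest party mining at round $r$ inherits its target $T$ from the tip of some $\mathcal{C} \in \mathcal{S}^P_r$. By the inductive $GoodChains(r-1)$, the most recent target-recalculation round $u$ on $\mathcal{C}$ is good, so $f/2\gamma \leq pn_u T \leq (1+\delta)\gamma f$. The inductive $Duration(r-1)$ bound $\Lambda \leq 2(1+\delta)\gamma^2 \Phi/f$ places $u$ and $r$ inside a single $s$-window of the $(\gamma, s)$-respecting environment, forcing $n_u/n_r \in [1/\gamma, \gamma]$ and inflating both inequalities by one extra factor of $\gamma$ to obtain $f/2\gamma^2 \leq pn_r T \leq (1+\delta)\gamma^2 f$, which then must hold simultaneously for the minimum and maximum honest targets at round $r$. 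The other predicates at round $r$ are handled in a parallel fashion: at any new target-recalculation point on a chain $\mathcal{C} \in \mathcal{S}^P_r$, the typical-execution bounds of Definition~\ref{def:typicality} pin the honest difficulty $Q^P(S), D^P(S)$ and the adversarial difficulty $A^P(J)$ accumulated in the epoch within constant factors of $pn(S)$, so the measured epoch duration $\Lambda$, and hence the new target $T'$ produced by $\mathcal{T}$ together with the dampening filter $\tau$, must fall inside the good-chain and good-duration ranges. Condition \eqref{eq:re2} is invoked here to guarantee that the epoch contains enough rounds for these concentration bounds to apply.

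The main obstacle is ruling out ``adversarially crafted'' chains in $\mathcal{S}^P_r$ whose recalculation points have been sabotaged --- stale chains with no honest blocks in a long window, or chains whose epochs were compressed/stretched by fake timestamps to force a pathological target. The resolution is to observe that any candidate $\mathcal{C} \in \mathcal{S}^P_r$ must be at least as heavy as the honest heaviest chain; if $\mathcal{C}$ were stale over a window of length $\geq \ell + 2\Delta$, its difficulty in that window would be purely adversarial, but the typical-execution bound $A^P(J) < p|J| + \max\{\epsilon p|J|, \tau \alpha(J)\}$ combined with the honest lower bound $Q^P(S) \geq (1-\epsilon)[1-(1+\delta)\gamma^2 f]^\Delta pn(S)$, supplied by the inductively-good rounds of the window, shows that the adversary cannot keep pace, contradicting $\mathcal{C} \in \mathcal{S}^P_r$. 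This yields $NoStaleChains(r)$ and $Accurate(r)$, which closes the loop by promoting $GoodChains(r)$ from honest-only recalculation points to all recalculation points on chains in $\mathcal{S}^P_r$. The precise bookkeeping is exactly that of Theorem~2 in \cite{full2020}, whose argument this lemma imports verbatim with parameters matched to our $(\gamma, 2(1+\delta)\gamma^2\Phi/f)$-respecting environment and to the constants fixed by \eqref{eq:re1}--\eqref{eq:re3}.
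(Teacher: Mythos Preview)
Your proposal follows essentially the same inductive scheme as the paper's summary of \cite{full2020}: strong induction on $r$ over the five predicates, with $NoStaleChains$ established by comparing adversarial difficulty $A^P(J)$ against honest growth $Q^P(S)$ under typicality, and the remaining predicates chained from there. One point worth tightening is the order of the implications. You derive $GoodRounds(r)$ first, invoking $GoodChains(r-1)$ and $Duration(r-1)$ for a chain $\mathcal{C}\in\mathcal{S}^P_r$; but a chain in $\mathcal{S}^P_r$ need not have been in any $\mathcal{S}^P_u$ for $u<r$, so $GoodChains(r-1)$ does not directly speak to its recalculation points. The paper (and \cite{full2020}) handle this by establishing $NoStaleChains(r)$, $Accurate(r)$, $Duration(r)$, and $GoodChains(r)$ \emph{before} $GoodRounds(r)$, and only then using $GoodChains(r)$ and $Duration(r)$ to bound the honest targets at round $r$. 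Since you already defer the precise bookkeeping to \cite{full2020}, this is a presentational rather than a substantive gap.
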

\begin{proof}
The elaborate proof can be found in \cite{full2020} and we summarize it as follows.

We will use an induction argument. In a $(\gamma, s)$-respecting environment, $s \geq 2(1+\delta)\gamma^2m/f$ covers at least the first epoch. It is easy to see that if the initial target is good, the rounds in the first epoch are good, and the first target recalculation point is good. We will prove that the subsequent rounds and target recalculation points are good using an induction argument shown in Figure \ref{fig:goodroundsproof}. The predicates are defined as follows.

\begin{figure}
    \centering
    \includegraphics[width=8cm]{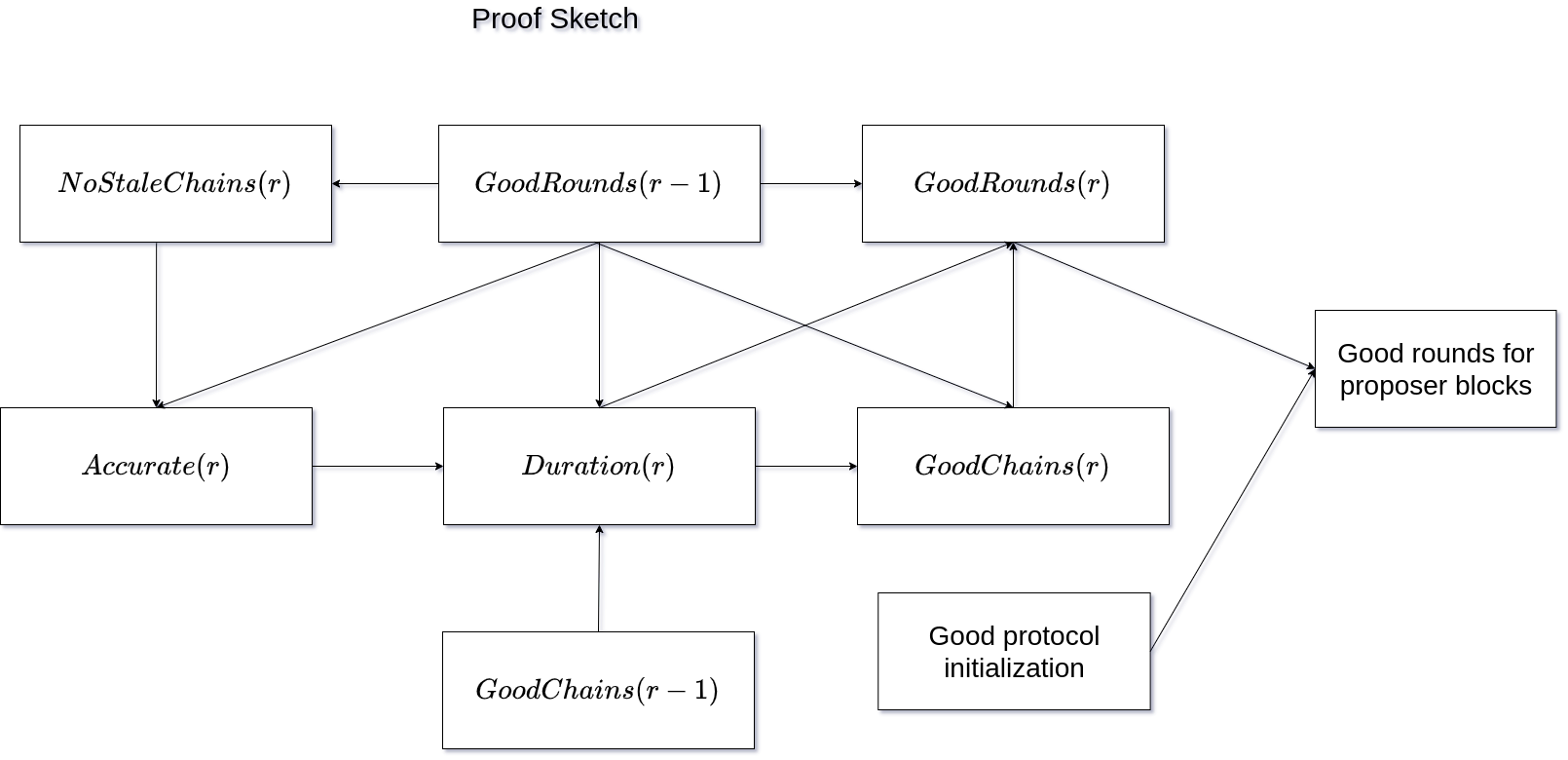}
    \caption{An induction argument to prove that all rounds in a typical execution are good.}
    \label{fig:goodroundsproof}
\end{figure}
We prove $NoStaleChains(r)$ from $GoodRounds(r-1)$ using typicality bounds, showing that the adversary cannot accumulate more difficulty than the lower bound of the minimum chain growth, $Q^P$. Let $w$ be the timestamp of the last honest block on the stale chain. Set $U= \{u: w \leq u \leq r\}$, $S= \{u: w+\Delta \leq u \leq r-\Delta\}$ and $J$ be the adversarial queries in $U$. We will first consider the case where the chain has more than one target recalculation point. In this case we divide $J$ into sub-queries $J_i$ such that each subset covers at least $m/2$ blocks and has exactly one target recalculation point in it. In this case, we have $A^P(J) = \sum_{i}A^P(J_i) < \sum_{i} (1+\epsilon)p|J_i| = (1+\epsilon)p|J|$. We arrive at a contradiction by showing  $(1+\epsilon)p|J|$ is lower than $Q^P(S)$'s lower bound. In case of at most one target recalculation point, if $A(J)<(1+\epsilon)p|J|$ applies, the argument from the previous case applies. If $A(J) < (1+1/\epsilon)(1/3 +1/\epsilon)\lambda\tau/T(J)$, we prove that the lower bound of $Q^P(S)$ considering only the first $l$ rounds will cover $(1+1/\epsilon)(1/3 +1/\epsilon)\lambda\tau/T(J)$. $Accurate(r)$ follows from $NoStaleChains(r)$. 

We then prove the bound on duration by contradiction, assuming that the previous target recalculation point is good using property $GoodChains(r-1)$. The lower bound is contradicted by showing that even if the adversary and honest party join forces, they can't produce $e$ blocks in less than $\frac{1}{2(1+\delta)\gamma^2}\frac{m}{f}$. The upper bound is contradicted by showing that the lower bound of $Q^P$ produces at least $e$ blocks in $2(1+\delta)\gamma^2 \frac{m}{f}$ rounds. To prove $GoodChains(r)$, we prove that the next target-recalculation point is good. This is proved again, using a contradiction for both the bounds of a good target recalculation point. Finally, we use $Duration(r)$, $GoodChains(r)$ and the $(\gamma, 2(1+\delta)\gamma^2m/f)$-respecting environment assumption to prove $GoodRounds(r)$.
\end{proof}

The following lemma from \cite{full2020} is useful to prove common prefix and chain quality of the pivot chain. 
\begin{lem}[Lemma 2(c) from \cite{full2020}]
\label{lem:proptypicalrel}
Consider a typical execution in a $(\gamma,s)$-respecting environment. Let $S = \{r: u \leq r \leq v\}$ be a set of rounds with at least $\ell$ rounds and $J$ be the set of adversary queries in $U= \{r: u-\Delta \leq r \leq v+\Delta\}$. If $w$ is a good round such that $|w-r|\leq s$ for any $r \in S$, then $A^P(J) < (1-\delta +3\epsilon)Q^P(S)$
\end{lem}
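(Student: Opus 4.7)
The plan is to squeeze $A^P(J)$ from above and $Q^P(S)$ from below via the typicality bounds in Definition~\ref{def:typicality}, and then to relate $|J|$ to $n(S)$ using the honest-majority assumption $t_r<(1-\delta)n_r$ together with the $(\gamma,s)$-respecting property anchored at the good round $w$.

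First, I would apply the typicality bound $Q^P(S) > (1-\epsilon)[1-(1+\delta)\gamma^2 f]^\Delta p\, n(S)$ and use inequality \eqref{eq:re3} to collapse the factor $[1-(1+\delta)\gamma^2 f]^\Delta$ to $(1-\epsilon)$, giving the clean lower bound $Q^P(S) > (1-\epsilon)^2 p\,n(S)$. Next, I would convert $|J|$ into a statement about $n(S)$: since $|J|$ counts adversarial queries over the set $U\supseteq S$ of size at most $|S|+2\Delta$, the honest-majority assumption yields $|J|<(1-\delta)n(U)$, and since $w$ is good and $|w-r|\le s$ for every $r\in S$, the $(\gamma,s)$-respecting condition bounds $\max_{r\in U} n_r$ by $\gamma\min_{r\in U} n_r$. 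Combined with $|S|\geq \ell$ and the lower bound \eqref{eq:re1} on $\ell$ (which is larger than $\Delta$ by a huge factor), the boundary contribution $n(U)-n(S)$ is at most a fraction of order $\Delta\gamma/\ell\leq \epsilon$ of $n(S)$, so $n(U)\leq (1+\epsilon)n(S)$.

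For the upper bound on $A^P(J)$, I would split on the $\max$ inside the typicality bound $A^P(J) < p|J| + \max\{\epsilon p|J|, \tau\alpha(J)\}$. In the easy case $\epsilon p|J| \geq \tau\alpha(J)$, chaining the inequalities gives
\begin{equation*}
A^P(J) < (1+\epsilon)p|J| < (1+\epsilon)(1-\delta)(1+\epsilon)\,p\,n(S) \leq \frac{(1+\epsilon)^2(1-\delta)}{(1-\epsilon)^2}\,Q^P(S),
\end{equation*}
which is $\leq (1-\delta+3\epsilon)Q^P(S)$ after using $8\epsilon\leq \delta \leq 1$ from \eqref{eq:re3}. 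In the hard case $\tau\alpha(J)>\epsilon p|J|$, the additive term $\tau\alpha(J) = 2\tau(1/\epsilon+1/3)\lambda/T(J)$ must be absorbed into the $\epsilon$ slack. The key is that $w$ being good and within $s$ rounds of $S$ forces, via respecting and monotonicity of honest targets, the bound $1/T(J) = O(\gamma^3 p n_w)$, so $\tau\alpha(J) = O(\tau\gamma^3\lambda p n_w/\epsilon)$, and multiplying by $|S|\geq \ell$ one sees that $\tau\alpha(J)$ is at most an $\epsilon$-fraction of $p\,n(S)$, thanks to the $\max\{\Delta,\tau\}\gamma^3\lambda/\epsilon^2$ scaling baked into $\ell$ in \eqref{eq:re1}.

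The main obstacle is precisely this second case: one has to trace the first-query target $T(J)$ back to $w$, both to invoke the good-round bounds on $T_w^{\min},T_w^{\max}$ and to use $(\gamma,s)$-respecting to transfer them to the queries in $J$. The reason the final constant works is that the definition \eqref{eq:re1} of $\ell$ was tailored exactly so that the additive $\tau\alpha(J)$ term becomes negligible compared to $Q^P(S)$; once this absorption is checked, combining with case (a) gives $A^P(J) < (1-\delta+3\epsilon)Q^P(S)$ in both cases uniformly.
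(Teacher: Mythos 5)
The paper never proves this lemma itself --- it is imported verbatim from \cite{full2020} --- but your reconstruction follows exactly the argument used there and, more tellingly, the argument this paper \emph{does} write out for the non-pivot analogue (the concentration case and Case~1 in the proof of Lemma~\ref{lem:voter-prefix}): split on whether $\epsilon p|J|$ or $\tau\alpha(J)$ dominates, use honest majority plus the $(\gamma,s)$-respecting property anchored at the good round $w$ to compare $p|J|$ with $pn(S)$, and absorb the additive $\tau\alpha(J)$ term via the good-round bound on targets (transferred to $T(J)$ through the dampening factor $\tau$) together with the $\max\{\Delta,\tau\}\gamma^3\lambda/\epsilon^2$ scaling built into $\ell$ in \eqref{eq:re1}. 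You also correctly identify the one place where real content hides: the lemma statement never says explicitly how $T(J)$ is tied to $w$; that link is supplied by the context of application (the adversarial queries extend a chain whose target is set, up to a factor $\tau$, at a good recalculation point), and your invocation of goodness plus monotonicity is the right way to fill it.

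One constant in your easy case does not close as written: $\frac{(1+\epsilon)^2(1-\delta)}{(1-\epsilon)^2}\le 1-\delta+3\epsilon$ forces roughly $\delta\ge \tfrac14+O(\epsilon)$, which is \emph{not} implied by $8\epsilon\le\delta\le 1$. The repair is already contained in your own estimate of the boundary term: since $n(U)-n(S)$ is of order $(\gamma\Delta/\ell)\,n(S)=O(\epsilon^2)\,n(S)$ by \eqref{eq:re1}, you should carry $n(U)\le(1+\epsilon^2/2)\,n(S)$ rather than the lossy $(1+\epsilon)\,n(S)$; then $A^P(J)<(1+\epsilon)(1+\epsilon^2/2)(1-\delta)pn(S)<(1-\delta+\epsilon)pn(S)<(1-\delta+3\epsilon)Q^P(S)$ using only $\delta\ge 8\epsilon$, which is precisely the chain of inequalities the paper uses in its non-pivot Case~1. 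The same care is needed when you add $p|J|$ and the $\tau\alpha(J)$ term in the hard case, where the target is $(1-\delta+\epsilon)pn(S)$ before converting to $Q^P(S)$.
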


The following properties of the pivot chain are from \cite{full2020}. 

\begin{lem}[Common prefix for pivot chain]
\label{lem:proposer-prefix}
For a typical execution in a $(\gamma, 2(1 + \delta)\gamma^2 \Phi/f)$-respecting environment, the pivot chain satisfies the common-prefix property with parameter $\ell_{\rm cp} = \ell +2\Delta$.
\end{lem}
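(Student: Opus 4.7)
The plan is to adapt the common-prefix argument for the heaviest-chain Bitcoin protocol from \cite{full2020}, leveraging the fact that under Lemma~\ref{lem:allgoodrounds} every round of a typical execution in the given respecting environment is good, so all honest parties mine with reasonable targets. I would argue by contradiction: assume there exist two honest parties holding pivot chains $\C_1, \C_2 \in \mathcal{S}^P_{r_1} \cup \mathcal{S}^P_{r_2}$ at rounds $r_1 \le r_2$ with $\C_1^{\lceil \ell_{\rm cp}} \not\prec \C_2$, where $\ell_{\rm cp} = \ell + 2\Delta$. Let $B^*$ be the last block of $\C_1 \cap \C_2$, with round $r^*$ (taking $B^*$ to be genesis if no common ancestor exists beyond it), so both chains contain disjoint extensions after $B^*$ and the disagreement spans at least $\ell_{\rm cp}$ rounds, i.e., the timestamp of head$(\C_1^{\lceil \ell_{\rm cp}})$ is strictly greater than $r^*$.

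Next, I would locate the last honest pivot block $B^\circ$ on either of the two divergent extensions strictly before the fork point in time; by a standard argument, any such honest block broadcast at a round $u$ must have been propagated to every honest party by round $u+\Delta$ and thus must lie on $\C_1 \cap \C_2$, forcing the ``fork region'' to begin from a round $u_0 \ge r^* - \Delta$. Set $S = \{r : u_0 + \Delta \le r \le r_2 - \Delta\}$ and let $J$ be the set of adversarial queries in $U = \{r : u_0 - \Delta \le r \le r_2 + \Delta\}$; since $|S| \ge \ell$ by our choice of $\ell_{\rm cp}$, the typicality bounds apply on $S$. By the chain-growth Lemma~\ref{lem:chaingrowth}, both $\C_1$ and $\C_2$ must carry chain-difficulty growth past $B^*$ of at least $Q^P(S)$ contributed by honest parties. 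Critically, by the uniquely-successful block argument (each $Q^P$-round contributes an honest block whose broadcast is completed before the next honest success, and hence lies on at most one of two competing chains), the total honest difficulty actually charged to the two divergent branches is at least $2 Q^P(S)$ minus overlap terms absorbed on the common prefix, so the adversary must supply at least $Q^P(S)$ of difficulty on the non-honest side to sustain the fork in one of the two heaviest chains.

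To obtain the contradiction I would invoke Lemma~\ref{lem:proptypicalrel}: since $u_0$ is a good round within distance $s = 2(1+\delta)\gamma^2 \Phi/f$ of every round in $S$ (by the respecting assumption and the epoch-duration bound $Duration$ obtained inside Lemma~\ref{lem:allgoodrounds}), the adversary's contribution in $J$ is bounded by
\[
A^P(J) < (1 - \delta + 3\epsilon)\, Q^P(S).
\]
Combined with the requirement that $A^P(J) \ge Q^P(S) - (\text{unique-success corrections})$ from the previous paragraph, and using $8\epsilon \le \delta$ from condition \eqref{eq:re3}, the constant multiplying $Q^P(S)$ on the left is strictly smaller than on the right, yielding a contradiction. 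This forces $\C_1^{\lceil \ell_{\rm cp}} \prec \C_2$.

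The main obstacle I anticipate is the careful accounting of difficulty on the two divergent extensions. Blocks mined in different epochs along $\C_1$ and $\C_2$ may have different targets, so one cannot directly equate counts of blocks to difficulties; however, $GoodChains$ (proved inside Lemma~\ref{lem:allgoodrounds}) ensures all targets in the fork region are within a bounded ratio of the ideal good target, allowing $Q^P(S)$ and $A^P(J)$ to be meaningfully compared. A second subtlety is handling the case where the fork straddles a target-recalculation point, which requires partitioning $J$ into subsets each containing at most one recalculation point and summing the typicality bounds, exactly as in the proof of Lemma~\ref{lem:allgoodrounds}; this accounts for the factor $\tau$ appearing in the definition of $\ell$.
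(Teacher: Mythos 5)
Your proposal follows essentially the argument the paper relies on: the paper does not prove this lemma itself but imports it directly from \cite{full2020} (the pivot chain is just the Bitcoin heaviest chain under the standard target-recalculation rule), and your sketch --- last honest block on the common prefix, the round sets $S$ and $J$, the uniquely-successful-block pairing giving $2Q^P(S) \le D^P(U) + A^P(J)$, and the contradiction via Lemma~\ref{lem:proptypicalrel} together with $D^P(U) < (1+5\epsilon)Q^P(S)$ and $\delta \ge 8\epsilon$ --- is exactly that argument, which the paper also spells out explicitly in its proof of the non-pivot analogue (Lemma~\ref{lem:voter-prefix}). One minor, non-load-bearing quibble: your assertion that the fork region begins at a round $u_0 \ge r^* - \Delta$ is neither true in general nor needed; all that matters is that the last honest block on $\C_1 \cap \C_2$ has timestamp at most that of $\mathrm{head}(\C_1 \cap \C_2)$, which is less than $r_2 - \ell - 2\Delta$, so that $|S| \ge \ell$ and the typicality bounds apply.
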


\begin{lem}[Chain quality for pivot chain]
\label{lem:proposer-quality}
For a typical execution in a $(\gamma, 2(1 + \delta)\gamma^2 \Phi/f)$-respecting environment, the pivot chain satisfies the chain-quality property with parameter $\ell_{\rm cq} = \ell +2\Delta$ and $\mu = \delta - 3\epsilon$.
\end{lem}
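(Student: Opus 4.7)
My plan is to run the chain-quality argument of \cite{full2020}, adapted to the difficulty-based formulation of this paper, using the three main facts already in hand: Lemma~\ref{lem:allgoodrounds} (every round in a typical execution is good, so targets stay in the expected range), Lemma~\ref{lem:chaingrowth} (honest parties drive the heaviest chain forward by at least $Q^P$ in any $\Delta$-shrunk window), and Lemma~\ref{lem:proptypicalrel} (in any such window the adversary accumulates at most $(1-\delta+3\epsilon)\,Q^P$ of difficulty). I would proceed by contradiction: suppose there is a heaviest pivot chain $\C$ held by an honest party at round $r$ and an interval $S_0 = [u_1, u_2]$ of at least $\ell + 2\Delta$ consecutive rounds such that the honest blocks in $\C(S_0)$ contribute total difficulty strictly less than $(\delta - 3\epsilon)\,d$, where $d$ is the total difficulty of $\C(S_0)$.

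First I would bracket $S_0$ cleanly by setting $S = [u_1 + \Delta,\, u_2 - \Delta]$ (so $|S| \geq \ell$) and $U = [u_1 - \Delta,\, u_2 + \Delta]$, and letting $J$ denote the set of adversarial queries with mining round in $U$. Because all chains in $\mathcal{S}^P_{u_1}$ are good and accurate by Lemma~\ref{lem:allgoodrounds}, the timestamp of any adversarial block appearing in $\C(S_0)$ is within $\ell + 2\Delta$ of its mining round, so every such block is enumerated by $J$; likewise the target of the first query in $J$ falls in the good range, so $A^P(J)$ is a valid upper bound on the adversarial contribution to $\C(S_0)$. Lemma~\ref{lem:proptypicalrel} then gives $A^P(J) < (1 - \delta + 3\epsilon)\,Q^P(S)$.

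Next I would let $B^*$ be the youngest honest block of $\C$ with timestamp at most $u_1$ (or the genesis if none exists) and apply Lemma~\ref{lem:chaingrowth} from the broadcast round of $B^*$ to round $r$: since $\C$ is heaviest at round $r$, the subchain after $B^*$ has difficulty at least $Q^P(S)$. The tail-of-$\C$ difficulty decomposes into the honest blocks of $\C$ inside $\C(S_0)$, the adversarial blocks of $\C$ inside $\C(S_0)$, and the blocks of $\C$ after $B^*$ but with timestamp outside $S_0$; by the $Accurate$ predicate the latter class lives in a window of width $O(\Delta)$ flanking $S_0$, so the $2\Delta$ slack in $\ell_{\rm cq} = \ell + 2\Delta$ ensures its contribution is absorbed into the honest term. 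Combining the three bounds gives honest difficulty in $\C(S_0)$ exceeding $(\delta - 3\epsilon)\,Q^P(S) \geq (\delta - 3\epsilon)\,d$, contradicting the supposition.

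The main obstacle I anticipate is not the high-level structure, which closely parallels \cite{full2020}, but the careful boundary accounting at two points: (i) confirming that every adversarial block in $\C(S_0)$ is enumerated by $J$ with a target in the range implicit in the definition of $A^P(J)$, which depends on the $Accurate$ and $GoodChains$ predicates propagated by Lemma~\ref{lem:allgoodrounds}; and (ii) controlling the honest blocks of $\C$ that lie just outside $\C(S_0)$ but after $B^*$, which is precisely the role of the $2\Delta$ slack in $\ell_{\rm cq}$. Once these bookkeeping steps are settled, the identity $d = H + A$ together with $H < (\delta - 3\epsilon)\,d$, $A < (1-\delta+3\epsilon)\,Q^P(S)$, and $d \gtrsim Q^P(S)$ yields the desired contradiction.
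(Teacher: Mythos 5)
Your high-level skeleton (contradiction, plus the three inequalities $H<(\delta-3\epsilon)d$, $A<(1-\delta+3\epsilon)Q^P(S)$, $d\geq Q^P(S)$) is the right one, and it is essentially the argument the paper uses for the non-pivot analogue (Lemma~\ref{lem:voter-quality}); the paper itself does not reprove the pivot-chain statement but imports it from \cite{full2020}. However, two of your bookkeeping steps, which you correctly flag as the delicate points, do not actually go through as you set them up. First, you define $U=[u_1-\Delta,\,u_2+\Delta]$ from the \emph{timestamps} and argue via the $Accurate$ predicate that every adversarial block of $\C(S_0)$ was mined in $U$ and is therefore counted in $A^P(J)$. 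But accuracy only guarantees $|\text{mining round}-\text{timestamp}|\leq \ell+2\Delta$, so an adversarial block timestamped inside $S_0$ may have been mined up to $\ell+2\Delta$ rounds outside your $U$; with only $\Delta$ of slack the containment $A\leq A^P(J)$ is simply false, and the whole upper bound on the adversarial contribution collapses. Second, the lower bound $d\geq Q^P(S)$ is not established: applying Lemma~\ref{lem:chaingrowth} from $B^*$ to round $r$ lower-bounds the difficulty of the \emph{entire} tail of $\C$ after $B^*$ by $Q^P$ over a window extending to $r-\Delta$, not the difficulty of the timestamp-restricted segment $\C(S_0)$ by $Q^P([u_1+\Delta,u_2-\Delta])$. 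The blocks of $\C$ with timestamps after $u_2$ (there can be many, carrying most of the tail's difficulty when $r\gg u_2$) cannot be ``absorbed into the honest term,'' and the $2\Delta$ slack in $\ell_{\rm cq}$ is far too small to cover blocks displaced by up to $\ell+2\Delta$.

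Both defects are repaired by the device used in the paper's proof of Lemma~\ref{lem:voter-quality}: do not anchor $U$ to timestamps at all. Instead, extend the timestamp-selected blocks to the minimal consecutive block segment $B_{u'},\dots,B_{v'}$ that contains them and is bracketed below by an honest block $B_{u'}$ mined at some round $r_1$ (or the genesis) and above by a block $B_{v'}$ whose chain is adopted by some honest party at some round $r_2$; set $U=[r_1,r_2]$, $S=[r_1+\Delta,r_2-\Delta]$, and let $J$ be the adversarial queries in $U$ associated with this segment. Then every block of the segment is mined in $U$ purely by the hash-chaining structure (descendants of $B_{u'}$ must be mined after $r_1$, and anything adopted at $r_2$ is mined by $r_2$), so no appeal to accuracy is needed; chain growth applied between $r_1$ and $r_2$ gives the segment difficulty $d'\geq Q^P(S)$ directly; $|U|\geq|S_0|$ gives $|S|\geq\ell$ so Lemma~\ref{lem:proptypicalrel} applies; and the contradiction hypothesis transfers from $d$ to $d'\geq d$, yielding $A^P(J)\geq d'-x>(1-\mu)d'\geq(1-\mu)Q^P(S)=(1-\delta+3\epsilon)Q^P(S)$, the desired contradiction. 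I recommend you restructure the proof around this honest-bracket construction rather than the timestamp window.
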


\section{Proof for Section 6}
\label{app:proof}

\subsection{Proof for typical execution}
\label{app:typicalproof}
The following concentration bound on a martingale is helpful to bound the probability of a not typical execution.
\begin{thm}[from \cite{full2020}] 
\label{thm:martingale}
Let $(X_1,X_2,\ldots)$ be a martingale with respective the sequence $(Y_1,Y_2,\ldots)$, if an event $G$ implies $X_k - X_{k-1} \leq b$ and $V = \sum_{k}var[X_k - X_{k-1}| Y_1, \ldots, Y_{k-1}] \leq v$, then for non-negative $n$ and $t$
\begin{align*}
    P(X_n -X_0 \geq  t, G ) \leq e^{-\frac{t^2}{2v + \frac{2bt}{3}}}.
\end{align*}
\end{thm}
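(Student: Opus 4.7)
The plan is to prove this via the standard exponential (Chernoff--Cramér) method for martingales, which yields Freedman-type Bernstein inequalities. First I would introduce the martingale differences $Z_k = X_k - X_{k-1}$, and for a free parameter $\theta > 0$ consider the process $M_k = \exp\bigl(\theta (X_k - X_0) - \psi(\theta b) \sum_{j \le k} \mathrm{Var}[Z_j \mid Y_1,\ldots,Y_{j-1}]\bigr)$, where $\psi(u) = (e^u - 1 - u)/u^2$. The technical content is the elementary inequality $e^{\theta z} \le 1 + \theta z + \psi(\theta b)\, \theta^2 z^2$ valid for all $z \le b$ (which follows by expanding and using that $(e^u - 1 - u)/u^2$ is nondecreasing). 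Combined with the martingale property $E[Z_j \mid \mathcal{F}_{j-1}] = 0$ and the elementary bound $1 + x \le e^x$, this would give $E[\exp(\theta Z_j) \mid \mathcal{F}_{j-1}] \le \exp\bigl(\psi(\theta b)\, \theta^2 \mathrm{Var}[Z_j \mid \mathcal{F}_{j-1}]\bigr)$, which is exactly the statement that $M_k$ is a supermartingale.

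To handle the conditioning on the event $G$, which is the main subtle point, I would introduce a predictable stopping time $\tau = \min\{k : Z_{k+1} > b \text{ possible, or } V_k > v\}$, where $V_k = \sum_{j \le k} \mathrm{Var}[Z_j \mid \mathcal{F}_{j-1}]$. Working instead with the stopped process $X_{k \wedge \tau}$, the increment and variance bounds now hold unconditionally, and on the event $G$ one has $\tau \ge n$, so $X_{n \wedge \tau} = X_n$. The exponential inequality above then applies pathwise to the stopped process.

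Next, combining the supermartingale property with Markov's inequality in its exponential form gives
\begin{equation*}
P(X_n - X_0 \ge t,\, G) \le P\bigl(\exp(\theta(X_{n \wedge \tau} - X_0)) \ge e^{\theta t}\bigr) \le \exp\bigl(-\theta t + \psi(\theta b)\, \theta^2 v\bigr).
\end{equation*}
The last step is to optimize over $\theta > 0$. Rather than compute the exact minimizer, I would use the standard estimate $\psi(u) \le 1/(2 - 2u/3)$ for $u \ge 0$ (or equivalently $e^u - 1 - u \le u^2/(2 - 2u/3)$), which converts the exponent into the form $-\theta t + \theta^2 v / (2 - 2\theta b/3)$. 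Choosing $\theta = t/(v + bt/3)$ yields the advertised bound $\exp(-t^2/(2v + 2bt/3))$.

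The main obstacle is the handling of the event $G$: the hypothesis only asserts increment and variance bounds on $G$, not almost surely, so one cannot directly apply the exponential supermartingale construction to $X_k$ itself. The stopping-time truncation described above is the standard device for bridging this gap, and once it is in place the remainder is a mechanical optimization. A secondary concern is making the constants in the elementary inequalities $\psi(u) \le 1/(2 - 2u/3)$ come out cleanly; this is a routine calculus verification.
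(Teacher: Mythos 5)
The paper does not prove this statement; it is imported verbatim from \cite{full2020}, which in turn takes it from the standard literature (Freedman's inequality, cf.\ Dubhashi--Panconesi Ch.~8 / McDiarmid's Theorem 3.15). Your proof is the standard and correct derivation: the exponential supermartingale $\exp\bigl(\theta(X_k-X_0)-\psi(\theta b)\theta^2 V_k\bigr)$ built from the elementary bound $e^{\theta z}\le 1+\theta z+\psi(\theta b)\theta^2z^2$ for $z\le b$, a stopping-time truncation to localize the hypotheses, Markov's inequality, and the choice $\theta=t/(v+bt/3)$ together with $\psi(u)\le 1/(2-2u/3)$, which I have checked reproduces the exponent $-t^2/(2v+2bt/3)$ exactly. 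The one point worth flagging is that your inclusion $G\subseteq\{\tau\ge n\}$ requires reading the hypothesis ``$G$ implies $X_k-X_{k-1}\le b$'' as a statement about the conditional essential suprema of the increments (i.e., $G$ is contained in the predictable good event), not merely a pathwise bound; this is how the inequality is stated in McDiarmid and how it is invoked in this paper (the targets, and hence the maximal per-query difficulty, are determined before each query outcome), so your argument goes through as intended, but the caveat is inherited from the theorem statement itself rather than introduced by your proof.
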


And for the proof of Theorem \ref{thm:votertypicality}
\begin{proof}
The proof for $Q^{P}(S), D^{P}(S)$ and $A^{P}(J) $ can be found in \cite{full2020} and the same proof follows for $Q^i(S)$ and $D^i(S)$. We will prove it for $A^i(J)$.
For each $j \in J$, let $A_j$ be the difficulty of the block obtained with the $j^{th}$ query as long as the target was at least $1/b^i(J)$. Define 
\begin{align*}
    &X_0 = 0,\\
    &X_k = \sum_{j \in [k]} A_j - \sum_{j \in [k]} \mathbb{E}[A_j| \mathcal{E}_{j-1}], k \in [|J|],
\end{align*}
which is a martingale with respect to the sequence $\mathcal{E}_{j-1}, j\in J$. 
For the above martingale, for all $k \in [|J|]$, we have $X_k - X_{k-1} \leq b^i(J)$, using the definition of $b^i(J)$ and $ var[X_k - X_{k-1}] \leq pb^i(J)$ and $\mathbb{E}[A_j | \mathcal{E}_{j-1}] \leq p$. We will apply Theorem \ref{thm:martingale} with $t =\max\{\epsilon p|J|,\\
b^i(J)\lambda(\frac{1}{\epsilon} + \frac{1}{3}) \} \geq b^i(J)\lambda(\frac{1}{\epsilon} + \frac{1}{3})  $ and $v = b^i(J)p|J|$ to obtain
\begin{align*}
    Pr[ \sum_{j \in J} A_j \geq p|J| + t ] \leq exp \{ -\frac{t}{2b^i(J)(\frac{1}{3}+ \frac{1}{\epsilon})}\} \leq e^{-\lambda}.
\end{align*}

\end{proof}

\begin{lem}[Proposition 2 from \cite{full2020}]
\label{lem:envbounds}
In a $(\gamma, s)$-respecting environment, let $U$ be a set of at most $s$ consecutive rounds and $S \subseteq U$ then, for any $n \in \{n_r : r \in U\}$ we have
\begin{align*}
    \frac{n}{\gamma} &\leq \frac{n(S)}{|S|} \leq \gamma n, \\
    n(U) &\leq (1 + \frac{\gamma |U \setminus S|}{|S|})n(S). 
\end{align*}
\end{lem}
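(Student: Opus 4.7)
The plan is to derive both inequalities as direct algebraic consequences of Definition~\ref{def:respecting} applied to the set $U$. Since $U$ consists of at most $s$ consecutive rounds, the $(\gamma,s)$-respecting assumption gives the fundamental estimate $\max_{r \in U} n_r \leq \gamma \min_{r \in U} n_r$, which is the only hypothesis I will use.

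For the first inequality, fix $n = n_{r^\star}$ for some $r^\star \in U$. Because $r^\star \in U$, we have $\min_{r \in U} n_r \leq n \leq \max_{r \in U} n_r$, and combining this with the respecting bound yields, for every individual round $r \in U$,
\begin{equation*}
\tfrac{n}{\gamma} \;\leq\; \tfrac{\max_{r' \in U} n_{r'}}{\gamma} \;\leq\; \min_{r' \in U} n_{r'} \;\leq\; n_r \;\leq\; \max_{r' \in U} n_{r'} \;\leq\; \gamma\, \min_{r' \in U} n_{r'} \;\leq\; \gamma n.
\end{equation*}
In particular this holds for every $r \in S \subseteq U$. Summing the pointwise bound $n/\gamma \leq n_r \leq \gamma n$ over the $|S|$ rounds of $S$ and dividing by $|S|$ produces $n/\gamma \leq n(S)/|S| \leq \gamma n$.

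For the second inequality, I split $n(U) = n(S) + n(U \setminus S)$ and control the second piece. Each round $r \in U \setminus S$ satisfies $n_r \leq \max_{r' \in U} n_{r'} \leq \gamma \min_{r' \in U} n_{r'}$. Since the minimum over $S$ is bounded above by the average over $S$, i.e.\ $\min_{r' \in S} n_{r'} \leq n(S)/|S|$, and since $\min_{r' \in U} n_{r'} \leq \min_{r' \in S} n_{r'}$, we obtain $n(U \setminus S) \leq \gamma |U \setminus S| \cdot n(S)/|S|$. Adding $n(S)$ to both sides and factoring yields $n(U) \leq \bigl(1 + \gamma |U \setminus S|/|S|\bigr) n(S)$.

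There is no real obstacle here: the lemma is a purely combinatorial consequence of the respecting property, and the only bookkeeping to be careful about is carrying the arbitrary value $n$ through the sandwich $\min \leq n \leq \max \leq \gamma \min$ so that the pointwise bounds apply uniformly to every round in $U$ before summing.
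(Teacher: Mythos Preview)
Your proof is correct. The paper does not supply its own proof of this lemma---it is quoted verbatim as Proposition~2 from~\cite{full2020}---and your argument is exactly the natural one: apply the respecting bound to $U$ to sandwich every $n_r$ between $n/\gamma$ and $\gamma n$, then average over $S$ for the first claim and split $n(U)=n(S)+n(U\setminus S)$ for the second.
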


\subsection{Proof of Lemma \ref{lem:voter-prefix}}
\label{app:proof1}
By the definition of typical execution, we have the following lemma that will be useful in the proof.
\begin{lem}
\label{lem:voterproprel}
Under a typical execution, for the set of rounds $S$ with $|S|\geq \ell$, let $Q^P(S)$ correspond to the pivot tree and $Q^i(S)$ correspond to any non-pivot tree then, $Q^i(S) > Q^P(S)(1-\epsilon)[1 - (1+\delta)\gamma^2 f]^\Delta / (1 + \epsilon)$.
\begin{proof}
This follows from the definition of typicality, we use the following inequalities 
\begin{align*}
     (1 + \epsilon)pn(S) &>Q^P(S), \\
    Q^i(S) &> (1 - \epsilon)[1 - (1+\delta)\gamma^2 f]^\Delta pn(S).
\end{align*}
\end{proof}
\end{lem}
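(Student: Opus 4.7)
The plan is to derive the inequality by directly invoking the two relevant bounds from the definition of a typical execution (Definition \ref{def:typicality}) and eliminating the common factor $pn(S)$. Since $|S|\geq \ell$, both bounds from typicality apply to $S$.

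First, I would write down the typicality bound that upper-bounds the pivot-chain activity: by Definition \ref{def:typicality}, we have $Q^P(S)\leq D^P(S) < (1+\epsilon)pn(S)$, and therefore
\begin{equation*}
pn(S) > \frac{Q^P(S)}{1+\epsilon}.
\end{equation*}
Next, I would write down the corresponding lower bound for the $i$-th non-pivot tree: again by Definition \ref{def:typicality},
\begin{equation*}
Q^i(S) > (1-\epsilon)\bigl[1-(1+\delta)\gamma^2 f\bigr]^{\Delta} pn(S).
\end{equation*}

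Substituting the first inequality into the second immediately yields
\begin{equation*}
Q^i(S) > (1-\epsilon)\bigl[1-(1+\delta)\gamma^2 f\bigr]^{\Delta}\cdot\frac{Q^P(S)}{1+\epsilon},
\end{equation*}
which is exactly the claimed bound. There is no real obstacle here: both inequalities are simultaneously valid under typical execution (the definition asserts them for every sufficiently long set of rounds, and $|S|\geq\ell$), they both involve the same quantity $pn(S)$, and the factor $[1-(1+\delta)\gamma^2 f]^\Delta$ is positive under the parameter conditions \eqref{eq:re3}, so the chain of inequalities preserves direction. The only thing to be careful about is tracking that the bound on $pn(S)$ is derived via $Q^P(S)\leq D^P(S)$ (not $Q^P(S)$ directly), since typicality upper-bounds $D^P(S)$ rather than $Q^P(S)$; this is the one small logical step worth stating explicitly.
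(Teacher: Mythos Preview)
Your proposal is correct and follows essentially the same route as the paper: you invoke the two typicality bounds $(1+\epsilon)pn(S) > Q^P(S)$ and $Q^i(S) > (1-\epsilon)[1-(1+\delta)\gamma^2 f]^\Delta pn(S)$ from Definition~\ref{def:typicality} and eliminate $pn(S)$, exactly as the paper does. Your added remark that the upper bound on $Q^P(S)$ goes through $Q^P(S)\le D^P(S)$ is a helpful clarification but not a departure from the paper's argument.
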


The following proposition will be useful in the proof of non-pivot chain's common prefix.
\begin{prop}
\label{prop:typicalvoter}
In a typical execution, we have the following bound

\begin{align*}
    A^i(J) < (1 + \epsilon)p|J|
\end{align*}
for $p|J| \geq \frac{2b^i(J)\lambda}{\epsilon}(\frac{1}{3}+\frac{1}{\epsilon}).$
\begin{align*}
    A^i(J) < (1 + \epsilon)\frac{2b^i(J)\lambda}{\epsilon}(\frac{1}{3}+\frac{1}{\epsilon}) < \frac{(1 - \epsilon^2)(\frac{\epsilon}{3} +1)\epsilon \Phi}{8\gamma^5(1 + \delta)(1 + 3\epsilon)}\frac{b^i(J)}{\tau}
\end{align*}
for $p|J| < \frac{2b^i(J)\lambda}{\epsilon}(\frac{1}{3}+\frac{1}{\epsilon})$, the second inequality follows from the bound on $\ell$.

\end{prop}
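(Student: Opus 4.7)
The plan is to derive both inequalities directly from the typical execution bound on $A^i(J)$ by splitting on the size of $p|J|$ relative to the threshold $\frac{2b^i(J)\lambda}{\epsilon}(\frac{1}{3}+\frac{1}{\epsilon})$. Recall that typicality (Definition~\ref{def:typicality}) gives
\begin{equation*}
A^i(J) < p|J| + \max\{\epsilon p|J|,\, b^i(J)\lambda(\tfrac{1}{\epsilon}+\tfrac{1}{3})\},
\end{equation*}
so the whole argument is algebraic manipulation of this single inequality—no probabilistic work is required beyond citing typicality.

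For the first regime, when $p|J|\geq \frac{2b^i(J)\lambda}{\epsilon}(\frac{1}{3}+\frac{1}{\epsilon})$, I would observe that $\epsilon p|J|\geq 2 b^i(J)\lambda(\tfrac{1}{3}+\tfrac{1}{\epsilon}) > b^i(J)\lambda(\tfrac{1}{\epsilon}+\tfrac{1}{3})$, so the maximum is attained by the first term, yielding $A^i(J) < (1+\epsilon)p|J|$ immediately. For the second regime, $p|J| < \frac{2b^i(J)\lambda}{\epsilon}(\frac{1}{3}+\frac{1}{\epsilon})$, I would bound both arguments of the $\max$ by $2b^i(J)\lambda(\tfrac{1}{3}+\tfrac{1}{\epsilon})$ (the first using the case hypothesis, the second trivially), and add this to the case-hypothesis bound on $p|J|$. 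Collecting the factor $\frac{1+\epsilon}{\epsilon}$ that appears from $p|J|+2b^i(J)\lambda(\frac{1}{3}+\frac{1}{\epsilon})$ produces exactly the middle expression $(1+\epsilon)\frac{2b^i(J)\lambda}{\epsilon}(\frac{1}{3}+\frac{1}{\epsilon})$ asserted in the statement.

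The only nontrivial step is the strict inequality connecting this middle expression to $\frac{(1-\epsilon^2)(\epsilon/3+1)\epsilon\Phi}{8\gamma^5(1+\delta)(1+3\epsilon)}\cdot\frac{b^i(J)}{\tau}$. Here I would substitute the definition of $\ell$ from~(\ref{eq:re1}) into the lower bound $\Phi \geq 4(1+\delta)\gamma^2 f(\ell+3\Delta)\epsilon$ in~(\ref{eq:re2}). Since $\max\{\Delta,\tau\}\geq \tau$, this rearranges to an upper bound on $\lambda$ of the form $\lambda \leq \frac{\Phi\epsilon\,[1-(1+\delta)\gamma^2 f]^{\Delta+1}}{16(1+\delta)(1+3\epsilon)\tau\gamma^5}$. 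Plugging this back into $(1+\epsilon)\frac{2 b^i(J)\lambda}{\epsilon}(\tfrac{1}{3}+\tfrac{1}{\epsilon})$ and rewriting $\tfrac{1}{3}+\tfrac{1}{\epsilon}=\tfrac{\epsilon+3}{3\epsilon}$ together with $\tfrac{\epsilon}{3}+1=\tfrac{\epsilon+3}{3}$ aligns the combinatorial factors with the target expression, with the $(1-\epsilon^2)$ factor arising from absorbing $[1-(1+\delta)\gamma^2 f]^{\Delta+1}\leq 1$ (via $[1-(1+\delta)\gamma^2 f]^\Delta\geq 1-\epsilon$ from~(\ref{eq:re3})) and recombining $(1-\epsilon)(1+\epsilon)$.

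The main obstacle is bookkeeping: the protocol parameters $(\epsilon,\delta,\gamma,f,\tau,\Phi,\Delta,\lambda)$ enter the definition of $\ell$ in a multiplicative tangle, and keeping track of which factors go on which side during substitution into~(\ref{eq:re2}) is easy to miscount. I would handle this by isolating $\lambda$ from~(\ref{eq:re2}) first, then performing a single clean substitution into the middle expression and simplifying, using~(\ref{eq:re3}) only at the final step to absorb the Bernoulli factor $[1-(1+\delta)\gamma^2 f]^{\Delta+1}$ into a clean $(1-\epsilon)$ constant. No case analysis beyond the two ranges of $p|J|$ is needed, and no new probabilistic estimate has to be proved—the proposition is purely a packaging of the typicality definition with the parameter regime~(\ref{eq:re1})--(\ref{eq:re3}).
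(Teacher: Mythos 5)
Your first regime and the first inequality of the second regime are correct, and they follow the only route available: the proposition really is just a repackaging of the $A^i(J)$ clause of Definition~\ref{def:typicality}, split on whether $\epsilon p|J|$ or $b^i(J)\lambda(\frac{1}{\epsilon}+\frac{1}{3})$ dominates the maximum. (The paper offers no proof beyond the parenthetical ``the second inequality follows from the bound on $\ell$,'' so there is no alternative argument to compare against.)

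The final step, however, does not close as you describe it. Substituting~(\ref{eq:re1}) into~(\ref{eq:re2}) as written ($\Phi \ge 4(1+\delta)\gamma^2 f(\ell+3\Delta)\epsilon$) does give $\lambda \le \frac{\Phi\epsilon C}{16(1+\delta)(1+3\epsilon)\tau\gamma^5}$ with $C=[1-(1+\delta)\gamma^2 f]^{\Delta+1}$, exactly as you say; but plugging this into the middle expression yields
\[
(1+\epsilon)\frac{2b^i(J)\lambda}{\epsilon}\Bigl(\frac{1}{3}+\frac{1}{\epsilon}\Bigr)\;\le\;\frac{(1+\epsilon)\bigl(\frac{\epsilon}{3}+1\bigr)\Phi\,C}{8\,\epsilon\,\gamma^5(1+\delta)(1+3\epsilon)}\cdot\frac{b^i(J)}{\tau},
\]
which carries a factor $1/\epsilon$ where the target carries a factor $\epsilon$: your derived upper bound exceeds the claimed right-hand side by a factor of $C/\bigl((1-\epsilon)\epsilon^2\bigr)\gg 1$, so the chain of inequalities does not establish the proposition. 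The $1/\epsilon^2$ discrepancy disappears only if one reads~(\ref{eq:re2}) with $/\epsilon$ in place of $\cdot\,\epsilon$ (as in the analogous condition~(\ref{eqn:phi}) for \fruitchainsnosp), which you do not flag. Even then a gap remains: you claim the $(1-\epsilon^2)$ factor ``arises from absorbing $[1-(1+\delta)\gamma^2 f]^{\Delta+1}\le 1$'' together with~(\ref{eq:re3}), but discarding a factor that is at most $1$ from an upper bound can only move the bound upward, so it cannot manufacture the $(1-\epsilon)$ needed to land below the target; and~(\ref{eq:re3}) bounds $[1-(1+\delta)\gamma^2 f]^{\Delta}$ from \emph{below} by $1-\epsilon$, which is the wrong direction for concluding $C\le 1-\epsilon$. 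As written, your argument therefore does not prove the second displayed inequality (and with the constants exactly as printed it appears not to follow from~(\ref{eq:re1})--(\ref{eq:re3}) at all); a correct write-up would have to either repair the parameter condition being invoked or weaken the stated right-hand side, noting that all the proof of Lemma~\ref{lem:voter-prefix} actually needs downstream is $A^i(J)\le b^i(J)\Phi/\tau$.
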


For the proof of Lemma ~\ref{lem:voter-prefix}
\begin{proof}
Consider the $i^{th}$ non-pivot chain, suppose common prefix fails for two chains $\C_1$ and $\C_2$ held by honest players at rounds $r_1 \leq r_2$ respectively, that is, $\exists B \in \C_1^{\lceil \ell + 2\Delta}$, s.t. $B \notin \C_2$. It is not hard to see that in such a case there was a round $r \leq r_2$ and two honest held chains $\C$ and $\C'$ in $\mathcal{S}_r^i$, such that $B \in \C^{\lceil \ell + 2\Delta}$ but $B \notin \C'$. Then we know $B$ is a descendant of head$(\C \cap \C')$, and hence head$(\C \cap \C') \in  \C^{\lceil \ell + 2\Delta}$. Therefore, the timestamp of head$(\C \cap \C')$ is less than $r - \ell -2\Delta$. 

Let $v < r - \ell -2\Delta$ be the timestamp of head$(\C \cap \C')$ and $w\leq v$ be the timestamp of the last honest block $B^i_h$ on $(\C \cap \C')$. Let $U^i =\{u: w \leq u \leq r\}, S^i =\{u: w + \Delta \leq u \leq r - \Delta \}$ and let $J^i$ be the adversarial queries in rounds $U^i$.
Let $\mathcal{S}^P_{r,w-\Delta}$ be the collection of pivot chains heavier than at least one chain in $\mathcal{S}_{w-\Delta}$. And for $j \in J^i$, let $\mathcal{S}^P_{j,w-\Delta}$ be the collection of pivot chains heavier than at least one chain in $\mathcal{S}_{w-\Delta}$. Due to condition \textbf{M2}, all the difficulties of the blocks in $C$ or $C'$ that come after $B_0^i$ are extending $\mathcal{S}_{r,w-\Delta}^P$. 
We have $b=b^i(J) = \max_{j\in J^i} \sup\{A^i_j - A^i_{j-1}| \mathcal{E}_{j-1} = E_{j-1}\} = \max_{j\in J^i} sup\{A^i_j - A^i_{j-1}| \mathcal{E}_{j-1} = E_{j-1}\} = \max_{j\in J^i} \sup\{ {\rm diff}(C^PB^*) |C^* \in \mathcal{S}_{j,w-\Delta}  \}  = \sup_{C^P \in \mathcal{S}_{r,w-\Delta}}\{  {\rm diff}(C^PB^*)  \} $.
The last equality applies because, for $j \in J^i$, $\mathcal{S}_{j,w-\Delta} \subseteq \mathcal{S}_{r,w-\Delta}$. Let $C^* \in \mathcal{S}_{r,w-\Delta}$ be the chain for which ${\rm diff}(C^*B^*) = b$. In case such $C^*$ doesn't exist, there exists a sequence of chains $C^*_n$, such that ${\rm diff}(C^*_nB^*)$ approaches $ b$ in limit. Let the block $B^P_h$ be the last honest block on $C^*$ with timestamp $x$.

We claim that if $r > \ell + 2\Delta + w$, then $A^i(J^i) < (1 + \delta + 3\epsilon)Q^i(S^i)$. The proof is as follows.
When $p|J^i| \geq \frac{2b\lambda}{\epsilon}(\frac{1}{3}+\frac{1}{\epsilon})$, the concentration bound $A^i(J^i)<(1+\epsilon)p|J^i|$ applies. We have $n(U^i) \leq n(S^i)(1 + \gamma|U\setminus S|/|S|)<(1 + \epsilon^2/2)n(S)$ and
\begin{align*}
    A^i(J^i)&<(1+\epsilon)(1-\delta)pn(U^i) <(1+\epsilon)(1 + \epsilon^2/2)(1-\delta)pn(S^i)\\  
    &< (1-\delta + \epsilon)pn(S^i)< (1 -\delta + 3\epsilon)Q^i(S^i) 
\end{align*}
We will prove this when $p|J^i| < \frac{2b\lambda}{\epsilon}(\frac{1}{3}+\frac{1}{\epsilon})$.

\begin{center}
    \begin{tabular}{ | p{0.8cm} | p{7cm}|  } 
        \hline
        Case 1 &  $C^*$ has at most one target-recalculation point after $B_h^P$ and $w \leq x \leq r \Rightarrow f/2\gamma^2\tau < \frac{1}{b}n_x p$\\ 
        \hline
        Case 2 & $C^*$ has at least two target-recalculation point after $B_h^P$ and $w \leq x \leq r$  \\ 
        \hline
        Case 3 & $x<w, |w-x| > \gamma^2 (1 + \delta)\Phi/f -\ell -2 \Delta$
        \\
        \hline
        Case 4 & $x<w, |w-x| < \gamma^2 (1 + \delta)\Phi/f -\ell -2 \Delta$, and $C^*$ has at most one target-recalculation point after $B_h^P$ 
        \\
        \hline
        Case 5 & $x<w, |w-x| < \gamma^2 (1 + \delta)\Phi/f -\ell -2 \Delta$, and $C^*$ has at least two target-recalculation point after $B_h^P$
        \\
\hline
    \end{tabular}
\end{center}

\begin{figure}
    \centering
    \includegraphics[width=\columnwidth]{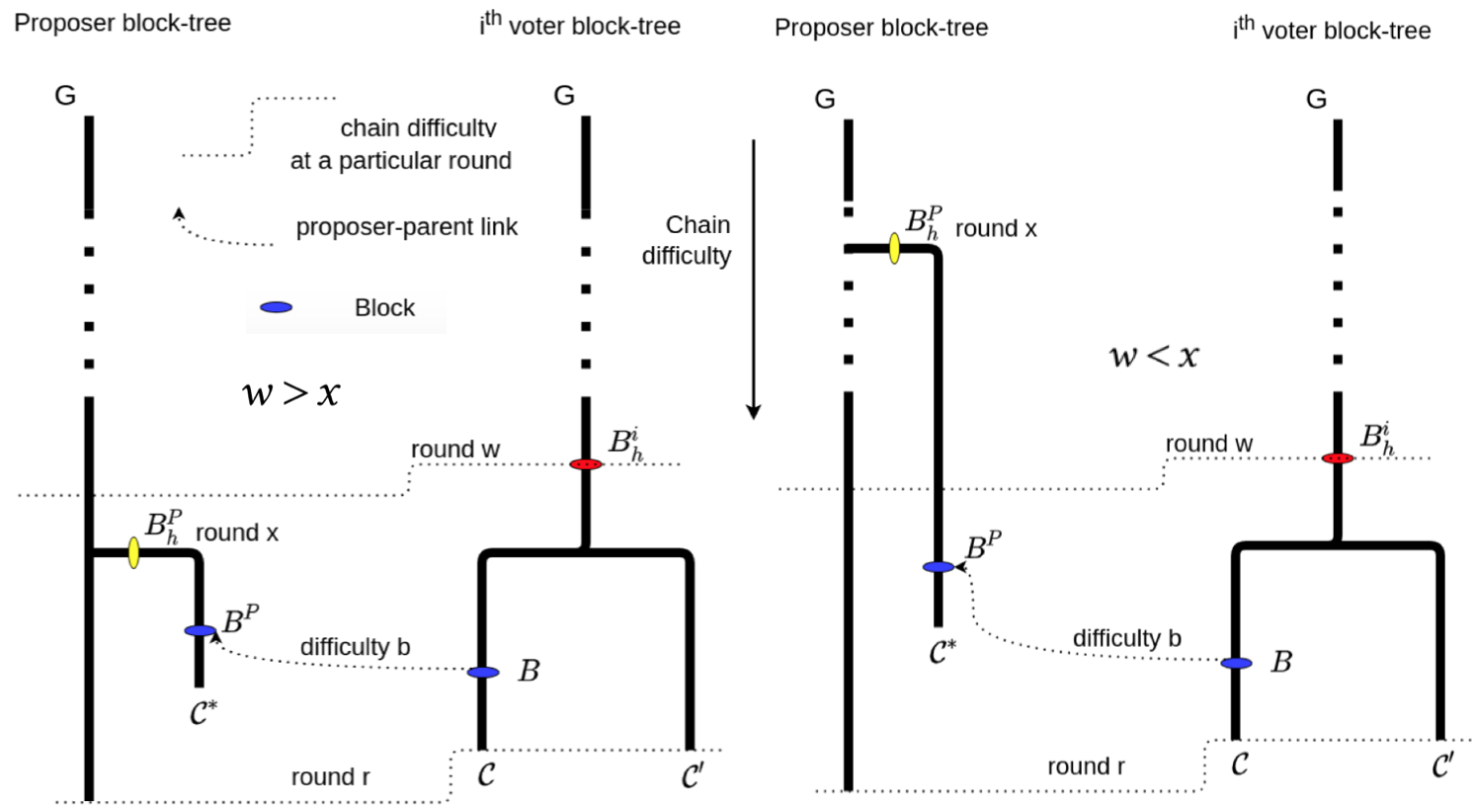}
    \vspace{-2mm}
    \caption{Common Prefix Proof (Left): $w<x$, (Right): $w>x$}
    \label{fig:vchaincp1}
\end{figure}


Cases 1, 2 are shown in left and cases 3,4,5 in right of  Figure \ref{fig:vchaincp1}.

{\bf \noindent Case 1:} The last honest block $B^P_h$ in the chain $C^*$ has a timestamp $x\geq w$. We will look at the case when $C^*$ has at most one target recalculation point after $B^P_h$. In this case the difficulty $b$ satisfies  $\frac{f}{2\gamma^2\tau} < \frac{1}{b}n_x p$ since the difficulty can raise by at most a factor of $\tau$ and considering the first $\ell$ rounds in $S^i$, we have ${n(S^i)} > \frac{n_x}{\gamma}\ell$. Using typicality we have, $p|J| \leq (1-\delta + \epsilon^2/2)pn(S)$ and 
\begin{align*}
    \epsilon(1-2\epsilon)pn(S^i) > \epsilon(1-2\epsilon) \frac{pn_x\ell b}{\gamma b} > \frac{\epsilon (1-2\epsilon) f\ell b}{2\gamma^3\tau} \geq 2b\lambda(\frac{1}{\epsilon} + \frac{1}{3}),\\
    A^i(J^i) < p|J| + 2b\lambda(\frac{1}{\epsilon} + \frac{1}{3}) \leq (1 - \delta +\epsilon)pn(S^i) < (1 -\delta + 3\epsilon)Q^i(S^i)
\end{align*}

{\bf \noindent Case 2:} The last honest block $B^P_h$ in the chain $C^*$ has a timestamp $x\geq w$. We will look at the case when $C^*$ has more than one target recalculation point after $B^P_h$. Let $U^P$  be the set of rounds $\{ u: x \leq u \leq r\}$, $S^P$ be the set of rounds $\{ u: x +\Delta \leq u \leq r - \Delta\}$ and $J^P$ be the queries made by the adversary for the proposer chain in $U^P$. In this case difficulty accumulated by the adversary in $J^P$ queries is at least $\frac{b}{\tau} \Phi$. Using typicality, we have $p|J^P| > \frac{b\Phi}{\tau (1 + \epsilon)}$ and using honest party's advantage we have $n(S^i) \geq n(S^P) > \frac{|J^P|}{(1-\delta)(1 + \epsilon^2/2)}  $. 

\begin{align*}
    (1 -\delta +3\epsilon)Q^i(S^i) &> (1-\epsilon)[1 - (1+\delta)\gamma^2 f]^\Delta \frac{(1 -\delta +3\epsilon)p|J^P|}{(1-\delta)(1 + \epsilon^2/2)} \\
    &> \frac{b\Phi}{\tau} \geq A^i(J^i).
\end{align*}
The last inequality implies from Proposition \ref{prop:typicalvoter}.

{\bf \noindent Case 3:} Consider the case when $x<w$ and $|w-x|>s/2 - \ell  - 2\Delta$. Let $S' := \{u: x+\Delta \leq u \leq w - \Delta \} $, $U^P = \{u: x \leq u \leq r\}$ and $J^P$ be the set of adversarial queries for the proposal tree in the rounds $U^P$. The difficulty accumulated in the chain $C^*$ in $J^P$ queries is more than that of the chain growth in $S'$.
\begin{align*}
    A^P(J^P) \geq ChainGrowth^P(S') \geq Q^P(S'). 
\end{align*}
Since $s = \frac{2\gamma^2(1+\delta)\Phi}{f}$, we have $|S'| \geq (1+\delta)(1-\epsilon)\gamma^2\Phi/f$. 
Considering the first $s/2 - \ell  $ rounds in $U^P \setminus U^i$  , if $T_x$ is the target used by the honest party in round $x$, then $\frac{n(S')}{|S'|} \geq \frac{n_x}{\gamma}$ and $ T_xn_xp \geq \frac{f}{2\gamma^2}$. Using these, we have
\begin{align*}
    Q^P(S') &> (1-\epsilon) [1 - (1+\delta)\gamma^2 f]^\Delta pn(S') \\
    &\geq (1+\delta)(1-\epsilon)^3\frac{\gamma \Phi pn_xT_x}{2fT_x} \geq (1+\delta)(1-\epsilon)^3\frac{\Phi}{2\gamma T_x} > \frac{\Phi}{2T_x \gamma} 
\end{align*}
Note that starting with target $T_x$, if $C^*$ has at most one target recalculation point after $B_h^P$, then the accumulated difficulty is at least $\frac{\Phi}{2 \gamma}\frac{b}{\tau}$, which is a smaller quantity than $\frac{\Phi}{2T_x \gamma}$. If the chain has more than one target recalculation point, then the accumulated difficulty is at least $m\frac{b}{\tau}$ which is larger than $\frac{\Phi}{2 \gamma}\frac{b}{\tau}$. Hence, the accumulated difficulty will be at least $ \frac{\Phi}{2 \gamma}\frac{b}{\tau}$ in any case. 
\begin{align*}
     |J^P|p(1+\epsilon) > A^P(J^P) \geq  Q^P(S'), \\
     A^P(J^P)> \frac{\Phi}{2 \gamma}\frac{b}{\tau} 
\end{align*}
We have $n(S^i) + n(S')> \frac{|J^P|}{(1-\delta)(1 + \epsilon^2/2)}$ and
\begin{align*}
    Q^i(S^i) + Q^i(S') &> (1-\epsilon)[1 - (1+\delta)\gamma^2 f]^\Delta \frac{p|J^P|}{(1-\delta)(1 + \epsilon^2/2)},\\ 
    Q^i(S') &< Q^P(S') \frac{(1+\epsilon)}{(1-\epsilon)[1 - (1+\delta)\gamma^2 f]^\Delta} \\
    &< \frac{(1+\epsilon)^2}{(1-\epsilon)[1 - (1+\delta)\gamma^2 f]^\Delta} p|J^P| 
\end{align*}
Combining both we have,
\begin{align*}
    Q^i(S^i) > p|J^P|\frac{((1-\epsilon)[1 - (1+\delta)\gamma^2 f]^\Delta)^2 - (1+\epsilon)^2(1-\delta)(1 + \epsilon^2/2)}{(1-\epsilon)[1 - (1+\delta)\gamma^2 f]^\Delta (1-\delta)(1 + \epsilon^2/2)},
\end{align*}
and then
\begin{align*}
    &(1 -\delta + 3\epsilon)Q^i(S^i) \\
    >& (1 + 3\frac{\epsilon}{1-\delta})\frac{(\delta (1+\epsilon^2/2)(1+\epsilon)^2 - 6\epsilon)}{(1-\epsilon^2)[1 - (1+\delta)\gamma^2 f]^\Delta (1+\epsilon^2/2)}\frac{\Phi}{2 \gamma} \frac{b}{\tau}\\
    >& A^i(J^i),
\end{align*}
Where the last inequality follows from the condition $\delta > 8\epsilon$. 

{\bf \noindent Case 4:} Consider the case the last honest block in the chain containing $B$'s proposer parent has a timestamp $x<w$ and $|w-x|<s/2 - \ell  - 2\Delta$ and $C^*$ has at most one target recalculation point after $B^P_h$. Let $S' := \{u: x+\Delta \leq u \leq w - \Delta \} $. The difficulty accumulated $C^*$ in $J^P$ queries is more than that of the chain growth in $S'$. Considering just first $\ell $ rounds in $S^i$, we have $n(S^i) >\ell n_x/\gamma$ and $b$ satisfies  $\frac{f}{2\gamma^2\tau} < \frac{1}{b}n_x p$. Using these bounds and Lemma \ref{lem:chaingrowth}, we have

\begin{align*}
    \epsilon(1-2\epsilon)pn(S^i) > \epsilon(1-2\epsilon) \frac{pn_x\ell b}{\gamma b} > \frac{\epsilon (1-2\epsilon) f\ell b}{2\gamma^3\tau} \geq 2b\lambda(\frac{1}{\epsilon} + \frac{1}{3}),\\
    A^i(J^i) < p|J| + 2b\lambda(\frac{1}{\epsilon} + \frac{1}{3}) \leq (1 - \delta +\epsilon)pn(S^i) < (1 -\delta + 3\epsilon)Q^i(S^i)
\end{align*}

{\bf \noindent Case 5:} Consider the case the last honest block in the chain containing $B$'s proposer parent has a timestamp $x<w$ and $|w-x|<s/2 - \ell  - 2\Delta$. Let $S' := \{u: x+\Delta \leq u \leq w - \Delta \} $. The difficulty accumulated by $C^*$ in $J^P$ queries is more than that of the chain growth in $S'$. We will consider the case $C^*$ has more than one target recalculation point after $B^P_h$.  The adversary accumulates more than $\frac{b}{\tau}\Phi$ difficulty in $J^P$ queries and similar to \textbf{Case 4}, we have
\begin{align*}
     & |J^P|p(1+\epsilon) > A^P(J^P) \geq   \Phi\frac{b}{\tau},\\
     & A^P(J^P) \geq   Q^P(S'), \\
     & n(S^i) + n(S')> \frac{|J^P|}{(1-\delta)(1 + \epsilon^2/2)},
\end{align*}
and 
 \begin{align*}
    Q^i(S^i) + Q^i(S') &> (1-\epsilon)[1 - (1+\delta)\gamma^2 f]^\Delta \frac{p|J^P|}{(1-\delta)(1 + \epsilon^2/2)},\\ 
    Q^i(S') &< Q^P(S') \frac{(1+\epsilon)}{(1-\epsilon)[1 - (1+\delta)\gamma^2 f]^\Delta}\\ &< \frac{(1+\epsilon)^2}{(1-\epsilon)[1 - (1+\delta)\gamma^2 f]^\Delta} p|J^P| 
\end{align*}
Combining both we have
\begin{align*}
    Q^i(S^i) &> p|J^P|\frac{((1-\epsilon)[1 - (1+\delta)\gamma^2 f]^\Delta)^2 - (1+\epsilon)^2(1-\delta)(1 + \epsilon^2/2)}{(1-\epsilon)[1 - (1+\delta)\gamma^2 f]^\Delta (1-\delta)(1 + \epsilon^2/2)},\\
\end{align*}
and then
\begin{align*}
    &(1 -\delta + 3\epsilon)Q^i(S^i) \\
    >& (1 + 3\frac{\epsilon}{1-\delta})\frac{(\delta (1+\epsilon^2/2)(1+\epsilon)^2 - 6\epsilon)\Phi b}{(1-\epsilon^2)[1 - (1+\delta)\gamma^2 f]^\Delta (1+\epsilon^2/2)\tau} \\
    >& A^i(J^i),
\end{align*}
The last inequality follows from the condition $\delta > 8\epsilon$.

We also claim that, if $r-w > \ell + 2\Delta$, then $2Q^i(S^i) \leq D^i(U^i) + A^i(J^i)$, which leads to a contradiction as $D^i(U^i) < (1+5\epsilon)Q^i(S)$ and $A^i(J^i) < (1 -\delta +3\epsilon)Q^i(S^i)$.

Towards proving the claim above, associate with each $r \in S$ such that $Q^i_r > 0$ an arbitrary honest block that is computed at round $r$ for difficulty $Q^i_r$. Let $\mathcal{B}$ be the set of these blocks and note that their difficulties sum to $Q^i(S)$. Then consider a block $B \in \mathcal{B}$ extending a chain $\C^*$ and let $d = {\rm diff}(\C^*B)$. If $d \leq {\rm diff}(\C \cap \C')$ (note that $u < v$ in this case and head$(\C \cap \C')$ is adversarial), let $B_0$ be the block in $\C \cap \C'$ containing d. Such a block clearly exists and has a timestamp greater than $u$. Furthermore, $B_0 \notin \mathcal{B}$, since $B_0$ was an adversarial block. If $d > {\rm diff}(\C \cap \C')$, note that there is a unique $B \in \mathcal{B}$ such that $d \in B$. Since $B$ cannot simultaneously be on chain $\C$ and $\C'$, there is a $B_0 \notin \mathcal{B}$ either on $\C$ or on $\C'$ that contains $d$. Hence there exists a set of blocks $\mathcal{B}'$ computed in $U$ such that $\mathcal{B} \cap \mathcal{B}' = \empty$ and $\{d \in B: B \in \mathcal{B}\} \subseteq \{d \in B : B \in \mathcal{B}'\}$. Because each block in $\mathcal{B}'$ contributes either to
$D^i(U) - Q^i(S)$ or to $A^i(J)$, we have $Q^i(S^i) \leq D^i(U^i) - Q^i(S) + A^i(J^i)$.

\end{proof}

\subsection{Chain Quality of Non-pivot Chains}
\label{app:proof2}

\begin{proof}[Proof of Lemma \ref{lem:voter-quality}]
Without loss of generality, we focus on the first non-pivot chain. 
Let $B_i$ denote the $i$-th block of $\C$ and consider $K$ consecutive blocks $B_u, \cdots, B_v$ in $\C$ with timestamp in $S_0$. Define $K_0$ as the least number of consecutive blocks $B_{u'},\cdots, B_{v'}$ that include the $K$ given ones (i.e., $u' \leq u$ and $v \leq v'$) and have the properties (1) that the block $B_{u'}$ was mined by an honest party at some round $r_1$ or is the genesis block in case such block does not exist, and (2) that there exists a round $r_2$ such that the chain ending at block $B_{v'}$ is adopted by some honest node at round $r_2$. Let $d'$ be the total difficulty of these $K'$ blocks. Define $U = \{r_1,\cdots, r_2\}$, $S = \{r_1 + \Delta,\cdots, r_2 - \Delta\}$, and $J$ the adversarial queries in $U$ associated with the $K'$ blocks. Then we have $|S| = |U| - 2\Delta \geq |S_0| -2\Delta \geq \ell$. Then following the same argument from Lemma~\ref{lem:voter-prefix}, we have $A^1(J) < (1 - \delta + 3\epsilon)Q^1(S)$.
Let $x$ denote the total difficulty of all the blocks from honest parties that are included in the $K$ blocks and—towards a contradiction—assume $x < \mu d \leq \mu d'$. In a typical execution, all the $K'$ blocks have been mined in $U$. But then we have the following contradiction
\begin{equation*}
    A^1(J) \geq d' - x > (1 - \mu)d' \geq (1 - \mu)Q^1(S) = (1 - \delta + 3\epsilon)Q^1(S).
\end{equation*}
Therefore, we can conclude the proof.
\end{proof}

\subsection{Common Prefix and Chain Quality of the Leader Sequence}
\label{app:proof3}

\begin{proof}[Proof of Lemma \ref{lem:leader-prefix}]
Let $r \geq R_d + 2\ell +4\Delta$ be the current round. For $1 \leq i \leq m$, let $\C_i$ be the heaviest voter chain $i$ in an honest node $u$'s view at round $r$. By the common prefix property in Lemma~\ref{lem:voter-prefix}, blocks in $\C_i^{\lceil \ell +2\Delta}$ remain unchanged until $\rmax$. In addition, by the chain quality property in Lemma~\ref{lem:voter-quality}, we know that for $1 \leq i \leq m$, there exists at least one honest block $B_i$ on chain $\C_i$ whose timestamp is in the interval $(r-2\ell-4\Delta, r-\ell-2\Delta)$,i.e., $B_i$ is on the chain $\C_i^{\lceil \ell +2\Delta}$. As $B_i$ is an honest block mined after $R_d$, $B_i$ or an ancestor of $B_i$ must have voted for the difficulty level $d$. Therefore the leader sequence remains unchanged up to difficulty level $d$ until $\rmax$.
\end{proof}

\begin{proof}[Proof of Lemma \ref{lem:leader-quality}]
Let $r$ be the current round, $\C$ be the proposer chain held by honest player $P$, and $d = \rm{diff}(\C)$. Let interval $D = (d',d]$ be the difficulty range covered by all blocks in $\C$ with timestamp in last $\ell + 2\Delta$ rounds. Define:
\begin{align*}
    d^* &:= \max \big( \tilde{d} \leq d' \;\;s.t\;\; \text{ the honest players mined} \\
     &\text{the first proposer block covering~} \tilde{d}\big)
\end{align*}
Let $r^*$ be the round in which the first proposer block covering $d^*$ was mined. $r^* = 0$ and $d^* = 0$ if such proposer block does not exists. Define $U = \{r^*,\cdots, r\}$, $S = \{r^* + \Delta,\cdots, r - \Delta\}$, and $J$ the adversarial queries in $U$. Then we have $|S| = |U| - 2\Delta \geq \ell$. From the definition of $d^*$ we have the following two observations:
\begin{enumerate}
    \item All difficulties in $(d^*,d']$ are covered by at least one adversarial proposer block.
    \item All the proposer blocks covering $(d^*,d]$ are mined in the interval $U$ because there are no proposer blocks covering $d^*$ before round $r^*$ and hence no player can mine a proposer block covering a difficulty level greater than $d^*$ before round $r^*$.
\end{enumerate}

Let $L_h$ be  the size of difficulty range covered by honest leader blocks in the range $(d',d]$ and say
\begin{equation}
    L_h < \mu (d-d') \leq \mu(d-d^*). \label{eqn:local_contradition_1}
\end{equation}
Let $L_h'$ be  the size of difficulty range covered by honest leader blocks in the range $(d^*,d']$. The adversarial leader blocks have covered difficulty ranges with size $d-d^* - L_h - L_h'$ in the interval $U$. From our first observation, we know that adversarial proposer blocks in the difficulty range $[d^*,d']$ which are \textit{not} leader blocks cover difficulty ranges with size at least $L_h'$, and from our second observation, these proposer blocks are mined in the interval $U$. Therefore, we have the following bound on $A^P(J)$
\begin{align}
    A^P(J)&\geq(d-d^*-L_h-L_h')+L_h' \nonumber \\
     &= d-d^*-L_h \nonumber \\
     \left(\text{From Equation } \eqref{eqn:local_contradition_1}\right)& > d-d^* - \mu(d-d^*) \nonumber \\
     &= (1-\mu) (d-d^*). \label{eqn:local_454}
\end{align}
From the chain growth, we know that $d-d^* \geq Q^P(S)$ and combining this with Equation \eqref{eqn:local_454} gives us 
\begin{equation}
A^P(J) >  (1-\mu)Q^P(S) = (1 - \delta + 3\epsilon)Q^P(S), \label{eqn:chain_quality_contradiction1}   
\end{equation}
which contradicts Lemma \ref{lem:proptypicalrel}.
\end{proof}
\section{Persistence and Liveness of \ohie}
\label{app:ohie}
\begin{lem}[Common prefix and chain quality for each individual chain]
\label{lem:ohie-ind-quality}
For a typical execution in a $(\gamma, 2(1 + \delta)\gamma^2 m/f)$-respecting environment, each chain $i$ ($0 \leq i \leq m-1$) satisfy the the common-prefix property with parameter $\ell_{\rm cp} = \ell +2\Delta$ and chain-quality property with parameter $\ell_{\rm cq} = \ell +2\Delta$ and $\mu = \delta - 3\varepsilon$.
\end{lem}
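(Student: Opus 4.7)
The plan is to handle the pivot chain (chain $0$) and the non-pivot chains ($1 \le i \le m-1$) separately, reducing each case to results already developed earlier in the paper.

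For chain $0$, note that it runs Nakamoto's heaviest-chain rule with exactly the Bitcoin target-recalculation function of Definition~\ref{def:targetrecalc}, independently of any interaction with the other chains (only non-pivot chains refer back to chain $0$, not vice versa). Hence the common-prefix and chain-quality statements for chain $0$ are immediate from Lemma~\ref{lem:proposer-prefix} and Lemma~\ref{lem:proposer-quality} in Appendix~\ref{app:backbone}, which were established exactly for a single heaviest chain with Bitcoin-style difficulty adjustment under a typical execution in a $(\gamma, 2(1+\delta)\gamma^2\Phi/f)$-respecting environment.

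For a non-pivot chain $i$ with $1 \le i \le m-1$, the argument is structurally identical to the proof of Lemma~\ref{lem:voter-prefix} and Lemma~\ref{lem:voter-quality} for \schemenosp's voter chains. The two key ingredients transfer directly: by \textbf{M1}, an honest miner choosing to extend the tip of chain $i$ picks its target from the tip of a heaviest pivot chain in $\mathcal{S}^P_r$, which by Lemma~\ref{lem:allgoodrounds} is good, so every honest target used on chain $i$ satisfies $f/2\gamma^2 \le p n_r T_r \le (1+\delta)\gamma^2 f$; by \textbf{M2}, any adversarial block appended to chain $i$ must reference a chain $0$ parent of chain-difficulty no smaller than the previous one on $i$, which forbids the adversary from reusing an old, very small target. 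The remainder of the common-prefix argument then proceeds exactly as for Prism: assume two chains $\C_1, \C_2$ held by honest parties diverge for more than $\ell_{\rm cp} = \ell + 2\Delta$ rounds past their last common honest block, set $U, S, J$ for rounds and adversarial queries after that block, and show that $2 Q^i(S) \le D^i(U) + A^i(J)$ by pairing honest blocks contributing to $Q^i(S)$ with distinct adversarial/honest blocks on $\C_1 \cup \C_2$ covering the same difficulty intervals. Contradict this against the typicality bounds $D^i(U) < (1+5\epsilon)Q^i(S)$ and $A^i(J) < (1 - \delta + 3\epsilon) Q^i(S)$. The chain-quality statement follows analogously to Lemma~\ref{lem:voter-quality}: take $K$ consecutive blocks in chain $i$ with timestamps in a window of length at least $\ell_{\rm cq}$, enclose them in a minimal segment starting at an honest-mined (or genesis) block and ending in a chain adopted by an honest party at some round, and compare the adversarial contribution against $(1-\mu)Q^i(S)$ using the same bound $A^i(J) < (1 - \delta + 3\epsilon) Q^i(S)$.

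The only genuinely technical step, and the main obstacle, is re-establishing the bound $A^i(J) < (1 - \delta + 3\epsilon) Q^i(S)$ when the number of adversarial queries $p|J|$ is small compared to the variance term $2 b^i(J)\lambda(1/\epsilon + 1/3)$, because then typicality alone does not suffice. The resolution is the same five-case analysis used in Lemma~\ref{lem:voter-prefix}, performed on the chain-$0$ chain $\C^*$ that witnesses the maximum block-difficulty $b = b^i(J)$ appended by the adversary on chain $i$. The cases split on whether the last honest block on $\C^*$ has timestamp before or after the start of $U$, on the magnitude of that gap relative to $s/2 - \ell - 2\Delta$, and on whether $\C^*$ crosses zero, one, or $\ge 2$ chain-$0$ target-recalculation points after that last honest block. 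In each case, the combination of \textbf{M1} (honest targets are good on chain $0$) and \textbf{M2} (adversarial block difficulty is tied to a non-decreasing pivot reference) forces either a quantitative lower bound on $|S|$ relative to $b$, or a lower bound on the pivot-chain growth $Q^P(S')$ of an auxiliary interval, either of which yields the contradiction exactly as in Appendix~\ref{app:proof1}. Substituting the OHIE superscripts for the Prism voter-tree superscripts produces the desired parameters $\ell_{\rm cp} = \ell_{\rm cq} = \ell + 2\Delta$ and $\mu = \delta - 3\epsilon$.
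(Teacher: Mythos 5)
Your proposal is correct and matches the paper's approach exactly: the paper's own proof simply states that the lemma ``directly follows'' from Lemmas~\ref{lem:proposer-prefix} and \ref{lem:proposer-quality} for chain $0$ and Lemmas~\ref{lem:voter-prefix} and \ref{lem:voter-quality} for the non-pivot chains, which is precisely the reduction you carry out (with considerably more detail than the paper gives). The extra elaboration on how \textbf{M1}, \textbf{M2}, and the five-case analysis transfer to \ohie is accurate and consistent with Appendices~\ref{app:proof1} and \ref{app:proof2}.
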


\begin{proof}
The proof directly follows Lemmas \ref{lem:proposer-prefix},\ref{lem:proposer-quality}, \ref{lem:voter-prefix} and \ref{lem:voter-quality}.
\end{proof}

We call all blocks on a chain except the blocks with timestamp in last $\ell + 2\Delta$ rounds as partially confirmed. Then by Lemma \ref{lem:ohie-ind-quality}, we know that the ordering of these partially confirmed blocks on their chain will no longer change in the future.
Recall that \ohie generates a SCB in the following way. Consider any given honest node at any given time and its local view of all the $m$ chains. Let $y_i$ be the ${\tt next\_rank}$ of the last partially-confirmed block on chain $i$ in this view. Let ${\tt confirm\_bar} \leftarrow \min_{i=1}^k y_i$. Then all partially-confirmed blocks whose rank is smaller than ${\tt confirm\_bar}$ are deemed as fully-confirmed, and included in SCB. Finally, all the fully-confirmed blocks will be ordered by increasing ${\tt rank}$ values, with tie-breaking favoring smaller chain ids. Next, we will prove that the SCB generated in this way satisfies persistence and liveness properties.

\begin{thm}[Persistence and Liveness of \ohie]
\label{thm:ohie}
For a typical execution in a $(\gamma, 2(1 + \delta)\gamma^2 m/f)$-respecting environment, \ohie satisfies persistence and liveness with parameter $u = 2\ell +4\Delta$.
\end{thm}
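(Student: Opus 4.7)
The plan is to reduce persistence and liveness of \ohie{} to the common prefix and chain quality properties of each individual chain (Lemma~\ref{lem:ohie-ind-quality}), combined with the monotonicity and cross-chain propagation structure of the $({\tt rank}, {\tt next\_rank})$ bookkeeping.

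For persistence, I would rely on three observations: (i) by common prefix on each chain, the partially-confirmed prefix of chain $i$ can only grow over rounds and is consistent across honest nodes; (ii) along any single chain, ${\tt next\_rank}$ is strictly monotone, because for every block $B$ we have ${\tt next\_rank}(B) \geq {\tt rank}(B) + {\rm diff}(B) > {\tt rank}(B) = {\tt next\_rank}({\tt parent}(B))$; (iii) therefore each $y_i$ (the ${\tt next\_rank}$ of the last partially-confirmed block on chain $i$) is non-decreasing across rounds, so ${\tt confirm\_bar} = \min_i y_i$ is non-decreasing. Because the ${\tt rank}$ of any fully-confirmed block is fixed the moment it becomes partially-confirmed and tie-breaking is by chain id, its position in the totally ordered ledger is preserved, yielding persistence.

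For liveness, fix a transaction ${\tt tx}$ received by all honest nodes by round $r_0$ and any round $r \geq r_0 + u = r_0 + 2\ell + 4\Delta$. I would proceed in three steps. (a) Applying chain quality (Lemma~\ref{lem:ohie-ind-quality}) on each chain $j$ to the interval $[r_0+1, r_0+\ell+2\Delta]$ of length $\ell_{\rm cq}$ extracts an honest block $B_j$ on chain $j$'s heaviest chain with timestamp in that interval; since $B_j$ is mined honestly after $r_0$, it contains ${\tt tx}$. (b) Since every $B_j$ has timestamp at most $r_0 + \ell + 2\Delta \leq r - \ell - 2\Delta$, each $B_j$ is partially-confirmed at round $r$, and by common prefix every honest node agrees on this. (c) To show $B_0$ is fully-confirmed it suffices to verify ${\tt confirm\_bar} > {\tt rank}(B_0)$. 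For chain $0$, this is immediate: the last partially-confirmed block $L_0$ on chain $0$ is a descendant of (or equal to) $B_0$, so $y_0 = {\tt next\_rank}(L_0) \geq {\tt next\_rank}(B_0) > {\tt rank}(B_0)$. For $j \neq 0$, once $B_0$ is visible to all honest miners (by round $r_0 + \ell + 3\Delta$) any subsequently mined honest block on chain $j$ inherits ${\tt next\_rank} \geq {\tt next\_rank}(B_0)$ via the cross-chain $\max$ rule, and combining a second application of chain quality on chain $j$ with monotonicity of ${\tt next\_rank}$ along chain $j$ yields $y_j > {\tt rank}(B_0)$. Persistence then places ${\tt tx}$ in the same position across all honest nodes.

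The main obstacle is step (c): because an adversarial last partially-confirmed block $L_j$ can carry the minimum valid value ${\tt rank}(L_j) + {\rm diff}(L_j)$ for its ${\tt next\_rank}$, one cannot lower-bound $y_j$ through $L_j$ itself, and must instead exhibit an honest ancestor of $L_j$ inside the partially-confirmed prefix of chain $j$ that was mined after $B_0$ became visible to all honest miners. Carefully balancing the $\Delta$-delay for $B_0$ to propagate against the $\ell + 2\Delta$ window needed for chain quality on chain $j$ determines the precise constant in $u$, and is where the analysis requires the most care.
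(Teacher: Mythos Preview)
Your persistence outline is essentially the paper's argument: common prefix makes each partially-confirmed prefix $F(i)$ grow monotonically, monotonicity of ${\tt next\_rank}$ along a chain makes each $y_i$ (hence ${\tt confirm\_bar}$) non-decreasing, and any newly partially-confirmed block on chain $i$ has ${\tt rank}$ at least the old $y_i$, so it is placed after every previously fully-confirmed block. The paper states this last point as a separate claim (blocks in $G_2(i)\setminus G_1(i)$ have ${\tt rank}\ge x_1$), but the content is the same.

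For liveness, however, step~(c) is a genuine gap with the stated parameter $u=2\ell+4\Delta$. Your plan needs two \emph{sequential} uses of chain quality: first on chain~$0$ over $[r_0+1,r_0+\ell+2\Delta]$ to obtain $B_0$; then, after waiting $\Delta$ more rounds for $B_0$ to propagate, a second application on chain~$j$ over another window of length $\ell+2\Delta$ to obtain an honest, partially-confirmed ancestor of $L_j$ that has already seen $B_0$. That second block must still have timestamp at most $r-\ell-2\Delta$, which forces $r\ge r_0+3\ell+7\Delta$; with $u=2\ell+4\Delta$ the second window is empty. You correctly flag this step as the crux, but the ``careful balancing'' you allude to cannot be made to work at this value of $u$.

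The paper sidesteps the issue with one observation you are missing: relabel, without loss of generality, so that chain~$0$ is the chain achieving $\min_i y_i$. Then ${\tt confirm\_bar}=y_0$ equals the ${\tt next\_rank}$ of the last partially-confirmed block on chain~$0$, so by monotonicity \emph{every} partially-confirmed block on chain~$0$ is automatically fully confirmed, i.e.\ $G(0)=F(0)\subseteq L$. A \emph{single} chain-quality application on chain~$0$ over $(r-2\ell-4\Delta,\,r-\ell-2\Delta)$ then yields an honest block $B$ mined after $r_0$ and partially-confirmed at round $r$; since $B\in F(0)=G(0)$, it (or an ancestor) carrying ${\sf tx}$ is already in $L$. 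No cross-chain ${\tt next\_rank}$ comparison is ever needed, and $u=2\ell+4\Delta$ suffices exactly.
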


\begin{proof}
We first prove the persistence property. 
Let $L_1$ and $L_2$ be the SCB held by two honest node $u_1$ and $u_2$ at round $r_1$ and $r_2$ respectively, with $r_2 > r_1 +\Delta$. Let $x_1$ and $x_2$ be the ${\tt confirm\_bar}$ of $L_1$ and $L_2$, respectively. Then we will prove that $L_1$ is a prefix of $L_2$. 

Let $F_1(i)$ be the sequence of partially-confirmed blocks on chain $i$ in $u_1$'s view at round $r_1$. Let $G_1(i)$ be the prefix of $F_1(i)$ such that $G_1(i)$ contains all blocks in $F_1(i)$ whose ${\tt rank}$ is smaller than $x_1$. Similarly define $F_2(i)$ and $G_2(i)$, where $G_2(i)$ contains those blocks in $F_2(i)$ whose ${\tt rank}$ is smaller than $x_2$. We first prove the following two claims:
\begin{itemize}
\item For all $i$, $0 \le i\le m-1$, $G_1(i)$ is a prefix of $G_2(i)$. By round $r_2$, node $u_2$ must have seen all blocks seen by $u_1$ at round $r_1$. By the common prefix property in Lemma~\ref{lem:ohie-ind-quality}, all partially-confirmed blocks on $u_1$ at round $r_1$ must also be partially-confirmed on $u_2$ at round $r_2$, i.e., $F_1(i)$ is a prefix of $F_2(i)$. Then we have $x_1 \leq x_2$ and also $G_1(i)$ should be a prefix of $G_2(i)$.
\item For all $i$, $0 \le i\le m-1$, and all block $B \in G_2(i)\setminus G_1(i)$, $B$'s ${\tt rank}$ must be no smaller than $x_1$. We prove this claim via a contradiction and assume that $B$'s ${\tt rank}$ is smaller than $x_1$. Together with the fact that $B$ is in $G_2(i)\setminus G_1(i)$, we know that $B$ is in $F_2(i)$ but not in $F_1(i)$. Let $x_0$ be the ${\tt rank}$ of the last block $B_0$ in $F_1(i)$. Since $F_1(i)$ is a prefix of $F_2(i)$, both $B_0$ and $B$ must be in $F_2(i)$, and $B$ must be a descendent of $B_0$ in $F_2(i)$. Since the blocks in $F_2(i)$ must have increasing ${\tt rank}$ values, we know that $B_0$'s ${\tt rank}$ must be smaller than $B$'s, hence we have $x_0 < x_1$. On the other hand, since $B_0$'s ${\tt next\_rank}$ must be greater than its ${\tt rank}$, by the design of ${\tt confirm\_bar}$, we know that $x_1 < x_0$. This yields a contradiction.
\end{itemize}

Now we can use the above two claims to prove that $L_1$ is a prefix of $L_2$. Note that $L_1 = \cup_{0 \leq i \leq m-1} G_1(i)$ and $L_2 = \cup_{0 \leq i \leq m-1} G_2(i)$. Since $G_1(i)$ is a prefix of $G_2(i)$ for all $i$, we know that $L_1\subseteq L_2$. For all blocks in $L_1$, the sequence $L_1$ orders them in exactly the same way as the sequence $L_2$. For all block $B\in L_2\setminus L_1$, by the second claim above, we know that $B$'s ${\tt rank}$ must be no smaller than $x_1$. Hence in the sequence $L_2$, all blocks in $L_2\setminus L_1$ must be ordered after all the blocks in $L_2\cap L_1$ (whose ${\tt rank}$ must be smaller than $x_1$). This completes our proof that $L_1$ is a prefix of $L_2$.

We next prove the liveness property. Suppose a transaction {\sf tx} is received by all honest nodes before or at round $r_0$. Let $L$ be the SCB of an honest node $u$ at round $r \geq r_0 + 2\ell +4\Delta$ and $x$ be the ${\tt confirm\_bar}$ of $L$. For $0\leq i \leq m-1$, let $F(i)$ be the sequence of partially-confirmed blocks on chain $i$ in $u$'s view at round $r$ and $G(i)$ be the prefix of $F(i)$ such that $G(i)$ contains all blocks in $F(i)$ whose ${\tt rank}$ is smaller than $x$. Without loss of generality, assume $x$ is the ${\tt next\_rank}$ of the last block in $F(0)$. Then we have $G(0) = F(0) \subset L$. By the chain quality property in Lemma~\ref{lem:ohie-ind-quality}, we know that there exists at least one honest block $B$ on chain 0 whose timestamp is in the interval $(r-2\ell-4\Delta, r-\ell-2\Delta)$. As $B$ is an honest block mined after $r_0$, $B$ or an ancestor of $B$ must contain $\sf tx$. Therefore $L$ contains $\sf tx$ since $B$ and all its ancestors are fully confirmed.
\end{proof}

\section{Fairness of \fruitchains}
\label{app:fruit}
For our analysis to go through, we choose the following protocol parameters for \fruitchainsnosp:
\begin{equation}
\label{eqn:ell}
    \ell \triangleq \frac{4(1+3\epsilon)}{\epsilon^2f[1 - (1+\delta)\gamma^2f]^{\Delta+1}} \max\{\Delta,\tau\}\gamma^4 \lambda,
\end{equation}
\begin{equation}
\label{eqn:phi}
    \Phi = 4(1+\delta)\gamma^2 f(\ell +3\Delta)/\epsilon,
\end{equation}
and
\begin{equation}
\label{eqn:s}
    s = 2(1 + \delta)\gamma^2 \Phi/f.
\end{equation}

Note that in Equation (\ref{eqn:ell}), we define a larger $\ell$ than in Equation (\ref{eq:re1}) for \scheme and \ohienosp. This is for the proof of fairness to work. Since Equation (\ref{eqn:phi}) satisfies Equation (\ref{eq:re2}) and $s$ is selected the same as in \scheme and \ohienosp, all the pivot-chain properties stated in Appendix \ref{app:backbone} still hold for \fruitchainsnosp.
Therefore, the persistence and liveness of variable difficulty \fruitchains simply follow the persistence and liveness of \bitcoinnosp, as the protocols are identical if we ignore the content of the blocks. In this section, we formally prove the fairness property guaranteed by \fruitchainsnosp. To ease the analysis, we assume $t_r = (1-\delta)n_r$ for all rounds $r$, i.e., the number of adversarial queries in the worst case. We first give a definition of fairness.

\begin{defn}
We say that $\H$ is a $\phi$-fraction honest subset if players in $\H$ (that may change over time) are honest and $n^{\H}_r = \lceil \phi(n_r+t_r) \rceil = \lceil \phi (2-\delta) n_r \rceil$ for round $r$, where $n^{\H}_r$ is number of players in $\H$ at round $r$.
\end{defn}

\begin{defn}[Fairness]
We say the \fruitchains protocol has (approximate) $\sigma-$fairness if for any $0 < \phi <\frac{1}{2-\delta}$ and any $\phi-$fraction honest subset $\H$, there exists $W_0 \in \mathbb{N}$ such that for any honest player holding chain $\C$ at round $r$ and any interval $S_0 \in [0,r]$ with at least $W_0$ consecutive rounds, let $\C(S_0)$ be the segment of $\C$ containing blocks with timestamps in $S_0$ and $d$ be the total difficulty of all fruits included in $\C(S_0)$, then it holds that the fruits included in $\C(S_0)$ mined by $\H$ have total difficult at least $(1-\sigma) \phi d$.
\end{defn}

By the common prefix property proved in Lemma \ref{lem:proposer-prefix}, for a chain $\C$ held by an honest node at round $r$, the prefix $\C^{\lceil \ell+2\Delta}$ are stabilized, so to mine a fruit/block at round $r$ an honest miner will select the tip of $\C^{\lceil \ell+2\Delta}$ as the fruit parent by our proposed variable difficulty \fruitchains algorithm. And we set the recency parameter $R = 3\ell + 7\Delta$, i.e., a fruit $B_f$ is recent w.r.t. a chain $\C$ at round $r$ if the fruit parent of $B_f$ is in $\C$ and has timestamp at least $r-3\ell -7\Delta$. With this selection of the recency parameter, we can prove the following key property of the \fruitchains protocol: any fruit mined by an honest player will be incorporated into the stabilized chain (and thus never lost). We refer to this as the {\it Fruit Freshness Lemma}—fruits stay ``fresh'' sufficiently long to be incorporated. 

\begin{lem}[Fruit Freshness]
\label{lem:fresh}
For a typical execution in a $(\gamma, 2(1 + \delta)\gamma^2 \Phi/f)$-respecting environment, if $R = 3\ell +7\Delta$, then an honest fruit mined at round $r$ will be included into the stabilized chain before round $r+r_{\rm wait}$, where $r_{\rm wait} = 2\ell +5\Delta$.
\end{lem}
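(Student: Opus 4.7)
The plan is to exhibit an honest pivot-chain block $B^\star$ that (i) is mined after round $r+\Delta$ so its miner has already received the fruit $F$; (ii) still sees $F$ as recent at the time of its mining, and therefore includes $F$; and (iii) lies in the stabilized prefix of the pivot chain by round $r+r_{\text{wait}}$. Since the recency condition on $F$ only involves its fruit parent and the chain into which $F$ is to be embedded, and since variable-difficulty \fruitchains uses \bitcoinnosp's pivot-chain rule, the whole argument can be driven by two applications of pivot-chain chain quality (Lemma~\ref{lem:proposer-quality}) at precisely chosen windows.

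First, I would apply Lemma~\ref{lem:proposer-quality} to the window $[r-2\ell-4\Delta,\, r-\ell-2\Delta]$, which has length $\ell+2\Delta$, to obtain an honest pivot-chain block $B_p$ (in the chain $\C$ held by the miner of $F$ at round $r$) whose timestamp $r_p$ satisfies $r-2\ell-4\Delta\le r_p\le r-\ell-2\Delta$. Because $r_p\le r-\ell-2\Delta$, the block $B_p$ already lies in $\C^{\lceil \ell+2\Delta}$ at round $r$, so an honest miner of $F$ can (and in our protocol, will) pick $B_p$ as the fruit parent of $F$. This pins down $r_p\ge r-2\ell-4\Delta$.

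Next, I would apply chain quality a second time, on the window $S^\star=[r+\Delta,\, r+\ell+3\Delta]$ (also of length $\ell+2\Delta$), relative to the pivot chain held by an arbitrary honest party at round $r+r_{\text{wait}}$. This yields an honest block $B^\star$ on that chain with timestamp $r_{B^\star}\in S^\star$. Since $r_{B^\star}\ge r+\Delta$, the miner of $B^\star$ has already received $F$; and since
\[
r_{B^\star}-R \;\le\; (r+\ell+3\Delta)-(3\ell+7\Delta) \;=\; r-2\ell-4\Delta \;\le\; r_p,
\]
the fruit $F$ is recent with respect to $B^\star$'s chain (using the common-prefix Lemma~\ref{lem:proposer-prefix} to know $B_p$ remains in this later chain), so the honest miner of $B^\star$ must include $F$. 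Finally, $r_{B^\star}\le r+\ell+3\Delta=(r+r_{\text{wait}})-(\ell+2\Delta)$ places $B^\star$ in the stabilized prefix of the pivot chain at round $r+r_{\text{wait}}$, and $F$ rides along with it.

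The main obstacle is the coupled calibration of the two chain-quality windows through the recency parameter $R$: the fruit parent $B_p$ must be simultaneously old enough to be already stabilized at round $r$ and recent enough that an honest block mined as late as round $r+\ell+3\Delta$ still views $F$ as recent. The chosen constants $R=3\ell+7\Delta$ and $r_{\text{wait}}=2\ell+5\Delta$ are exactly what makes the two $(\ell+2\Delta)$-length windows meet at the seam, so the inequality $r_p\ge r_{B^\star}-R$ holds even in the worst case of both chain-quality applications. A subtler point to handle carefully is that $B_p$ belongs to the chain at round $r$ and $B^\star$ to the chain at round $r+r_{\text{wait}}$; here the pivot-chain common prefix (Lemma~\ref{lem:proposer-prefix}), together with $r_p\le r-\ell-2\Delta$, guarantees that $B_p$ persists into the later chain so that the recency condition is well-defined.
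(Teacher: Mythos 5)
Your proposal is correct and follows essentially the same route as the paper's proof: the paper also isolates an honest block with timestamp in $(r+\Delta,\,r+\ell+3\Delta)$ on the chain held at round $r+2\ell+5\Delta$ via chain quality, checks recency against a fruit parent whose timestamp is at least $r-2\ell-4\Delta$, and concludes with the same arithmetic giving $R=3\ell+7\Delta$. The only cosmetic difference is in the first step: the paper notes that the fruit parent is the \emph{tip} of $\C_0^{\lceil \ell+2\Delta}$ and lower-bounds its timestamp by the NoStaleChains property rather than by a second chain-quality application (your $B_p$ need not itself be the fruit parent, but since the actual fruit parent lies no earlier on the chain, your timestamp bound still holds).
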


\begin{proof}
Suppose an honest fruit $B_f$ is mined at round $r_0$ with block parent being the tip of $\C_0$, then the fruit parent of $B_f$ is the tip of $\C_0^{\lceil \ell+2\Delta}$. Further, by the NoStaleChains property (defined in Appendix \ref{app:backbone}), the fruit parent of $B_f$ has timestamp $r_1 \geq r_0 - 2\ell -4\Delta$. By round $r_0 +\Delta$, all honest nodes will receive $B_f$. Let $r_2 = r_0 + 2\ell +5\Delta$ and $\C$ be any chain held by honest nodes at round $r_2$, then by the chain quality property in Lemma~\ref{lem:proposer-quality}, we know that there exists at least one honest block $B$ on $\C$ whose timestamp $r_3$ is in the interval $(r_2-2\ell-4\Delta, r_2-\ell-2\Delta)$. We check that $B_f$ is still recent at round $r_3$ as
$$r_3 - r_1 < (r_2 - \ell - 2\Delta) - (r_0 - 2\ell -4\Delta) = 3\ell +7\Delta = R.$$
As $B$ is an honest block mined after $r_0+\Delta$, $B$ or an ancestor of $B$ must include $B_f$. And since $B$ is stabilized in $\C$ at round $r_2$, we have that $r_{\rm wait} = r_2 - r_0 = 2\ell +5\Delta$
\end{proof}

Define the random variable $D_r$ equal to the sum of the difficulties of all fruits computed by honest parties at round $r$. And for fixed $\phi$ and $\phi-$fraction honest subset $\H$, define the random variable $D^{\H}_r$ equal to the sum of the difficulties of all fruits computed by parties in $\H$ at round $r$. For a set of rounds $S$, we define $D(S) = \sum_{r\in S}D_r$, $D^{\H}(S)= \sum_{r\in S}D^{\H}_r$, and $n^{\H}(S)= \sum_{r\in S}n^{\H}_r$. For a set of $J$ adversarial queries, define the random variable $A(J)$, as the sum of difficulties of all the fruits created during queries in $J$. 

Next we define the following fruit-typical execution.
\begin{defn}[Fruit-typical Execution]
\label{def:fruit_typicality}
An execution $E$ is fruit-typical if the followings hold

\noindent(a) For any set $S$ of at least $\ell$ consecutive good rounds,
\begin{align*}
     D(S) < (1+\epsilon)pn(S).
\end{align*}

\noindent(b) For any set $S$ of at least $\ell$ consecutive good rounds, let $J$ be the set of adversarial queries in $S$, if we further know each query in $J$ made at round $r$ with target $T$ satisfies $f/2\tau\gamma^3 \leq pn_{r}T \leq (1+\delta)\tau\gamma^3f$, then
\begin{equation*}
     D(S) + A(J) < (1+\epsilon)p(n(S)+|J|).
\end{equation*}

\noindent(c) For any set $S$ of at least $\ell/\phi(2-\delta)$ consecutive good rounds,
\begin{equation*}
    D^{\H}(S) > (1-\epsilon)pn^{\H}(S).
\end{equation*}
\end{defn}

\begin{thm}
For an execution $\mathcal{E}$ of $L$ rounds, in a $(\gamma,s)$-respecting environment, the probability of the event  ``$\mathcal{E}$ not fruit-typical'' is bounded by $\mathcal{O}(L)e^{-\lambda}$.
\end{thm}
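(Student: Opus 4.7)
The plan is to mimic the proof of Theorem~\ref{thm:votertypicality} (and its single-chain analogue in Appendix~\ref{app:typicalproof}), establishing each of the three conditions in Definition~\ref{def:fruit_typicality} separately via the martingale concentration inequality of Theorem~\ref{thm:martingale}, and then taking a union bound over all intervals $S$ (and, for (b), over all adversarial query sets $J$) that can arise in the $L$-round execution. The advantage here over Theorem~\ref{thm:votertypicality} is that all three random variables $D$, $A$, $D^{\H}$ live on the single pivot chain, so we do not have to wrestle with the cross-chain coupling that drives the $\mathcal{O}(L^2)$ factor in the earlier theorem.

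For condition (a), I would fix $S$ of at least $\ell$ consecutive good rounds, enumerate the honest mining queries in $S$ as $j=1,\dots,|S_{\text{hon}}|$, let $A_j$ be the fruit difficulty produced by query $j$ (zero if none), and form the Doob martingale
\begin{equation*}
    X_k \;=\; \sum_{j\le k} A_j \;-\; \sum_{j\le k} \expect[A_j \mid \mathcal{E}_{j-1}].
\end{equation*}
The good-round inequality $f/(2\gamma^2) \le pn_r T \le (1+\delta)\gamma^2 f$ bounds each increment by $b = \mathcal{O}\!\bigl(\gamma^2 f/(p\min_{r\in S} n_r)\bigr)$ and the conditional variance by $v \le b\cdot p\, n(S)$; applying Theorem~\ref{thm:martingale} with $t=\epsilon p\, n(S)$ and using the choice of $\ell$ in \eqref{eqn:ell} produces a single-interval failure probability of at most $e^{-\lambda}$.

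Condition (b) follows from the same construction, but now the martingale is indexed by \emph{all} queries (honest and adversarial) in $S$; the hypothesis $pn_r T \le (1+\delta)\tau\gamma^3 f$ for adversarial queries ensures the per-query bound $b$ still holds, only inflated by the factor $\tau\gamma$, so the same Azuma-type estimate yields $\Pr[D(S)+A(J)\ge (1+\epsilon)p(n(S)+|J|)]\le e^{-\lambda}$. Condition (c) is a lower-tail bound on $D^{\H}(S)$: here the relevant count is $n^{\H}(S) \ge \phi(2-\delta)\,n(S)$, and stretching the interval to length $\ell/\phi(2-\delta)$ guarantees that $p\,n^{\H}(S)$ is large enough for Theorem~\ref{thm:martingale} to again give an $e^{-\lambda}$ tail. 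A union bound over the $\mathcal{O}(L)$ choices of interval endpoints (once the interval's length is treated as the free parameter and the starting round is specified canonically) yields $\mathcal{O}(L)e^{-\lambda}$ for each of (a)--(c), and summing the three contributions proves the theorem.

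The main obstacle is verifying that the per-query difficulty bound $b$ really applies uniformly throughout $S$, and in particular to adversarial queries under condition (b). This reduces to checking that every target used within $S$ (by honest parties \emph{and} by the adversary, under the hypothesis of (b)) lies in the good range dictated by the pivot-chain target-recalculation structure; once this is in place, Lemma~\ref{lem:allgoodrounds} and the good-round/good-chain machinery make the concentration step routine. The remaining subtlety is purely bookkeeping: controlling the number of distinct $(S,J)$ pairs so that the union bound stays linear in $L$ rather than quadratic, which can be achieved by noting that the three tail events are each monotone in the interval and need only be checked on a canonical family of $\mathcal{O}(L)$ intervals.
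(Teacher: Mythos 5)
Your proposal is correct and follows essentially the same route as the paper: each of conditions (a)--(c) is handled by the Doob-martingale concentration bound of Theorem~\ref{thm:martingale}, with the increment and variance bounds coming from the good-round/good-target range (inflated by $\tau\gamma$ for adversarial queries in (b), and with the interval stretched to length $\ell/\phi(2-\delta)$ for the lower tail in (c)), followed by a union bound over $\mathcal{O}(L)$ canonical intervals, exactly as the paper does by partitioning the execution into parts of length between $\ell/\phi(2-\delta)$ and $s$. The only blemish is a sketch-level inversion in your stated increment bound for (a) — the per-query difficulty is bounded above by $1/T \le 2\gamma^2 p n_r/f$, not by $\mathcal{O}(\gamma^2 f/(p n_r))$ — but your subsequent use of $v \le b\,p\,n(S)$ and $t=\epsilon p\,n(S)$ is consistent with the correct bound.
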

\begin{proof}
The proof for (a) is the same as in \cite{full2020}. 
For (b), by the condition, for each query in $S$ (made either by an honest node or the adversary) at round $r$ with target $T$, we have $f/2\tau\gamma^3 \leq p(n_r+t_r)T\leq 2(1+\delta)\tau\gamma^3f$. Therefore, the proof is similar to \cite{full2020} (by setting $\ell$ to be slightly larger as we did in Eqn.(\ref{eqn:ell})).

For (c), let the execution be partitioned into parts such that each part has at least $l/{\phi(2-\delta)}$ rounds and at most $s$ rounds. We will prove that the statement fails with a probability less than $e^{-\lambda}$ for each part. 
Let $J$ denote the queries made by $\mathcal{H}$ in rounds $S$. We have $|J| =n^{\mathcal{H}}(S)= n(S)\phi (2-\delta)$. For $k \in [|J|]$, let $Z_i$ be the difficulty of any block obtained from query $j \in J$. Then
\begin{align*}
    &X_0 = 0 \\
    &X_k = \sum_{i \in [k]} Z_i - \sum_{i \in [k]} \mathbb{E}[Z_i| \mathcal{E}_{i-1}]  
\end{align*}
is a martingale with respect to $\mathcal{E}_0, \ldots, \mathcal{E}_{k}$. We have 
\begin{align*}
   X_k - X_{k-1} &= X_k - \mathbb{E}[X_k | \mathcal{E}_{k-1}] = Z_k - \mathbb{E}[Z_k | \mathcal{E}_{k-1}] \\
                 &\leq \frac{1}{T_k^{min}} = \frac{pn_k}{pn_kT_k^{min}} \leq 2\gamma^3 pn(S)/f|S| := b.
\end{align*}
Similarly 
\begin{align*}
    V &= \sum_k var [X_k - X_{k-1} | \mathcal{E}_{k-1}] \leq \sum_k \mathbb{E}[Z_k^2 | \mathcal{E}_{k-1}] \\
    & = \sum_k p T_k \frac{1}{T_k^2} \leq 2\gamma^3 p^2|J|n(S)/f|S| := v.
\end{align*}
Let the deviation $t = \epsilon p|J| = \epsilon \phi (2 -\delta)p n(S)$, then we have $b = \frac{2 \gamma^3 t}{\epsilon \phi (2-\delta)f |S|}$ and $v = \frac{2\gamma^3t^2}{\epsilon^2\phi (2-\delta) f |S|}$. Using the minimum value of $|S|$ is $l/\phi(2-\delta)$ and applying Theorem \ref{thm:martingale} to $-X_{|J|}$, we have
\begin{align*}
    P[D(S) < (1-\epsilon)pn^{\mathcal{H}}(S)] \leq \exp({-\frac{\epsilon t}{2b(1 + \epsilon/3)}}) \leq \exp({-\lambda}).
\end{align*}
This concludes the proof.

\end{proof}

\begin{thm}[Fairness]
For a typical and fruit-typical execution in a $(\gamma, 2(1 + \delta)\gamma^2 m/f)$-respecting environment, the \fruitchains protocol with recency parameter $R = 3\ell +7\Delta$ satisfies $\sigma-$fairness, where $\sigma = 4\epsilon$.
\end{thm}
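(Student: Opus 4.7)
The plan is to combine the Fruit Freshness Lemma with the fruit-typicality concentration bounds: I will lower-bound the $\mathcal{H}$-mined fruit difficulty in $\C(S_0)$ via part (c) of Definition~\ref{def:fruit_typicality}, upper-bound the total fruit difficulty in $\C(S_0)$ via part (b), and then take the ratio.

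First I would pick $W_0$ of order $\gamma\ell/(\phi(2-\delta)\epsilon)$ and introduce the shrunken window $S^{\mathrm{in}} = [\min S_0 + r_{\mathrm{wait}},\, \max S_0 - r_{\mathrm{wait}}]$ and the extended window $S^{\mathrm{out}} = [\min S_0 - R,\, \max S_0]$. By Lemma~\ref{lem:fresh}, every fruit honestly mined at a round $r \in S^{\mathrm{in}}$ is included in a stabilized block with timestamp at most $\max S_0$, and hence appears in $\C(S_0)$; in particular this captures every fruit mined by a player in $\mathcal{H}$ during $S^{\mathrm{in}}$. Applying fruit-typicality~(c) to $S^{\mathrm{in}}$, whose length is at least $\ell/(\phi(2-\delta))$ by the choice of $W_0$, then lower-bounds the total difficulty of these $\mathcal{H}$-fruits by $(1-\epsilon)\,p\,n^{\mathcal{H}}(S^{\mathrm{in}}) = (1-\epsilon)\phi(2-\delta)\,p\,n(S^{\mathrm{in}})$.

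For the upper bound, any fruit in $\C(S_0)$ is included in a block with timestamp in $S_0$, and by the recency rule its fruit parent has timestamp at least $\min S_0 - R$, so the fruit itself was mined at some round in $S^{\mathrm{out}}$. Lemma~\ref{lem:allgoodrounds} says every round in $S^{\mathrm{out}}$ is good, and the good-chain property forces the target of any fruit (honest or adversarial) eligible for inclusion into the range $f/(2\tau\gamma^3) \le p n_r T \le (1+\delta)\tau\gamma^3 f$ required by fruit-typicality~(b). Invoking~(b) on the good-target adversarial queries in $S^{\mathrm{out}}$ and using $|J^{\mathrm{out}}| \le (1-\delta)n(S^{\mathrm{out}})$ yields the upper bound $(1+\epsilon)(2-\delta)\,p\,n(S^{\mathrm{out}})$ on the total fruit difficulty in $\C(S_0)$. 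Since $|S^{\mathrm{out}}\setminus S^{\mathrm{in}}| = 2r_{\mathrm{wait}} + R = O(\ell)$ while $|S^{\mathrm{in}}| = \Omega(\gamma\ell/\epsilon)$, Lemma~\ref{lem:envbounds} gives $n(S^{\mathrm{in}})/n(S^{\mathrm{out}}) \ge 1-\epsilon$, so the ratio is at least
\[
\frac{1-\epsilon}{1+\epsilon}\,\phi\cdot\frac{n(S^{\mathrm{in}})}{n(S^{\mathrm{out}})}\ \ge\ (1-2\epsilon)(1-\epsilon)\phi\ \ge\ (1-4\epsilon)\phi,
\]
establishing $\sigma$-fairness with $\sigma = 4\epsilon$.

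The main obstacle is the upper-bound step, specifically justifying that fruit-typicality~(b) applies to the adversarial fruits in $\C(S_0)$: variable-difficulty \fruitchains has no explicit monotonicity rule M2, so one must argue that the recency parameter $R$ together with the good-chain property confines the target of every adversarial fruit eligible for inclusion in $\C(S_0)$ to the ``good'' range, preventing the adversary from inflating the denominator with fruits of anomalously high difficulty. This is precisely the substitute role the recency rule plays for M2 in the general methodology, and is the only point at which the specific structure of \fruitchains enters the argument beyond the generic pivot-chain analysis.
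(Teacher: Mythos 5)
Your proposal matches the paper's proof essentially step for step: the paper introduces the same shrunken window ($S_2$, for the lower bound $D^{\H}(S_2) > (1-\epsilon)\phi(2-\delta)pn(S_2)$ via fruit-typicality (c) together with the Fruit Freshness Lemma~\ref{lem:fresh}) and the same extended window ($S_1$, for the upper bound $D(S_1)+A(J) < (1+\epsilon)(2-\delta)pn(S_1)$ via fruit-typicality (b)), and then compares $n(S_1)$ with $n(S_2)$ using Lemma~\ref{lem:envbounds} to land on $\sigma=4\epsilon$. The one step you flag as the main obstacle is closed exactly by the mechanism you anticipate, but note that good rounds alone do not constrain adversarial targets -- the paper traces a recent fruit's fruit parent back to its last honest ancestor (mined within $R+\ell+2\Delta$ rounds, hence with a good target by GoodRounds), and then the dampening filter $\tau$ plus the $(\gamma,s)$-respecting assumption confine the fruit's target to $f/2\tau\gamma^3 \leq pn_rT \leq (1+\delta)\tau\gamma^3 f$; the only other detail you omit is that Lemma~\ref{lem:envbounds} requires the window to have at most $s$ rounds, which the paper handles by partitioning into sub-windows of length between $s/2$ and $s$ and summing.
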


\begin{proof}
Fixed $\phi$, a $\phi-$fraction honest subset $\H$, and an honest player holding chain $\C$.  Set $W_0 = \max\{s/2,\ell/\phi(2-\delta)\}+ 3\ell +7\Delta$. Let $S_0 = \{u:r_1\leq u \leq r_2\}$ be a window of at least $W_0$ consecutive rounds. Let $\C(S_0)$ be the segment of $\C$ containing blocks with timestamps in $S_0$, let $\F$ be all fruits included in $\C(S_0)$, and $d$ be the total difficulty of all fruits in $\F$. Then we have the following facts:
\begin{itemize}
    \item {\bf Fact 1.} For any $B_f \in \F$, $B_f$ is mined after $r_1 - 4\ell - 9\Delta$. Indeed by recency condition, $B_f$'s fruit/block parent has timestamp at least $r_1 - R$. By Accuracy property (defined in Appendix~\ref{app:backbone}), $B_f$'s fruit/block parent is mined after $r_1 - R - (\ell + 2\Delta)$. So $B_f$ must be mined after $r_1 - 4\ell - 9\Delta$.
    \item {\bf Fact 2.} For any $B_f \in \F$, $B_f$ is mined before $r_2 + \ell +2\Delta$. Indeed the block including $B_f$ has timestamp at most $r_2$, and by Accuracy property, the block including $B_f$ is mined before $r_2 + \ell +2\Delta$. So $B_f$ must be mined before $r_2 + \ell +2\Delta$.
    \item {\bf Fact 3.} If a fruit $B_f$ is mined by $\H$ after $r_1$ and before $r_2 - 3\ell -7\Delta$, then $B_f \in \F$. Indeed, by NoStaleChain property, the last honest block in $\C(S_0)$ has timestamp at least $r_2 - \ell - 2\Delta$. Hence by Lemma~\ref{lem:fresh}, all honest fruit mined after $r_1$ and before $r_2 - \ell - 2\Delta - r_{\rm wait}$ will be included into a block in $\C(S_0)$.  
\end{itemize} 

Let $S_1 = \{u: r_1-(4\ell + 9\Delta) \leq u \leq r_2 + (\ell +2\Delta)\}$, $S_2 = \{u: r_1 \leq u \leq r_2 - (3\ell +7\Delta)\}$, and $J$ be the set of adversary queries associated with $\F$ in $S_1$. Then by Fact 1 and Fact 2, we have all fruits in $\F$ are mined in $S_1$; by Fact 3, we have all fruits mined by $\H$ in $S_2$ are in $\F$. Also note that for each query in $J$, to mine a recent fruit $B_f$ with target $T$ at round $r$, the timestamp of $B_f$'s fruit parent $B$ must be at least $r-R$. Let $B_0$ with target $T_0$ be the last honest ancestor of $B$, then $B_0$ will have timestamp $r_0 \geq r - R -(\ell+2\Delta)$. By GoodRounds property (defined in Appendix \ref{app:backbone}), we know $f/2\gamma^2 \leq pn_{r_0}T_0 \leq (1+\delta)\gamma^2f$. By the bounded difficulty change rule, we have $T_0/\tau \leq T \leq \tau T_0$. And in a $(\gamma, 2(1 + \delta)\gamma^2 m/f)$-respecting environment, $n_{r_0}/\gamma \leq n_r \leq \gamma n_{r_0}$. Therefore we have $f/2\tau\gamma^3 \leq pn_{r}T \leq (1+\delta)\tau\gamma^3f$.
Further note that, to prove $\sigma$-fairness, it suffices to show that
\begin{equation*}
    D^{\H}(S_2) \geq (1-\sigma)\phi(D(S_1)+A(J)).
\end{equation*}

Under a fruit-typical execution, we have
\begin{equation*}
    D^{\H}(S_2) > (1-\epsilon)pn^{\H}(S_2) \geq (1-\epsilon)\phi(2-\delta) pn(S_2),
\end{equation*}
and
\begin{equation*}
    D(S_1) + A(J) < (1+\epsilon) p(n(S_1)+|J|) = (1+\epsilon) (2-\delta)pn(S_1).
\end{equation*}

By our choice of $W_0$, we have $|S_2| \geq s/2$. Furthermore, we may assume $|S_2| \leq s$. This is because we may partition $S_2$ in parts such that each part has size between $s/2$ and $s$, sum over all parts to obtain the desired bound. Then by Lemma~\ref{lem:envbounds}, we have
\begin{align*}
    n(S_1) &\leq (1+\frac{\gamma|S_1 \setminus S_2|}{|S_2|})n(S_2) \\
    & \leq (1+\frac{\gamma|8\ell + 18\Delta|}{s/2})n(S_2) \\ 
    & < (1+\frac{2\epsilon}{\gamma^3})n(S_2) \\
    & < (1+2\epsilon)n(S_2).
\end{align*}
Finally, by setting $\sigma = 4\epsilon$, we conclude the proof.

\end{proof}
\newgeometry{left=38mm, right=38mm} 
\section{Pseudocode of \scheme}
\label{app:pseudocode}
 
\begin{algorithm}[h]
{\fontsize{8pt}{8pt}\selectfont \caption{Prism: Main}\label{alg:prism_main}
\begin{algorithmic}[1]

\Procedure{Main}{ }
    \State \textsc{Initialize}()
    \While{True}
        \State $header, Ppf, Cpf$ =  \textsc{PowMining}() \label{code:blockBody1}
        \State \maincolorcomment{Block contains header, parent, content and merkle proofs}
        \If{header is a \textit{tx block}}
        \State $block \gets \langle header, txParent, txPool, Ppf, Cpf\rangle$
        \ElsIf{header is a \textit{prop block}}
        \State $block \gets \langle header, prpParent, unRfTxBkPool, Ppf, Cpf\rangle$\label{code:blockBody2}
        \ElsIf{header is a \textit{block in voter} blocktree $i$}
        \State $block \gets \langle header, vtParent[i], votesOnPrpBks[i], Ppf,\label{code:blockBody3} Cpf\rangle$
        \EndIf
        \State \textsc{BroadcastMessage}($block$)  \colorcomment{Broadcast to peers}

    \EndWhile
\EndProcedure

\Procedure{Initialize}{ } \colorcomment{ All variables are global}
\vspace{0.5mm} \State \maincolorcomment{Blockchain data structure $C  = (prpTree, vtTree) $}
\State $prpTree \gets genesisP$ \label{code:prpGenesis} \colorcomment{Proposer Blocktree}
\For{$i \gets 1 \; to \; m$}
\State $vtTree[i] \gets genesisM\_i$ \colorcomment{Voter $i$ blocktree} \label{code:voterGenesis}
\EndFor
\State \maincolorcomment{Parent blocks to mine on} \label{code:VarprpParent}
\State $prpParent$ $\gets genesisP $ \colorcomment{ Proposer block to mine on} 
\For{$i \gets 1 \; to \; m$}
\State $vtParent[i]$ $\gets genesisM\_i $ \colorcomment{Voter tree $i$ block to mine on}
\EndFor \label{code:VarvtParent}
\vspace{00.4mm}\State \maincolorcomment{Block content} \label{code:allcontent}
\State  $txPool$ $\gets \phi$ \colorcomment{Tx block content: Txs to add in tx bks} \label{code:txblockcontent}
\State $unRfTxBkPool$ $ \gets \phi$ \colorcomment{Prop bk content1: Unreferred tx bks} \label{code:prblockcontent1}
\State $unRfPrpBkPool$ $ \gets \phi$ \colorcomment{Prop bk content2: Unreferred prp bks} \label{code:prblockcontent2}

\For{$i \gets 1 \; to \; m$}
\State $votesOnPrpBks(i) \gets \phi$ \colorcomment{Voter tree $i$ bk content } \label{code:vtblockcontent}
 
\EndFor

\EndProcedure
\end{algorithmic}
}
\end{algorithm}

\begin{algorithm}[h]
{\fontsize{8pt}{8pt}\selectfont \caption{Prism: Mining}\label{alg:prism_minining}
\begin{algorithmic}[1]

\Procedure{PowMining}{ }
\While{True}
\State  $txParent \gets prpParent$ \label{code:initParent}
\State \maincolorcomment{Assign content for all block types/trees}
\For{$i \gets 1 \;  to \; m$} $vtContent[i] \gets votesOnPrpBks$[i] \label{code:voteIneffcient}
\EndFor
\State $txContent \gets txPool$
\State $prContent \gets (unRfTxBkPool, unRfPrpBkPool)$
\vspace{00.4mm}\State \maincolorcomment{Define parents and content Merkle trees}
\State $parentMT\gets$MerklTree($vtParent,txParent,prpParent$) 
\State $contentMT\gets$MerklTree($vtContent,txContent,prContent$) \label{code:endContent}
\State nonce $ \gets $ RandomString($1^\kappa$)
\State \maincolorcomment{Header is similar to Bitcoin}
\State header $\gets \langle$ $parentMT.$root, $contentMT.$root, nonce $\rangle$
\If{{chainLength($prpParent$) \% e == 0 }} \label{code:difficult}
    \State{$f_p^{new} \gets$ \textsc{RecalculateTarget}($f_p$)}
    \State{$f_v \gets$ ($f_v * f_p^{new} / f_p$)}
    \State{$f_t \gets$ ($f_t * f_p^{new} / f_p$)}
    \State{$f_p \gets$ $f_p^{new}$}
\EndIf
\vspace{00.4mm} \State \maincolorcomment{Sortition into different block types/trees } \label{code:sortitionStart}
\If {Hash(header)  $\leq mf_v$} \colorcomment{Voter block mined}
    \State $i\gets  \lfloor $Hash(header)/$f_v\rfloor$
    \textbf{and} \textit{break} \colorcomment{on tree $i$ }
\ElsIf{ $mf_v < $Hash(header)  $ \leq mf_v+f_t$} 
    \State  $i \gets m+1$ \textbf{ and} \textit{break}\colorcomment{Tx block mined}
\ElsIf{ $mf_v+f_t< $ Hash(header) $\leq mf_v+f_t+f_p$} 
    \State  $i \gets m+2$ \textbf{ and} \textit{break}\colorcomment{Prop block mined}
\EndIf
\EndWhile \label{code:sortitionEnd}
\maincolorcomment{Return header along with Merkle proofs}
\State \Return $\langle header, parentMT.$proof($i$),  $contentMT.$proof($i) \rangle$ \label{code:minedBlocm}
\EndProcedure
\end{algorithmic}
}
\end{algorithm}

\begin{algorithm}[h]
{\fontsize{8pt}{8pt}\selectfont \caption{Prism: Block and Tx handling}\label{alg:prism_handling}
\begin{algorithmic}[1]
\Procedure{ReceiveBlock}{\textsf{B}} \colorcomment{Get block from peers}    
\If {\textsf{B} is a valid \textit{transaction block}} \label{code:txblk_r1}
    \State $txPool$.removeTxFrom(\textsf{B}) \label{code:updateTxContent} 
    \State $unRfTxBkPool$.append(\textsf{B})\label{code:txblk_r2}

\ElsIf{\textsf{B} is a valid \textit{block on $i^{\text{th}}$ voter tree} 
\textbf{and} \textsc{ValidVote}(\textsf{B},i)} \label{code:vtblk_r1}
    \State $vtTree[i]$.append(\textsf{B}) \textbf{and} $vtTree[i]$.append(\textsf{B}.ancestors())
    \State \maincolorcomment{A vote is a range of difficulty along with the the corresponding proposer block}
    \If{{\textsf{B}.chaindiff $> vtParent[i]$.chaindiff}} 
    \State $vtParent[i] \gets \textsf{B} $ \textbf{and} $votesOnPrpBks$($i$).update(\textsf{B}) \label{code:updateVoteMine}
    \EndIf\label{code:vtblk_r2}

\ElsIf{\textsf{B} is a valid \textit{prop block}} \label{code:prpblk_r1}
    \If{\textsf{B}.diff $>prpParent$.diff}
    \State $prpParent \gets \textsf{B}$ \label{code:updatePropToMine}
    \For{$i \gets 1 \;  to \; m$}  
    \colorcomment{Add vote on level $\ell$ on all $m$ trees}
    \State $votesOnPrpBks($i$)[\textsf{B}.level] \gets \textsf{B}$   \label{code:addVote} 
    \EndFor
    \ElsIf{\textsf{B}.level $ > prpParent$.level+$1$}
        \State \maincolorcomment{Miner doesnt have block at level $prpParent$.level+$1$}
        \State \textsc{RequestNetwork(\textsf{B}.parent)} 
    \EndIf
    \State  $prpTree[\textsf{B} $.level].append(\textsf{B}),\; $unRfPrpBkPool$.append(\textsf{B})
    \State $unRfTxBkPool$.removeTxBkRefsFrom(\textsf{B}) \label{code:updateTxBkContent}
    \State $unRfPrpBkPool$.removePrpBkRefsFrom(\textsf{B}) \label{code:updatePrpBkContent}

\EndIf
\vspace{1mm}
\EndProcedure

\Procedure{ReceiveTx}{\textsf{tx}} 
\If {\textsf{tx} has valid signature }   $txPool$.append(\textsf{B})
\EndIf    
\EndProcedure
\end{algorithmic}
}
\end{algorithm}

\begin{algorithm}[h]
{\fontsize{8pt}{8pt}\selectfont \caption{Prism: Vote validation}\label{alg:prism_fucntion}
\begin{algorithmic}[1]

\Procedure{ValidVote}{\textsf{B},i } \colorcomment{ validate a vote}
\State \maincolorcomment{voter block can't vote for difficulty grater than its proposer parent}
\If {\textsf{B}.vtContent[i].latestBlock.chaindiff $>$ \textsf{B}.prpParent.chaindiff}
\State \Return False
\EndIf
\If {\textsf{B}.vtContent[i] has discontinuous votes}
\State \Return False
\EndIf
\If {\textsf{B}.vtContent[i].earliestBlock.parent.chaindiff  $>$ \textsf{B}.vtParent[i].chaindiff}
\State \Return False
\EndIf
\State \maincolorcomment{include the check where the difficulty ranges  of the votes should end at proposal blocks}
\State \Return True
\vspace{0.5mm}
\EndProcedure


\end{algorithmic}
}
\end{algorithm}

\begin{algorithm}[h]
{\fontsize{8pt}{8pt}\selectfont \caption{Prism: Tx confirmation}\label{alg:prism_con}
\begin{algorithmic}[1]

\Procedure{IsTxConfirmed}{$tx$}\label{code:fastConf}
    \State $\Pi \gets \phi$         \colorcomment{Array of set of proposer blocks}
    \For{$\ell \gets 1 \;  to \; prpTree.$maxLevel}  
        \State $votesNdepth \gets \phi$ 
        \For{$i$ in $1\;to\;m$} \label{code:getvotes_start}
        \State  $votesNdepth[i] \gets \textsc{GetVoteNDepth}(i, \ell)$ 
        \EndFor
            \If{IsPropSetConfirmed($votesNdepth$)}
        \State $\Pi[\ell] \gets$ {GetProposerSet}$(votesNdepth)$
        \Else $\;$ \textbf{break}\label{code:appendPrpList}
        \EndIf
    \EndFor
    \State \maincolorcomment{Ledger list decoding: Check if tx is confirmed in all ledgers}
    \State $prpBksSeqs \gets \Pi[1]\times \Pi[2]\times\cdots\times\Pi[\ell]$  \colorcomment{outer product} \label{code:outerproduct}
\For{$prpBks$ in $prpBksSeqs$} \label{code:ledgers_for_start}
                \State $ledger$ = \textsc{BuildLedger}($prpBks$)
    \If{$tx$ is \textbf{not confirmed} in \av{ledger}} \Return False
    \EndIf
    \EndFor
    \Return True \colorcomment{Return true if tx is confirmed in all ledgers} \label{code:fastConfirmTx}
\EndProcedure
\vspace{1mm}
\State \maincolorcomment{Return the vote of voter blocktree $i$ at level $\ell$ and  depth of the vote}
\Procedure{GetVoteNDepth}{$i, d$}
    \State $voterMC \gets vtTree[i].HeaviestChain()$
    \For{$voterBk$ in $voterMC$} \label{code:voteCountingStart}
        \For{$vote$ in $voterBk$.votes}
            \If{$d$ in  $vote.range$} 
                    \State \maincolorcomment{Depth is the difficulty of children bks of voter bk on main chain}
\State  \Return ($vote.prpBk$, $voterBk$.\text{depth})  \label{code:voteCountingEnd}
            \EndIf
        \EndFor
    \EndFor
\EndProcedure
\vspace{1mm}
\Procedure{BuildLedger}{\av{propBlocks}} \colorcomment{Input: list of prop blocks}\label{code:buildLedgerFunc}
\State \av{ledger} $\gets []$ \colorcomment{List of valid transactions}
\For{\av{prpBk} in \av{propBlocks}} 
    \State $refPrpBks \gets prpBk.$getReferredPrpBks() \label{code:epochStarts}
    \State \maincolorcomment{Get all directly and indirectly referred transaction blocks.}
    \State  $txBks \gets\;  ${GetOrderedTxBks}$(prpBk, refPrpBks)$ \label{code:txbksOrdering}
    \For{\av{txBk} in \av{txBks}}
        \State $txs \gets txBk$.getTxs() \colorcomment{Txs are ordered in \av{txBk}}
        \For{$tx$ in $txs$}
        \State \maincolorcomment{Check for double spends and duplicate txs} 
            \If{$tx$ is  \textbf{valid} w.r.t to \av{ledger}} \av{ledger}.append($tx$) \label{code:sanitizeLedger}
            \EndIf
        \EndFor
    \EndFor    
\EndFor
\State \Return \av{ledger}
\EndProcedure
\vspace{1mm}
\State \maincolorcomment{Return ordered list of confirmed transactions}
\Procedure{GetOrderedConfirmedTxs()}{} \label{code:slowConf}
   \State $\texttt{L} \gets \phi$          \colorcomment{Ordered list of leader blocks}
    \For{$ prpBk$ in $propBlocks$}
        \State $g(p) = inf_d(d: \textsc{GetLeader}(d) = p )$
    \EndFor
    \State $L \gets sort(p, key=g(p))$
    \State\Return \textsc{BuildLedger}(\texttt{L}) \label{code:slowBuildLedger}
    \EndProcedure

\end{algorithmic}
}
\end{algorithm}

\clearpage
\section{Pseudocode of \ohie}
\label{app:pseudocode_ohie}

\begin{algorithm}[h] 
{\fontsize{8pt}{8pt}\selectfont \caption{OHIE: Mining}
\label{alg:ohie_minining}
\begin{algorithmic}[1]

\State $T \leftarrow \textsc{Initial~Difficulty}$; 
\State $V_i \leftarrow$ \{(genesis block of chain $i$, the attachment for genesis block of chain $i$)\}, for $0\le i\le m-1$;
\State $M \leftarrow $ Merkle tree of the hashes of the $m$ genesis blocks;
\State ${\tt trailing}$ $\leftarrow $ hash of genesis block of chain $0$;

\State
\Procedure{OHIE}{ } 
\State \hspace*{2mm}{\bf repeat forever} \{
\State \hspace*{6mm}ReceiveState(); \label{code:mainloop1}
\State \hspace*{6mm}Mining();
\State \hspace*{6mm}SendState(); \label{code:mainloop2}
\State \hspace*{2mm}\}
\EndProcedure
\State

\Procedure{Mining}{} 
\State \hspace*{2mm}$B.{\tt transactions} \leftarrow$ get\_transactions();
\State \hspace*{2mm}$B.{\tt root}$ $\leftarrow$ root of Merkle tree $M$;
\State \hspace*{2mm}$B.{\tt trailing} \leftarrow$ ${\tt trailing}$;
\State \hspace*{2mm}$B.{\tt nonce} \leftarrow$  new\_nonce();
\State \hspace*{2mm}$\widehat{B}.{\tt hash} \leftarrow hash(B)$;
\State \hspace*{2mm}{\bf if} ($\widehat{B}.{\tt hash} < T$) \{
\State \hspace*{6mm}$i$ $\leftarrow$ last $\log_2 m$ bits of $\widehat{B}.{\tt hash}$;
\State \hspace*{6mm}$\widehat{B}.{\tt leaf} \leftarrow$ leaf $i$ of $M$;
\State \hspace*{6mm}$\widehat{B}.{\tt leaf\_proof} \leftarrow M$.MerkleProof($i$);
\State \hspace*{6mm}\red{$\widehat{B}.{\tt chain0\_parent} \leftarrow$ leaf $0$ of $M$;}
\State \hspace*{6mm}\red{$\widehat{B}.{\tt parent\_proof} \leftarrow M$.MerkleProof($0$);}
\State \hspace*{6mm}ProcessBlock($B$, $\widehat{B}$);
\State \hspace*{2mm}\}
\EndProcedure
\State

\Procedure{SendState}{} 
\State \hspace*{2mm} send $V_i$ ($0\le i\le m-1$) to other nodes; \colorcomment{In implementation, only need to send those blocks not sent before.}
\EndProcedure

\Procedure{ReceiveState}{} 
\State \hspace*{2mm} {\bf foreach} $(B, \widehat{B})\in$ received state {\bf do}
\State \hspace*{6mm} ProcessBlock($B$, $\widehat{B}$);
\State \hspace*{2mm}\maincolorcomment{a block $B_1$ should be processed before $B_2$, if $\widehat{B_2}.{\tt leaf}$, $\widehat{B_2}.{\tt chain0\_parent}$ or $\widehat{B_2}.{\tt trailing}$ points to $B_1$.}
\EndProcedure

\end{algorithmic}
}
\end{algorithm}

\begin{algorithm}[h] 
{\fontsize{8pt}{8pt}\selectfont \caption{OHIE: ProcessBlock}
\label{alg:ohie_processing}
\begin{algorithmic}[1]

\Procedure{ProcessBlock}{$B$, $\widehat{B}$} 
\State \hspace*{2mm} \maincolorcomment{do some verifications}
\State \hspace*{2mm}$i$ $\leftarrow$ last $\log_2 m$ bits of $\widehat{B}.{\tt hash}$;
\State \hspace*{2mm}\red{verify that $\widehat{B}.{\tt chain0\_parent}$ is leaf $0$ in the Merkle tree, based on $B.{\tt root}$ and $\widehat{B}.{\tt parent\_proof}$;}
\State \hspace*{2mm}\red{verify that $\widehat{B}.{\tt chain0\_parent} = \widehat{D}.{\tt hash}$ for some block $(D, \widehat{D}) \in V_0$;}
\State \hspace*{2mm}\red{$T$ $\leftarrow$ mining difficulty of $D$'s next block;}
\State \hspace*{2mm}verify that $\widehat{B}.{\tt hash} < T$;
\State \hspace*{2mm}verify that $hash(B) = \widehat{B}.{\tt hash}$;
\State \hspace*{2mm}verify that $\widehat{B}.{\tt leaf}$ is leaf $i$ in the Merkle tree, based on $B.{\tt root}$ and $\widehat{B}.{\tt leaf\_proof}$;
\State \hspace*{2mm}verify that $\widehat{B}.{\tt leaf} = \widehat{A}.{\tt hash}$ for some block $(A, \widehat{A}) \in V_i$;
\State \hspace*{2mm}verify that $B.{\tt trailing} = \widehat{C}.{\tt hash}$ for some
block $(C, \widehat{C}) \in \cup_{j=0}^{m-1}V_j$;
\State \hspace*{2mm}\red{$\widehat{B_0}$ $\leftarrow$ $\widehat{B}$'s parent block on chain $i$;}
\State \hspace*{2mm}\red{$\widehat{D_0}$ $\leftarrow$ $\widehat{B_0}.{\tt chain0\_parent}$;}
\State \hspace*{2mm}\red{verify that chaindiff($\widehat{D}) \geq$ chaindiff($\widehat{D_0}$);}
\State \hspace*{2mm}{\bf if} (any of the above 8 verifications fail) {\bf then return};
\State
\State \hspace*{2mm} \maincolorcomment{compute ${\tt rank}$ and ${\tt next\_rank}$ values}
\State \hspace*{2mm}$\widehat{B}.{\tt rank} \leftarrow \widehat{A}.{\tt next\_rank}$;
\label{code:rankstart}
\State \hspace*{2mm}$\widehat{B}.{\tt next\_rank} \leftarrow \widehat{C}.{\tt next\_rank}$;
\State \hspace*{2mm}{\bf if} ($\widehat{B}.{\tt next\_rank}\le \widehat{B}.{\tt rank}$) {\bf then} $\widehat{B}.{\tt next\_rank}\leftarrow \widehat{B}.{\tt rank}+\red{{\rm diff}(\widehat{B})}$;
\label{code:rankend}
\State
\State \hspace*{2mm} \maincolorcomment{update local data structures}
\State \hspace*{2mm}$V_i \leftarrow V_i \cup \{(B, \widehat{B})\}$;
\State \hspace*{2mm}update ${\tt trailing}$;
\State \hspace*{2mm}update Merkle tree $M$;
\State \hspace*{2mm}{\bf if} \red{(chainLength(chain 0) \% e == 0) }  {\bf then} \red{$T \gets$ \textsc{RecalculateTarget}($T$)}
\colorcomment{update mining difficulty}
\EndProcedure
\end{algorithmic}
}
\end{algorithm}

\begin{algorithm}[h] 
{\fontsize{8pt}{8pt}\selectfont \caption{OHIE: GenerateSCB}
\label{alg:ohie_scb}
\begin{algorithmic}[1]
\Procedure{OutputSCB}{}
\State \hspace*{2mm} \maincolorcomment{determine partially-confirmed blocks and ${\tt confirm\_bar}$}
\State \hspace*{2mm}{\bf for} ($i=0$; $i <m$; $i$++) \{
\State \hspace*{6mm}$W_i$ $\leftarrow$ \red{get\_heaviest\_path($V_i$);}
\State \hspace*{6mm}${\tt partial}_i$ $\leftarrow$ blocks in $W_i$ that are partially-confirmed;
\State \hspace*{6mm}($B_i$, $\widehat{B_i}$)  $\leftarrow$ the last block in ${\tt partial}_i$;
\State \hspace*{6mm}$y_i\leftarrow \widehat{B_i}.{\tt next\_rank}$;
\State \hspace*{2mm}\}
\State \hspace*{2mm}${\tt all\_partial}$ $\leftarrow$ $\cup_{i=0}^{m-1} {\tt partial_i}$;
\State \hspace*{2mm}${\tt confirm\_bar} \leftarrow \min_{i=0}^{m-1} y_i$;
\State
\State \hspace*{2mm} \maincolorcomment{determine fully-confirmed blocks and SCB}
\State \hspace*{2mm}$L \leftarrow \emptyset$;
\State \hspace*{2mm}{\bf foreach} $(B,\widehat{B})\in {\tt all\_partial}$ \{
\State \hspace*{6mm}{\bf if} ($\widehat{B}.{\tt rank} < {\tt confirm\_bar}$)
{\bf then} $L \leftarrow L \cup \{(B, \widehat{B})\}$;
\State \hspace*{2mm}\}
\State \hspace*{2mm}sort blocks in $L$ by ${\tt rank}$, tie-breaking favoring smaller chain id;
\State \hspace*{2mm}{\bf return} $L$;
\EndProcedure

\end{algorithmic}
}
\end{algorithm}

\restoregeometry 
\end{document}